\newtheorem{theorem}{Theorem}
\newtheorem{corollary}[theorem]{Corollary}
\newtheorem{rem1}[theorem]{Remark}
\newtheorem{lemma}[theorem]{Lemma}
\newtheorem{definition}[theorem]{Definition}
\newtheorem{proposition}[theorem]{Proposition}
\newtheorem{ex1}[theorem]{Example}
\newtheorem{ass1}[theorem]{Assumption}
\newenvironment{remark}{\begin{rem1}\rm}{\end{rem1}}
\newenvironment{assumption}{\begin{ass1}\rm}{\end{ass1}}
\numberwithin{equation}{section}
\numberwithin{theorem}{section}
\newcommand{\N}{\mathbb{N}}
\renewcommand{\P}{\mathbb{P}}
\newcommand{\Q}{\mathbb{Q}}
\newcommand{\R}{\mathbb{R}}
\newcommand{\W}{\mathcal{W}}
\newcommand{\hatW}{\W}
\newcommand{\QQ}{\mathcal{Q}}
\renewcommand{\SS}{\mathcal{S}}
\newcommand{\Y}{\mathcal{Y}}
\newcommand{\K}{\mathbb{K}}
\newcommand{\zero}{{\bf 0}}
\newcommand{\lrparen}[1]{\left(#1\right)}
\newcommand{\lparen}[1]{\left(#1\right.}
\newcommand{\rparen}[1]{\left.#1\right)}
\newcommand{\lrsquare}[1]{\left[#1\right]}
\newcommand{\lrcurly}[1]{\left\{#1\right\}}
\newcommand{\Ft}[1]{\mathcal{F}_{#1}}
\newcommand{\Lnp}[3]{L_{#1}^{#2}(#3)}
\newcommand{\LdpF}[1]{\Lnp{#1}{p}{\R^d}}
\newcommand{\LdqF}[1]{\Lnp{#1}{q}{\R^d}}
\newcommand{\LdiF}[1]{\Lnp{#1}{\infty}{\R^d}}
\newcommand{\LdoF}[1]{\Lnp{#1}{1}{\R^d}}
\newcommand{\LdzF}[1]{\Lnp{#1}{0}{\R^d}}
\newcommand{\LdpK}[3]{\Lnp{#2}{#1}{#3}}
\newcommand{\LdiK}[2]{\Lnp{#1}{\infty}{#2}}
\newcommand{\LdoK}[2]{\Lnp{#1}{1}{#2}}
\newcommand{\LpF}[1]{\Lnp{#1}{p}{\R}}
\newcommand{\LiF}[1]{\Lnp{#1}{\infty}{\R}}
\newcommand{\LoF}[1]{\Lnp{#1}{1}{\R}}
\newcommand{\LzF}[1]{\Lnp{#1}{0}{\R}}
\newcommand{\LpK}[3]{\Lnp{#2}{#1}{#3}}
\newcommand{\E}[1]{\mathbb{E}\lrsquare{#1}}
\newcommand{\EP}[2]{\mathbb{E}^{#1}\lrsquare{#2}}
\newcommand{\EQ}[1]{\EP{\Q}{#1}}
\newcommand{\EtQ}[1]{\EP{\tilde\Q}{#1}}
\newcommand{\Et}[2]{\E{\left.#1 \right| \mathcal{F}_{#2}}}
\newcommand{\EPt}[3]{\EP{#1}{\left.#2 \right| \mathcal{F}_{#3}}}
\newcommand{\EQt}[2]{\EPt{\Q}{#1}{#2}}
\newcommand{\EbQt}[2]{\EPt{\bar{\Q}}{#1}{#2}}
\newcommand{\EtQt}[2]{\EPt{\tilde\Q}{#1}{#2}}
\newcommand{\EQnt}[3]{\EPt{\Q_{#1}}{#2}{#3}}
\newcommand{\dQdP}{\frac{d\Q}{d\P}}
\newcommand{\dQnmdP}[2]{\frac{d\Q_{#1}^{#2}}{d\P}}
\newcommand{\dQndP}[1]{\dQnmdP{#1}{}}
\newcommand{\dRndP}[1]{\frac{d\R_{#1}}{d\P}}
\newcommand{\dRidP}{\dRndP{i}}
\newcommand{\genseq}[4]{\lrparen{#1_#4}_{#4=#2}^{#3}}
\newcommand{\seq}[1]{\genseq{#1}{0}{T}{t}}
\newcommand{\trans}[1]{#1^{\mathsf{T}}}
\newcommand{\transp}[1]{\trans{\lrparen{#1}}}
\newcommand{\prp}[1]{#1^{\perp}}
\newcommand{\plus}[1]{#1^+}
\newcommand{\diag}[1]{\operatorname{diag}\lrparen{#1}}
\newcommand{\cl}{\operatorname{cl}}
\newcommand{\ascl}{\operatorname{cl}}
\newcommand{\co}{\operatorname{co}}
\newcommand{\recc}[1]{\operatorname{recc}\lrparen{#1}}
\newcommand{\interior}{\operatorname{int}}
\newcommand{\abs}[1]{\left|#1\right|}
\newcommand{\as}{\text{a.s.}}
\newcommand{\sas}{\text{ a.s.}}
\DeclareMathOperator*{\esssup}{ess\,sup}
\DeclareMathOperator*{\essinf}{ess\,inf}
\begin{document}
\title{Scalar multivariate risk measures with a single eligible asset}
\author{Zachary Feinstein \thanks{Stevens Institute of Technology, School of Business, Hoboken, NJ 07030, USA, zfeinste@stevens.edu.} \and Birgit Rudloff \thanks{Vienna University of Economics and Business, Institute for Statistics and Mathematics, Vienna A-1020, AUT, brudloff@wu.ac.at.}}
\date{\today~(Original:~July 27, 2018)}
\maketitle
\abstract{
In this paper we present results on scalar risk measures in markets with transaction costs.  Such risk measures are defined as the minimal capital requirements in the cash asset.  First, some results are provided on the dual representation of such risk measures, with particular emphasis given on the space of dual variables as (equivalent) martingale measures and prices consistent with the market model.
Then, these dual representations are used to obtain the main results of this paper on time consistency for scalar risk measures in markets with frictions. It is well known from the superhedging risk measure in markets with transaction costs, as in \cite{JK95,RZ11,LR11}, that the usual scalar concept of time consistency is too strong and not satisfied. 
We will show that a weaker notion of time consistency can be defined, which corresponds to the usual scalar time consistency but under any fixed consistent pricing process. We will prove the equivalence of this weaker notion of time consistency and a certain type of backward recursion with respect to the underlying risk measure with a fixed consistent pricing process. Several examples are given, with special emphasis on the superhedging risk measure.
}

\section{Introduction}
\label{sec_intro}

The dynamic programming principle and time consistency are vital concepts both for the computation of dynamic optimization problems and guaranteeing that decisions made will not be reversed by the same decision maker at the next point in time.  In the context of risk management, these principles are encoded by properties on dynamic risk measures.  Such functions provide the minimal capital requirements necessary to make an investment acceptable to a risk manager.  In this paper we investigate time consistency properties for multivariate risk measures to allow for the possibility of transaction and market impact costs. That is, the frictionless conversion between assets is not possible and therefore the reduction of a portfolio to a single num\'eraire is either non-unique or implicitly requires a systematic increase in riskiness.  As will be detailed later in this work, our approach reduces this problem to the time consistency of the dynamic risk measures under all possible no-arbitrage pricing processes.

The organization of this paper is as follows. In Section~\ref{sec_intro-lit}, we provide a literature review  and in Section~\ref{sec_intro-motiv} an overview of the main results along with our motivation for their study.  In Section~\ref{sec_prelim} we present background material on risk measures and notation that will be used throughout this paper.  Much of this notation is comparable to that utilized in the set-valued risk measure literature.  In Section~\ref{sec_scalar} we present the definition of the dynamic scalar risk measures we will consider in this paper and basic results on boundedness properties.  In Section~\ref{sec_e1} we present dual representations for the scalar risk measures with a single eligible asset.  Special emphasis is placed on a representation akin to \eqref{JK d assets} below considered in~\cite{JK95}, but for general convex and coherent dynamic risk measures in markets with transaction costs, which relates those risk measures to the set of no-arbitrage pricing processes.  In Section~\ref{sec_relevance} we present results on relevance or sensitivity of the scalar risk measures which provide a condition for the dual representation to be with respect to equivalent probability measures only.  In Section~\ref{sec_tc} we introduce a new notion of time consistency for scalar risk measures with a single eligible asset.  This notion coincides with time consistency of the corresponding univariate scalar risk measure under any market-consistent frictionless (and no-arbitrage) price process.  As such, unlike much of the prior literature, we can and do demonstrate that such a property is satisfied by the usual examples; in particular we give details on the superhedging risk measure and composed risk measures in Section~\ref{sec_ex}.

\subsection{Literature review}\label{sec_intro-lit}

In their seminal paper, Artzner, Delbaen, Eber, and Heath \cite{AD99} introduced the concept of coherent risk measures in a static univariate setting which provides the minimal capital necessary to compensate for the risk of a contingent claim.  Coherent risk measures were further studied in~\cite{D02}.  Such risk measures were generalized to the convex case in~\cite{FS02,FG02} while retaining the same financial interpretation and a notion of diversification.

When a filtration $\seq{\mathcal{F}}$ is introduced, it is natural to consider dynamic risk measures, i.e., the minimal capital necessary to compensate for the risk of a contingent claim conditionally on the information at time $t$.  In this time-dynamic setting, the manner in which the risk of a contingent claim propagates through time is of great importance, as it has significant implications on risk management.  One such condition, called (strong) time consistency, consists of the condition that if one portfolio is riskier than another in the future, that same ordering must hold at all prior times as well.  This property is studied in the univariate setting in~\cite{AD07,R04,DS05,CDK06,RS05,BN04,BN08,BN09,FP06,CS09,CK10,AP10,FS11} in discrete time and~\cite{FG04,D06,DPRG10} in continuous time.  Other, weaker, forms of time consistency have been presented for dynamic risk measure in, e.g.,~\cite{lacker2018liquidity,lacker2018law,roorda2007time,tutsch2008update,weber2006distribution}.

In this work, we will consider multivariate risk measures.  The multivariate setting arises from markets with frictions, as the liquidation of a portfolio into some num\'eraire does not allow for repurchasing the same asset.  This setting was studied in a set-valued static (one period) framework in~\cite{JMT04,HHH07,HR08,HH10,HHR10}.  Set-valued risk measures, in the one period framework, have recently been applied to studying systemic risk in~\cite{FRW15,BFFM15}.  In a set-valued time-dynamic framework a new notion of time consistency, called multiportfolio time consistency, was introduced in \cite{FR12,TL12}.  This property was further studied in \cite{FR12b,FR13-survey,FR15-supermtg,FR18-scalar,CH17}.  Computation of such set-valued risk measures was studied in~\cite{FR14-alg,LR11}.

The focus of this work is on dynamic scalar multivariate risk measures with a single eligible asset.  Some results on dynamic scalar multivariate risk measures with multiple eligible assets have been discussed previously in~\cite{FR13-survey,FR15-supermtg,FR18-scalar}.  
Scalar risk measures in frictionless markets with either a single eligible asset \cite{FS02,AD99} or multiple eligible assets \cite{FMM13,BMM18,WH14,ADM09,FS06,K09,Sc04} can be considered as scalar multivariate risk measures (see \cite[Example~2.26]{FR13-survey} for such a comparison).
Additionally, such functions have been considered in the context of systemic risk in, e.g., \cite{KOZ16,BC13}.
Our focus in this paper is of the dynamic version of the multivariate risk measures with a single eligible asset as presented in \cite{BR06,WA13} (see \cite[Example~2.27 and~2.28]{FR13-survey} for a brief discussion).

Of course time (in-)consistency has not just been studied in the context of risk measurement as considered here. There has been substantial development and investigation of time (in-)consistency for stochastic control and stopping problems coming, e.g., from behavioral economics where one considers the (in-)consistency of an agent's control (i.e., the choice of strategies or stopping times) over time.  Time (in-)consistency was surveyed for utility maximization problems by~\cite{strotz1955myopia}. We wish to highlight the recent works of~\cite{ekeland2008investment,bjork2014mean,bjork2017time} on time consistency for stochastic control and~\cite{huang2018time,christensen2018finding} on time consistency for optimal stopping problems; these works utilize a standard equilibrium or consistent planning approach proposed in~\cite{strotz1955myopia}.  We also wish to highlight~\cite{KMZ17} which very succinctly defines such problems of time (in-)consistency for control problems; that work proposes and uses several different approaches distinct from the equilibrium approach of the prior references.  Such problems in time (in-)consistency have drawn increased attention since the seminal work of Ekeland and Lazrak~\cite{ekeland2006being}.
In fact, such time (in-)consistency questions are intimately related with the same notions for risk measures.  This control viewpoint can be seen through the dual representation, which is a maximization problem over probability measures.  This is made explicit in the coherent case via stability~\cite{AD07}, i.e., the pasting of these probability measures through time.  Indeed, we can view time consistency of risk measures as the recursion of the value function from the control literature.

The situation is similar in the set-valued framework. Stochastic control problems in a multivariate setting, where the value function is now a set-valued function, have been studied in \cite{KR18,FRZ20}; the corresponding backward recursion of the value function in the time consistent case is in analogy to the backward recursion of set-valued risk measures in the multi-portfolio time consistent case, as the value function (i.e., the risk measure itself) recurses backwards in time. Of course stochastic control problems can have a much more complicated structure as, e.g., dependencies on initial capital levels can complicate the numerical computations in the backward recursion, see \cite{KR18}, compared to risk measures, where the input is the random vector at terminal time.

\subsection{Motivation}\label{sec_intro-motiv}

In this work we consider risk measurement in markets with transaction costs.  These market frictions result in the need to consider multivariate risk measures (i.e., risk measures for a vector-valued portfolio of ``physical units'' -- the number of units of the $d$ available assets or securities held -- rather than the value of that portfolio in a single num\'eraire) because no single value of a portfolio or investment can be defined.  For instance, a mark-to-market value of the portfolio using the current mid-price between the bid and ask prices is strictly larger than the liquidation value of that same portfolio; both such values would result in different capital requirements and neither, necessarily, encodes the true risk of the portfolio.  Set-valued risk measures (see, e.g.,~\cite{JMT04,HHR10}) were introduced in order to handle this problem by considering the set of all capital requirements, possibly held in multiple different assets, that make the portfolio acceptable. This eliminates the need to first convert the $d$-dimensional portfolio vector into a one-dimensional quantity and thus avoids the problems related to that.

We are focused on dynamic multivariate risk measures and questions surrounding time consistency.  For set-valued risk measures, the typical dynamic programming principle is called multiportfolio time consistency (see, e.g.,~\cite{FR12,FR12b}).  Such a property relates the risk between vector-valued portfolios over time by comparing the full set of acceptable capital allocations.  The usual risk measures, e.g., the superhedging risk measure, satisfy this time consistency property when the full space of assets are eligible to be held as capital.  However, typically, capital is required to be held in a num\'eraire, or cash, asset only.  
In such a setting, multiportfolio time consistency may no longer hold for the corresponding set-valued risk measure (this is the case, e.g., for superhedging). Furthermore, for a single eligible asset, the minimal capital requirement (often called the scalarization of the set-valued risk measure) becomes the logical object to consider, but it typically does not satisfy the scalar concept of time consistency either.

So the motivating questions of this paper are the following. If the eligible asset is just the cash asset, the set-valued risk measure does not satisfy multiportfolio time consistency, and the scalarization does not satisfy the usual time consistency, is there a more appropriate, weaker form of time consistency for this scalar risk measure? Can one use the fact that the set-valued risk measure, for the full set of eligible assets, is multiportfolio time consistent? The answer to both questions is yes and will be answered in this work, see in particular Lemma~\ref{lemma:mptc-pi_tc} below.

We are specifically motivated to study the problem of scalar multivariate risk measures with a single eligible asset from the many works that consider the scalar superhedging price in a market with frictions.
We refer to, e.g., \cite{BLPS92,BV92,PL97,JK95,R08} for studies of the scalar superhedging price in a market with transaction costs and two assets. This has been extended to a general number of assets in \cite{RZ11,LR11}. The $d$ dimensional version of the scalar superhedging price given by \cite{JK95} under a sequence $\seq{K}$ of solvency cones (for a market with proportional transaction costs, see Section~\ref{sec_scalar} for further discussion) is as follows.
Under the appropriate no-arbitrage argument, the scalar superhedging price $\rho^{SHP}(-X)$ in units of the first (cash) asset at time $t=0$ is given by
\begin{align}
    \label{JK d assets}
   \rho^{SHP}(-X)= \sup_{(\Q,S)\in\mathcal Q} \EQ{\trans{S_T}X}
\end{align}
for $X \in L^\infty(\Omega,\mathcal F_T,\P;\R^d)$ which is the payoff in physical units. The set of dual variables is given by $\mathcal Q$ which denotes the set of all processes $S = \seq{S}$ along with their equivalent martingale measures $\Q$ such that $S_t^1\equiv 1$ and $\Et{\frac{d\Q}{d\P}}{t} S_t\in L^\infty(\Omega,\Ft{t},\P;\plus{K_t})$  for all $t$.
For the context of this paper, we note that Theorem~6.1 of \cite{LR11} relates~\eqref{JK d assets} to the scalarization of the set-valued superhedging risk measure under a single eligible asset (the num\'{e}raire asset).
We will present a similar dual representation for general convex risk measures in a dynamic framework.  We then use this representation to define a new time consistency property, as detailed below, which is satisfied by, e.g., the superhedging risk measure.  This is in contrast to prior works on time consistency for scalar multivariate measures \cite{HMBS16,KOZ15} whose time consistency property does not include the superhedging risk measure as an example. This agrees also with recent research on scalarizations of multivariate problems, as in \cite{KMZ17,KR18}, indicating that the usual scalar concept of time consistency is way too strong to be satisfied for scalarizations of multivariate risk measures.
Though weaker forms of time consistency (as in, e.g.,~\cite{weber2006distribution}) may be satisfied for the scalarization, we are interested in a weaker version that is strong enough to still allow a relation to the dynamic programming principle.

We call the weaker time consistency property defined in this work $\pi$-time consistency as it corresponds to the usual scalar time consistency for all the risk measure $\pi^S_t$ appearing when fixing a pricing process $S$.  That is, the risk measure is time consistent in the usual way when the market model follows any (frictionless) no-arbitrage pricing process $S$. We will prove the equivalence of this weaker notion of time consistency and a certain type of backward recursion with respect to $\pi^S_t$. We will show that the superhedging risk measure in markets with transaction costs (expressed in the cash asset) as in \cite{JK95,RZ11,LR11} while not satisfying the usual scalar time consistency property, which also means it is not recursive with itself, will satisfy  $\pi$-time consistency and thus a backward recursion with respect to $\pi^S_t$.

\section{Setup}
\label{sec_prelim}
Consider a filtered probability space $\lrparen{\Omega,\Ft{},\seq{\mathcal{F}},\P}$ satisfying the usual conditions with $\Ft{0}$ the trivial sigma algebra and $\Ft{} = \Ft{T}$. Let $|\cdot|_n$ be an arbitrary norm in $\R^n$ for $n \in \N$; note that $|\cdot|_1$ is equivalent to the absolute value operator.
Denote by $\LdpK{0}{t}{D} = \LdpK{0}{}{\Omega,\Ft{t},\P;D}$ the set of equivalence classes of $\Ft{t}$-measurable functions $X: \Omega \to D \subseteq \R^n$ for some $n \in \N$.  Additionally, denote by $\LdpK{p}{t}{D} \subseteq \LdpK{0}{t}{D}$ those random variables $X \in \LdpK{0}{t}{D}$ such that $\|X\|_p = \lrparen{\int_\Omega |X(\omega)|_n^p d\P}^{\frac{1}{p}} < +\infty$ for $p \in (0,+\infty)$ and $\|X\|_\infty = \esssup_{\omega \in \Omega} |X(\omega)|_n < + \infty$ for $p = +\infty$.  We further note that $\LdpK{p}{}{D} := \LdpK{p}{T}{D}$ by definition.  
In particular, if $p = 0$, $\LdzF{t}$ is the linear space of the
equivalence classes of $\Ft{t}$-measurable functions $X: \Omega \to \R^d$.  For $p > 0$, $\LdpF{t}$ denotes the linear space of
$\Ft{t}$-measurable functions $X: \Omega \to \R^d$ such that $\|X\|_p < +\infty$ for $p \in
(0,+\infty]$. 
Note that an element $X \in \LdpF{t}$ has components $X_1,...,X_d$ in $\LpF{t}$ for any choice of $p \in [0,+\infty]$.
For $p \in \lrsquare{1,+\infty}$ we will consider the dual pair $\lrparen{\LdpF{t}, \LdqF{t}}$, where $\frac{1}{p}+\frac{1}{q} = 1$ (with $q = +\infty$ when $p = 1$ and $q = 1$ when $p = +\infty$),  and endow it with the norm topology, respectively the $\sigma \lrparen{\LdiF{t},\LdoF{t}}$-topology on $\LdiF{t}$ in the case $p = +\infty$.
Throughout most of this paper we will focus on the case that $p = +\infty$.

Throughout this work we will make use of the indicator functions which we will denote by $1_D: \Omega \to \{0,1\}$ for some $D \in \Ft{}$.  These are defined so that $1_D(\omega) = 1$ if $\omega \in D$ and $0$ otherwise.
Additionally, throughout we will consider the summation of sets by Minkowski addition.
A set $B \subseteq \LdpF{}$ is called $\Ft{t}$-decomposable if $1_D B + 1_{D^c} B \subseteq B$ for any measurable $D \in \Ft{t}$.

As in \cite{K99} and discussed in \cite{S04,KS09}, the portfolios in this paper are in ``physical units'' of an asset (in a market with $d$ assets) rather than the value in a fixed num\'{e}raire, except where otherwise mentioned.  That is, for a portfolio $X \in \LdiF{t}$, the values of $X_i$ (for $1 \leq i \leq d$) are the number of units of asset $i$ in the portfolio at time $t$.

Let $M = \R \times \{0\}^{d-1}$ denote the set of eligible portfolios, i.e., those portfolios which can be used to compensate for the risk of a portfolio. That is, for the purpose of this paper, we will assume that the first asset is the only asset available to compensate for risk (in typical examples this will be the cash asset).  For notational simplicity we define $M_t = \LdiK{t}{M} = \LiF{t} \times \{0\}^{d-1}$, a weak* closed linear subspace of $\LdiF{t}$ (see Section~5.4 and Proposition~5.5.1 in \cite{KS09}).
We will denote $M_+ := M \cap \R^d_+ = \R_+ \times \{0\}^{d-1}$ to be the nonnegative elements of $M$.  We will additionally denote $M_{t,+} := M_t \cap \LdiK{t}{\R^d_+} = \LdiK{t}{M_+} = \LdiK{t}{\R_+} \times \{0\}^{d-1}$ to be the nonnegative elements of $M_t$.

Denote the upper sets by $\mathcal{P}\lrparen{M_t;M_{t,+}}$ where $\mathcal{P}\lrparen{\mathcal{Z};C} := \lrcurly{D \subseteq \mathcal{Z} \; | \; D = D + C}$ for some vector space $\mathcal{Z}$ and an ordering cone $C \subset \mathcal{Z}$.  Additionally, let $\mathcal{G}(\mathcal{Z};C) := \lrcurly{D \subseteq \mathcal{Z} \; | \; D = \cl\co\lrparen{D + C}} \subseteq \mathcal{P}(\mathcal{Z};C)$ be the upper closed convex subsets.

As noted previously, for the duality results below we will consider the weak* topology for $p = +\infty$.  We will briefly describe the set of dual variables from the set-valued biconjugation theory as utilized in, e.g., \cite{FR12,FR12b}, and first defined in \cite{H09}.
First, define the space of probability measures absolutely continuous with respect to $\P$ as $\mathcal{M}$ and those that are equivalent probability measures as $\mathcal{M}^e$.  The space of $d$-dimensional probability measures absolutely continuous with respect to $\P$ is thus denoted by $\mathcal{M}^d$.
Throughout we will use a $\P$-almost sure version of the $\Q$-conditional expectation where $\Q \in \mathcal{M}$.  This is defined in, e.g., \cite{CK10,FR12}.  Briefly, let $X \in \LiF{}$, then define the conditional expectation
\[\EQt{X}{t} := \Et{\bar{\xi}_{t,T}(\Q)X}{t},\]
where 
\[\bar{\xi}_{s,\sigma}(\Q)[\omega] := \begin{cases}\frac{\Et{\dQdP}{\sigma}(\omega)}{\Et{\dQdP}{s}(\omega)} & \text{if } \Et{\dQdP}{s}(\omega) > 0 \\ 1 &\text{else} \end{cases}\]
for every $\omega \in \Omega$.  For a vector-valued probability measure $\Q \in \mathcal{M}^d$ the result is defined component-wise, i.e., $\EQt{X}{t} = \transp{\EQnt{1}{X_1}{t},\dots,\EQnt{d}{X_d}{t}}$ for any $X \in \LdiF{}$, and define $\xi_{s,\sigma}(\Q) := \transp{\bar{\xi}_{s,\sigma}(\Q_1),\dots,\bar{\xi}_{s,\sigma}(\Q_d)}$.
Note that, for any $\Q \in \mathcal{M}$ and any times $0 \leq t \leq s \leq \sigma \leq T$, $\dQdP = \bar{\xi}_{0,T}(\Q)$ and $\bar{\xi}_{t,\sigma}(\Q) = \bar{\xi}_{t,s}(\Q) \bar{\xi}_{s,\sigma}(\Q)$ almost surely.

From this representation we can define the set of dual variables from the set-valued biconjugation theory \cite{H09} to be
\begin{equation}\label{eq_dualvar}
\W_t := \lrcurly{(\Q,w) \in \mathcal{M}^d \times \lrparen{\plus{M_{t,+}} \backslash \prp{M_t}} \; | \; w_t^T(\Q,w) \in \LdpK{1}{}{\R^d_+}}.
\end{equation}
In this representation we define \[w_t^s(\Q,w) = \diag{w}\xi_{t,s}(\Q)\] for any times $0 \leq t \leq s \leq T$ and where $\diag{x}$ denotes the diagonal matrix with the components of $x$ on the main diagonal.
Additionally, we employ the notation $\prp{M_t} = \lrcurly{v \in \LdoF{t} \; | \; \E{\trans{v}u} = 0 \; \forall u \in M_t}$ to denote the orthogonal space of $M_t$ and $C^+=\lrcurly{v \in \LdoF{t} \; | \; \E{\trans{v}u} \geq 0 \; \forall u \in C}$ to denote the positive dual cone of a cone $C\subseteq \LdiF{t}$.

\section{Scalarizations}

In this section we introduce the scalar multivariate risk measures of interest in this paper.  We provide, in Section~\ref{sec_scalar}, an axiomatic definition of these risk measures akin to those in~\cite{AD99,FS02,FG02} and, further, relate these scalar risk measures to the set-valued risk measures of~\cite{FR12,FR12b,FR13-survey}.  In Section~\ref{sec_e1}, we consider the dual representation of these scalar risk measures producing two, equivalent, representations. The first is with respect to the dual variables considered in the set-valued literature, the second providing a representation generalizing the dual representation \eqref{JK d assets} of the scalar multivariate superhedging risk measure from~\cite{JK95}.  We conclude with Section~\ref{sec_relevance} to extend the dual representation results in order to determine a sufficient condition for such a representation to be over the space of equivalent probability measures only.

\subsection{Definition}
\label{sec_scalar}

First, we wish to consider an axiomatic definition for the scalar conditional convex risk measures which will be the subject of this paper. These properties provide the classical interpretations that the risk measure is a capital requirement, that having more in each asset corresponds with lower risk, that diversification reduces risk, and that not being invested in the market carries with it a finite risk (possibly 0). Throughout this paper we will consider the unit vector $e_1 := \transp{1,0,...,0} \in \R^d$ and the zero vector $\zero := \transp{0,0,...,0} \in \R^d$.

\begin{definition}
\label{defn_scalar}
A mapping $\rho_t: \LdiF{} \to \LiF{t}$ is called a \textbf{\emph{scalarized conditional convex risk measure}} if it satisfies the following properties for all $X,Y \in \LdiF{}$ and $m \in \LiF{t}$:
\begin{itemize}
\item \emph{Conditional cash invariance}: $\rho_t(X + m e_1) = \rho_t(X) - m$;
\item \emph{Monotonicity}: $Y - X \in \LdiK{}{\R^d_+} \Rightarrow \rho_t(X) \geq \rho_t(Y)$;
\item \emph{Conditional convexity}: $\rho_t(\lambda X + (1-\lambda)Y) \leq \lambda \rho_t(X) + (1-\lambda)\rho_t(Y)$ for every $\lambda \in \LpK{\infty}{t}{[0,1]}$;
\item \emph{Finite at zero}: $\rho_t(\zero) \in \LiF{t}$.
\end{itemize}
A scalarized conditional convex risk measure is called \emph{coherent} if it additionally satisfies:
\begin{itemize}
\item \emph{Conditional positive homogeneity}: $\rho_t(\lambda X) = \lambda \rho_t(X)$ for every $\lambda \in \LdiK{t}{\R_+}$.
\end{itemize}
A scalarized conditional convex risk measure is called \emph{$\K_t$-compatible} if it additionally satisfies:
\begin{itemize}
\item $\rho_t(X) = \essinf_{k \in \K_t} \rho_t(X-k)$ for every $X \in \LdiF{}$, where $\K_t = \sum_{s = t}^T \LdpK{\infty}{s}{K_s}$ for $\Ft{s}$-measurable random convex cones $K_s$.
\end{itemize}
\end{definition}
$\K_t$-compatibility is used to allow the possibility of trading at time points $s \in\{t,...,T\}$ according to, e.g., the bid-ask prices modeled by a sequence of solvency cones $\seq{K}$.

The following two results relate the scalarized conditional risk measures $\rho_t$ with set-valued risk measures $R_t$ over a single eligible asset (provided by $M_t$).  Briefly, a set-valued risk measure is a mapping $R_t$ from the space of contingent claims into sets of eligible portfolios, i.e., capital allocations, that make the initial claim acceptable to a risk manager or regulator.  These risk measures satisfy cash invariance ($R_t(X + m) = R_t(X) - m$ for $X \in \LdiF{}$ and $m \in M_t$), monotonicity ($R_t(X) \subseteq R_t(Y)$ if $Y - X \in \LdiK{}{\R^d_+}$), conditional convexity ($R_t(\lambda X + (1-\lambda)Y) \supseteq \lambda R_t(X) + (1-\lambda) R_t(Y)$ for $X,Y \in \LdiF{}$ and $\lambda \in \LpK{\infty}{t}{[0,1]}$), and finiteness at zeros.  For more details on the set-valued conditional risk measure we refer to \cite{FR12,FR12b,FR13-survey}.
\begin{theorem}\label{thm_scalar}
Let $R_t: \LdiF{} \to \mathcal{G}(M_t;M_{t,+}) := \lrcurly{D \subseteq M_t \; | \; D = \cl\co\lrparen{D + M_{t,+}}}$ be a set-valued conditional convex (coherent) risk measure that is $\K_t$-compatible (i.e., $R_t(X) = \bigcup_{k \in \K_t} R_t(X-k)$ for every $X \in \LdiF{}$), 
then the mapping $\rho_t$ defined by
\begin{equation}\label{eq_scalar}
\rho_t(X) := \essinf\{m \in \LiF{t} \; | \; m e_1 \in R_t(X)\} \quad \forall X \in \LdiF{}
\end{equation}
is a scalarized conditional convex (resp.\ coherent) $\K_t$-compatible risk measure and, conversely,
\begin{equation*}
R_t(X) = (\rho_t(X) + \LdiK{t}{\R_+})e_1 \quad \forall X \in \LdiF{}.
\end{equation*}
\end{theorem}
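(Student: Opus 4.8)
The plan is to verify the four defining axioms (plus, in the coherent case, positive homogeneity, and $\K_t$-compatibility) for $\rho_t$ as defined in \eqref{eq_scalar}, and then to establish the converse identity $R_t(X) = (\rho_t(X) + \LdiK{t}{\R_+})e_1$. The crucial structural fact to exploit is that $R_t(X)$ lies in $\mathcal{G}(M_t;M_{t,+})$, i.e., $R_t(X)$ is a closed, convex, $\Ft{t}$-decomposable subset of the one-dimensional (in the asset direction) space $M_t = \LiF{t}\times\{0\}^{d-1}$ that is upward closed in the $M_{t,+} = \LdiK{t}{\R_+}\times\{0\}^{d-1}$ direction. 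Identifying $M_t$ with $\LiF{t}$ via the coordinate $me_1 \mapsto m$, the set $\{m \in \LiF{t} \mid me_1 \in R_t(X)\}$ is then an upward-closed, convex, decomposable subset of $\LiF{t}$, so the essential infimum $\rho_t(X)$ is well-defined in $\LiF{t}\cup\{-\infty\}$, and the upward-closedness together with closedness of $R_t(X)$ gives that this essential infimum is attained: $\rho_t(X)e_1 \in R_t(X)$ whenever $\rho_t(X) > -\infty$. This attainment claim, and the measurable/decomposable selection argument behind it, is the main obstacle — everything else is a routine translation of set relations into scalar inequalities.

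First I would record the basic reduction: for $X \in \LdiF{}$, write $A_t(X) := \{m \in \LiF{t} \mid me_1 \in R_t(X)\}$, so $\rho_t(X) = \essinf A_t(X)$. Using $\Ft{t}$-decomposability of $R_t(X)$, show $A_t(X)$ is $\Ft{t}$-decomposable, hence the essinf is a genuine almost-sure limit of a decreasing sequence obtained by pasting; using convexity and upward-closedness of $R_t(X)$ show $A_t(X)$ is an interval of the form $[\rho_t(X),\infty)$ or $(\rho_t(X),\infty)$ in the pointwise a.s. sense, and using norm-closedness of $R_t(X)$ in $\LiF{t}$ rule out the open case when $\rho_t(X)\in\LiF{t}$, giving $\rho_t(X)e_1 \in R_t(X)$ and $A_t(X) = \rho_t(X)+\LdiK{t}{\R_+}$. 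The converse identity $R_t(X) = (\rho_t(X)+\LdiK{t}{\R_+})e_1$ then follows immediately, since $me_1 \in R_t(X) \iff m \in A_t(X) \iff m - \rho_t(X)\in\LdiK{t}{\R_+}$.

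Next I would check the axioms one at a time. \emph{Conditional cash invariance}: cash invariance of $R_t$ gives $R_t(X+me_1) = R_t(X) - me_1$, so $A_t(X+me_1) = A_t(X) - m$ and hence $\rho_t(X+me_1) = \rho_t(X) - m$. \emph{Monotonicity}: if $Y - X \in \LdiK{}{\R^d_+}$ then $R_t(X)\subseteq R_t(Y)$, so $A_t(X)\subseteq A_t(Y)$ and taking essential infima gives $\rho_t(X)\geq\rho_t(Y)$. \emph{Conditional convexity}: for $\lambda\in\LpK{\infty}{t}{[0,1]}$, superadditivity $R_t(\lambda X+(1-\lambda)Y)\supseteq \lambda R_t(X)+(1-\lambda)R_t(Y)$ shows that $\lambda\rho_t(X)e_1 + (1-\lambda)\rho_t(Y)e_1 \in R_t(\lambda X + (1-\lambda)Y)$ (using that $\rho_t(X)e_1\in R_t(X)$ and $\rho_t(Y)e_1\in R_t(Y)$ by the attainment step, and that $\LiF{t}$-scalars preserve membership in a decomposable convex set), whence $\rho_t(\lambda X+(1-\lambda)Y)\leq \lambda\rho_t(X)+(1-\lambda)\rho_t(Y)$. \emph{Finite at zero}: finiteness of $R_t$ at $\zero$ means $R_t(\zero)\notin\{\emptyset, M_t\}$, i.e., $A_t(\zero)\neq\emptyset$ and $A_t(\zero)\neq\LiF{t}$; the former gives $\rho_t(\zero)<+\infty$ and the latter, combined with the interval structure, gives $\rho_t(\zero)>-\infty$ (if $\rho_t(\zero)=-\infty$ on a set of positive measure, decomposability would force $A_t(\zero)=\LiF{t}$), so $\rho_t(\zero)\in\LiF{t}$. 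Here I should be mildly careful about whether the correct form of ``finiteness at zero'' for the set-valued object is stated as ``$R_t(\zero)\notin\{\emptyset,M_t\}$'' or only ``$R_t(\zero)\neq M_t$ and $R_t(\zero)\neq\emptyset$'' — I would cite the definition from \cite{FR12,FR12b,FR13-survey} referenced in the excerpt.

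Finally, for the coherent case: positive homogeneity $R_t(\lambda X) = \lambda R_t(X)$ for $\lambda\in\LdiK{t}{\R_+}$ translates directly to $A_t(\lambda X) = \lambda A_t(X)$ and hence $\rho_t(\lambda X) = \lambda\rho_t(X)$ (handling the set where $\lambda = 0$ via decomposability and the $\lambda=0$ convention). For $\K_t$-compatibility, the hypothesis $R_t(X) = \bigcup_{k\in\K_t} R_t(X-k)$ gives $A_t(X) = \bigcup_{k\in\K_t} A_t(X-k)$, so $\rho_t(X) = \essinf A_t(X) = \essinf_{k\in\K_t}\essinf A_t(X-k) = \essinf_{k\in\K_t}\rho_t(X-k)$, where the interchange of the two essential infima is justified because the union is over a decomposable family (so the standard lattice-property argument for essinf applies). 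Assembling these verifications proves $\rho_t$ is a scalarized conditional convex (resp.\ coherent) $\K_t$-compatible risk measure, and the converse identity was already obtained in the reduction step. \hfill$\square$
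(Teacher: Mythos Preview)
Your direct approach is largely sound, and your route to the converse identity (showing $A_t(X)$ is a closed half-line via decomposability and closedness) is more elementary than the paper's, which invokes the scalarization representation $R_t(X) = \bigcap_{w \in \plus{M_{t,+}}\setminus\prp{M_t}} \{u \in M_t \mid \rho_t^{M,w}(X) \leq \trans{w}u\}$ from \cite[Lemma~3.18]{FR13-survey} and checks that $\rho_t(X)e_1$ lies in each half-space. Your argument gives attainment without appealing to that machinery.

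However, there is a genuine gap. You establish $\rho_t(\zero) \in \LiF{t}$ from finiteness of $R_t$ at $\zero$, but you never show $\rho_t(X) \in \LiF{t}$ for \emph{arbitrary} $X \in \LdiF{}$. This is required both because the target space in Definition~\ref{defn_scalar} is $\LiF{t}$ and because your attainment claim ``$\rho_t(X)e_1 \in R_t(X)$'' --- on which your convexity verification explicitly depends --- is stated only under the proviso $\rho_t(X) > -\infty$. For a genuinely multivariate $R_t$, monotonicity and cash invariance in the $e_1$-direction alone do \emph{not} force $R_t(X) \notin \{\emptyset, M_t\}$ for all $X$; nothing prevents $\rho_t(X) = \pm\infty$ once $X$ has non-trivial components in the other assets.

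The paper closes this gap using $\K_t$-compatibility together with the solvency-cone structure: for each $X \in \LdiF{}$ one finds scalars $Z^-, Z^+ \in \LiF{}$ with $Z^- e_1 \in X - \K_t$ and $Z^+ e_1 \in X + \K_t$ (this uses that $K_T$ is large enough to absorb any bounded portfolio against a cash position), whence $R_t(Z^- e_1) \subseteq R_t(X) \subseteq R_t(Z^+ e_1)$ by $\K_t$-compatibility, and cash invariance plus $\rho_t(\zero) \in \LiF{t}$ then sandwich $\rho_t(X)$ between two finite constants. You should insert this step before appealing to attainment; once it is in place, the rest of your argument goes through.
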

\begin{proof}
First we will show that $\rho_t(X) \in \LiF{t}$ for any $X \in \LdiF{}$.  By assumption there exists some $Z^-,Z^+ \in \LiF{}$ such that $Z^-e_1 \in X - \K_t$ and $Z^+e_1 \in X + \K_t$.  This implies $R_t(Z^- e_1) \subseteq R_t(X) \subseteq R_t(Z^+ e_1)$ by $\K_t$-compatibility. This implies
\begin{align*}
\rho_t(X) &\geq \rho_t(Z^+ e_1) \geq \rho_t(\|Z^+\|_\infty e_1) \geq -(\|\rho_t(\zero)\|_\infty+\|Z^+\|_\infty) > -\infty\\
\rho_t(X) &\leq \rho_t(Z^- e_1) \leq \rho_t(-\|Z^-\|_\infty e_1) \leq \|\rho_t(\zero)\|_\infty+\|Z^-\|_\infty < +\infty.
\end{align*}
The remainder of the properties are proven in Theorem 3.15 and Corollary 3.17 of \cite{FR13-survey}.

Second, consider the converse statement $R_t(X) = (\rho_t(X) + \LdiK{t}{\R_+})e_1$ for every $X \in \LdiF{}$.
If $ue_1 \in R_t(X)$ then $u \geq \rho_t(X)$ by the definition of the scalarization.
The other direction follows if $\rho_t(X)e_1 \in R_t(X)$.  To show this we use the result from \cite[Lemma 3.18]{FR13-survey} that
\[R_t(X) = \bigcap_{w \in \plus{M_{t,+}} \backslash \prp{M_t}} \lrcurly{u \in M_t \; | \; \rho_t^{M,w}(X) \leq \trans{w}u}\]
where $\rho_t^{M,w}$ denotes the scalarization $\rho_t^{M,w}(X) := \essinf_{m \in R_t(X)} \trans{w}m$.
Then we note that 
\begin{align*}
\trans{w}(\rho_t(X)e_1) &= w_1 \rho_t(X) = w_1 \essinf\lrcurly{u \in \LiF{t} \; | \; ue_1 \in R_t(X)} \\
&= \essinf\lrcurly{\trans{w} m \; | \; m \in M_t,\; m \in R_t(X)} = \rho_t^{M,w}(X),
\end{align*}
which implies the result.
\end{proof}

\begin{assumption}
\label{ass_friction}
For the remainder of this paper we will assume that the conditions of Theorem~\ref{thm_scalar} are satisfied and that $\rho_t$ is constructed as in \eqref{eq_scalar}. 
We will additionally assume that $\K_t$ is defined w.r.t.\ solvency cones $\seq{K}$ with $K_T \supseteq \bar{K}$ almost surely for some convex cone $\bar{K} \subseteq \R^d$ with $\interior\bar{K} \supseteq \R^d_+ \backslash\{\zero\}$.
\end{assumption}  
The acceptance set $A_t \subseteq \LdiF{}$ is defined via the set-valued risk measure $R_t$, i.e., $A_t = \{X \in \LdiF{} \; | \; \zero \in R_t(X)\}$.  From the acceptance set we can also define $\rho_t$ as
\begin{equation}\label{A-rep}
\rho_t(X) := \essinf\{m \in \LiF{t} \; | \; X + m e_1 \in A_t\} \quad \forall X \in \LdiF{}.
\end{equation}  
Under this setting $\rho_t$ defines the minimal capital (in the first asset, e.g., Euros) necessary to hedge the risk of $X$ when trades defined by a market model $\K_t$ are allowed.

\begin{proposition}\label{prop_lipschitz}
Let $\rho_t: \LdiF{} \to \LiF{t}$ be a scalarized conditional convex risk measure.  Then it also satisfies the following two properties:
\begin{enumerate}
\item \emph{$\Ft{t}$-Lipschitz continuity}: $|\rho_t(X) - \rho_t(Y)| \leq \|X - Y\|_{\K_t,t}$ almost surely for every $X,Y \in \LdiF{}$, and
\item \emph{Local property}: $1_B \rho_t(X) = 1_B \rho_t(1_B X)$ for every $X \in \LdiF{}$ and $B \in \Ft{t}$
\end{enumerate}
where, as defined in \cite{TL12},
\begin{equation}\label{eq_norm}
\|X\|_{\K_t,s} := \essinf\{c \in \LiF{s} \; | \; ce_1 \geq_{\K_t} X \geq_{\K_t} -ce_1\}
\end{equation}
for $0\leq s\leq t\leq T$ and with $Y \geq_{\K_t} X$ if $Y - X \in \K_t$.
\end{proposition}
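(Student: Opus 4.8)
The plan is to derive both properties directly from the definition of $\rho_t$ via the acceptance set representation \eqref{A-rep}, together with conditional cash invariance, monotonicity, and the $\K_t$-compatibility built into the setup. For the $\Ft{t}$-Lipschitz continuity, first I would fix $X,Y \in \LdiF{}$ and set $c := \|X - Y\|_{\K_t,t} \in \LiF{t}$, so that by definition $X - Y \leq_{\K_t} ce_1$, i.e.\ $ce_1 - (X - Y) \in \K_t$, and symmetrically $ce_1 - (Y - X) \in \K_t$. The key observation is that if $X + m e_1 \in A_t$ for some $m \in \LiF{t}$, then writing $Y + (m + c)e_1 = (X + m e_1) + \lrparen{ce_1 - (X - Y)}$ exhibits $Y + (m+c)e_1$ as an element of $A_t + \K_t$; using $\K_t$-compatibility of $R_t$ (equivalently, that $A_t$ absorbs $\K_t$, since $R_t(X) = \bigcup_{k \in \K_t} R_t(X - k)$ gives $A_t = A_t + \K_t$) we conclude $Y + (m+c)e_1 \in A_t$. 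Taking the essential infimum over admissible $m$ yields $\rho_t(Y) \leq \rho_t(X) + c$, and the symmetric argument gives $\rho_t(X) \leq \rho_t(Y) + c$, hence $|\rho_t(X) - \rho_t(Y)| \leq c$ almost surely.

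One subtlety I would want to handle carefully is the measurability and the essential-infimum manipulation: the set $\{m \in \LiF{t} : X + m e_1 \in A_t\}$ is $\Ft{t}$-decomposable (this follows from $\Ft{t}$-decomposability of $A_t$, itself a standard consequence of the local/regularity properties of the set-valued risk measure in \cite{FR12,FR13-survey}), so the essential infimum is attained in an appropriate limiting sense and the pointwise inequality $\rho_t(Y) \leq m + c$ passes to $\rho_t(Y) \leq \rho_t(X) + c$ a.s. I would state this as a short lemma or simply invoke the decomposability of $A_t$ established earlier. This is where I expect the main (though still routine) obstacle to lie — not in the algebra but in being careful that all the inequalities are genuinely $\Ft{t}$-measurable pointwise statements and that the infima interact correctly, since $c$ is random.

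For the local property, I would again use \eqref{A-rep}. Fix $B \in \Ft{t}$ and $X \in \LdiF{}$. For the inequality $1_B \rho_t(X) \geq 1_B \rho_t(1_B X)$: if $X + m e_1 \in A_t$ then, applying monotonicity together with the fact that $A_t$ is $\Ft{t}$-decomposable, one shows $1_B X + (1_B m) e_1 \in A_t$ after adding back a suitable element (using that $1_{B^c}(X + m e_1) - 1_{B^c}(\text{something acceptable}) \in \K_t$ — here I would use $\rho_t(\zero) \in \LiF{t}$ to supply a reference acceptable portfolio $-\|\rho_t(\zero)\|_\infty e_1 \in A_t$ on $B^c$), hence $1_B \rho_t(1_B X) \leq 1_B m$, and taking the infimum gives $1_B \rho_t(1_B X) \leq 1_B \rho_t(X)$. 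The reverse inequality $1_B \rho_t(1_B X) \geq 1_B \rho_t(X)$ follows symmetrically, or more cleanly: apply the first inequality with $X$ replaced by $1_B X$ and note $1_B(1_B X) = 1_B X$, together with $\Ft{t}$-decomposability to glue. Alternatively — and this may be the slickest route — both properties are immediate corollaries of the corresponding properties for the set-valued risk measure $R_t$ (which has the local property and is finite-valued and $\K_t$-compatible by hypothesis) combined with the scalarization formula \eqref{eq_scalar}; I would mention this as the conceptual reason and then give the short direct verification above. I would close by remarking that $\Ft{t}$-Lipschitz continuity in particular implies continuity of $\rho_t$ with respect to $\|\cdot\|_\infty$, which will be used later.
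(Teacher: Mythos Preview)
Your Lipschitz argument is essentially the paper's: both observe that $\K_t$-compatibility upgrades monotonicity to $\geq_{\K_t}$-monotonicity and then apply cash invariance. The paper argues directly at the level of $\rho_t$ (from $\|X-Y\|_{\K_t,t}\,e_1 \geq_{\K_t} X-Y$ it concludes $\rho_t(Y+\|X-Y\|_{\K_t,t}\,e_1) \leq \rho_t(X)$, hence $\rho_t(Y) - \|X-Y\|_{\K_t,t} \leq \rho_t(X)$), whereas you phrase the identical step through $A_t$; there is no substantive difference.

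For the local property your route genuinely diverges from the paper's. The paper does not touch acceptance-set decomposability or reference portfolios at all; it exploits conditional convexity with the indicator itself as the weight. Taking $\lambda = 1_B \in \LpK{\infty}{t}{[0,1]}$ gives
\[\rho_t(1_B X) = \rho_t(1_B X + 1_{B^c}\zero) \leq 1_B \rho_t(X) + 1_{B^c}\rho_t(\zero),\]
and multiplying by $1_B$ yields $1_B\rho_t(1_B X) \leq 1_B\rho_t(X)$; the reverse inequality comes from $\rho_t(X) = \rho_t(1_B(1_B X) + 1_{B^c} X) \leq 1_B\rho_t(1_B X) + 1_{B^c}\rho_t(X)$. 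This is two lines and uses nothing beyond the axioms in Definition~\ref{defn_scalar}. Your acceptance-set approach is also correct, but it is longer and imports $\Ft{t}$-decomposability of $A_t$ as an extra ingredient. One caveat on your sketch: the shortcut ``apply the first inequality with $X$ replaced by $1_B X$'' collapses to the tautology $1_B\rho_t(1_B X) \leq 1_B\rho_t(1_B X)$, so for the reverse direction you really do need the gluing-via-decomposability step you allude to (or the paper's convexity trick).
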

\begin{proof}
Let $X,Y \in \LdiF{}$ and $B \in \Ft{t}$.
\begin{enumerate}
\item By definition $\|X - Y\|_{\K_t,t}e_1 \geq_{\K_t} X - Y$.  By $\K_t$-compatibility this implies \[\rho_t(X) \geq \rho_t(Y + \|X - Y\|_{\K_t,t}e_1) = \rho_t(Y) - \|X - Y\|_{\K_t,t}.\]  This provides the desired result by symmetry of $X$ and $Y$.
\item By conditional convexity we recover 
\[\rho_t(1_B X) = \rho_t(1_B X + 1_{B^c} \zero) \leq 1_B \rho_t(X) + 1_{B^c} \rho_t(\zero),\] 
which implies $1_B \rho_t(1_B X) \leq 1_B \rho_t(X)$.  To prove the converse
\[\rho_t(X) = \rho_t(1_B [1_B X] + 1_{B^c} X) \leq 1_B \rho_t(1_B X) + 1_{B^c} \rho_t(X),\]
which implies $1_B \rho_t(X) \leq 1_B \rho_t(1_B X)$.
\end{enumerate}
\end{proof}

\subsection{Dual representation}
\label{sec_e1}

We will now consider the dual, or robust, representation for the scalar conditional risk measures. We present two equivalent formulations, both reliant on a lower semicontinuity property encoded in the following Fatou property.  The first of these two dual representations is intimately related to the dual representation of the underlying set-valued risk measures as discussed in Theorem~\ref{thm_scalar}.  See~\cite{FR12,FR12b,FR15-supermtg} for discussion on the dual representation for set-valued risk measures.  The second of these dual representations is a generalization of representation \eqref{JK d assets} of the multivariate superhedging price derived in \cite{JK95,LR11}.

\begin{definition}\label{defn_fatou}
Let $\rho_t$ be a scalarized conditional convex risk measure.  $\rho_t$ is said to satisfy the \textbf{\emph{Fatou property}} if
\[\rho_t(X) \leq \liminf_{n \to \infty} \rho_t(X_n)\]
for any $\|\cdot\|_\infty$-bounded sequence $X_n$ converging to $X$ almost surely.
\end{definition}

We will now provide the first dual representation for these scalarized conditional convex risk measures. This representation is similar to the results in \cite[Appendix A.2]{FR15-supermtg}.  Recall the set of dual variables $\W_t$ defined in~\eqref{eq_dualvar}.  Additionally, we will make extensive use of the notation introduced in Section~\ref{sec_prelim} that $w_t^s(\Q,w) = \diag{w}\xi_{t,s}(\Q)$ for $(\Q,w) \in \W_t$ with $\xi_{t,s}(\Q)$ the vector of conditional Radon-Nikodym derivatives of $\Q$ w.r.t.\ $\P$.

\begin{proposition}
\label{prop_scalar}
Let $\rho_t$ be a scalarized conditional convex risk measure. Then $\rho_t$ satisfies the Fatou property
if and only if for any $X \in \LdiF{}$ 
\begin{equation}
\label{eq_scalar_dynamic_dual}
\rho_t(X) = \esssup_{(\Q,m_{\perp}) \in \W_t(e_1)} \lrparen{-\alpha_t(\Q,m_{\perp}) - \transp{e_1 + m_{\perp}}\EQt{X}{t}},
\end{equation}
where 
\[\W_t(e_1) := \lrcurly{(\Q,m_{\perp}) \in \mathcal{M}^d \times \lrparen{\{0\} \times \LdoK{t}{\R^{d-1}_+}} \; | \; (\Q,e_1 + m_{\perp}) \in \W_t, \; w_t^T(\Q,e_1+m_{\perp}) \in \plus{\K_t}}\] 
and
\[\alpha_t(\Q,m_{\perp}) := \esssup_{Z \in A_t} \transp{e_1 + m_{\perp}}\EQt{-Z}{t}.\]
If $\rho_t$ is additionally a conditional coherent risk measure then \eqref{eq_scalar_dynamic_dual} can be reduced to
\begin{equation}
\label{eq_scalar_dynamic_dual_coherent}
\rho_t(X) = \esssup_{(\Q,m_{\perp}) \in \W_t^\rho(e_1)} -\transp{e_1 + m_{\perp}}\EQt{X}{t},
\end{equation}
for every $X \in \LdiF{}$ where $\W_t^\rho(e_1) := \lrcurly{(\Q,m_{\perp}) \; | \; \alpha_t(\Q,m_{\perp}) = 0 \; \P\text{-}\as}$.
\end{proposition}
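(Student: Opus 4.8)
The plan is to prove the equivalence directly, in the spirit of the F\"ollmer--Schied dual representation of convex risk measures but adapted to the conditional, $\R^d$-valued, cone-structured setting here; the same conclusion could alternatively be extracted from the set-valued dual representation of $R_t$ (see \cite{FR12,FR12b,FR15-supermtg}) by scalarizing via $R_t(X)=(\rho_t(X)+\LdiK{t}{\R_+})e_1$ from Theorem~\ref{thm_scalar}. The elementary fact used throughout is \emph{weak duality}: because $e_1+m_\perp$ has first component $1$, for every $\Ft{t}$-measurable $m$ with $X+me_1\in A_t$ and every $(\Q,m_\perp)\in\W_t(e_1)$ one has $\transp{e_1+m_\perp}\EQt{X+me_1}{t}=\transp{e_1+m_\perp}\EQt{X}{t}+m$, hence $\alpha_t(\Q,m_\perp)\ge-\transp{e_1+m_\perp}\EQt{X}{t}-m$; taking the $\essinf$ over admissible $m$ through \eqref{A-rep} and then the $\esssup$ over $\W_t(e_1)$ gives ``$\ge$'' in \eqref{eq_scalar_dynamic_dual} with no closedness hypothesis, and in particular $-\alpha_t(\Q,m_\perp)-\transp{e_1+m_\perp}\EQt{X}{t}\le\rho_t(X)$ for every $(\Q,m_\perp)$.

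For ``dual representation $\Rightarrow$ Fatou'': given a $\|\cdot\|_\infty$-bounded sequence $X_n\to X$ a.s., each component of $\xi_{t,T}(\Q)$ lies in $\LoF{}$, so conditional dominated convergence yields $\transp{e_1+m_\perp}\EQt{X_n}{t}\to\transp{e_1+m_\perp}\EQt{X}{t}$ a.s.\ for every $(\Q,m_\perp)\in\W_t(e_1)$; since the assumed representation applied to each $X_n$ gives $-\alpha_t(\Q,m_\perp)-\transp{e_1+m_\perp}\EQt{X_n}{t}\le\rho_t(X_n)$, passing to $\liminf_n$ and then taking the $\esssup$ over $\W_t(e_1)$ (again using \eqref{eq_scalar_dynamic_dual}) yields $\rho_t(X)\le\liminf_n\rho_t(X_n)$, i.e.\ the Fatou property.

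For ``Fatou $\Rightarrow$ dual representation'' it remains, by weak duality, to prove ``$\le$'' in \eqref{eq_scalar_dynamic_dual}. The Fatou property makes $\lrcurly{X\in\LdiF{}\;|\;\rho_t(X)\le0}$ closed under $\|\cdot\|_\infty$-bounded a.s.\ convergence; being convex it is then $\sigma(\LdiF{},\LdoF{})$-closed by the Krein--Smulian theorem, and by cash invariance and monotonicity it equals the weak*-closure of $A_t$, over which $\alpha_t$ takes the same values as over $A_t$ (since $Z+\delta e_1\in A_t$ for all $\delta>0$ whenever $\rho_t(Z)\le0$). Then for fixed $X\in\LdiF{}$ and rationals $\epsilon>0$, $c$, I would note that on $\{\rho_t(X)>c\}$ one has $\rho_t(X+(c-\epsilon)e_1)=\rho_t(X)-c+\epsilon>\epsilon$, so the portfolio obtained by patching $X+(c-\epsilon)e_1$ on $\{\rho_t(X)>c\}$ with an acceptable portfolio on its complement (using $\Ft{t}$-decomposability and the local property of Proposition~\ref{prop_lipschitz}) lies outside the closed convex acceptance set, and a conditional Hahn--Banach separation (localizing on $\{\rho_t(X)>c\}$, as in the duality arguments of \cite{FR12,FR12b}) produces a separating $v\in\LdoF{}$. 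Monotonicity forces $v\in\LdoK{}{\R^d_+}$, $\K_t$-compatibility (i.e.\ $A_t+\K_t\subseteq A_t$) forces $v\in\plus{\K_t}$, and since cash invariance makes the separation gap a positive multiple of the first coordinate of $v$ one may normalize to obtain $v=w_t^T(\Q,e_1+m_\perp)$ with $(e_1+m_\perp)_1=1$ and $m_\perp\ge0$, i.e.\ $(\Q,m_\perp)\in\W_t(e_1)$; the separating inequality then reads $-\alpha_t(\Q,m_\perp)-\transp{e_1+m_\perp}\EQt{X}{t}\ge\rho_t(X)-\epsilon$ on $\{\rho_t(X)>c\}$. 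Patching these dual elements over $c\in\Q$ via $\Ft{t}$-decomposability of $\W_t(e_1)$ and letting $\epsilon\downarrow0$ while taking the $\esssup$ gives ``$\le$'' in \eqref{eq_scalar_dynamic_dual}. Finally, in the coherent case conditional positive homogeneity makes $A_t$ a cone, so $\alpha_t(\Q,m_\perp)=\esssup_{Z\in A_t}\transp{e_1+m_\perp}\EQt{-Z}{t}$ takes only the values $0$ and $+\infty$ on each $\Ft{t}$-cell; the cells where it equals $+\infty$ contribute $-\infty$ to the $\esssup$ in \eqref{eq_scalar_dynamic_dual} and, by the local property, can be removed, so the supremum may be taken over $\W_t^\rho(e_1)$ only (nonempty since $\rho_t$ is finite by Theorem~\ref{thm_scalar}), giving \eqref{eq_scalar_dynamic_dual_coherent}.

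I expect the main obstacle to be the strong-duality step carried out \emph{conditionally}: running Hahn--Banach with an $\Ft{t}$-measurable slack, pasting the locally obtained separating densities into a single pair, and — the fiddly bookkeeping — checking that the resulting object genuinely lies in $\W_t(e_1)$, i.e.\ corresponds to a $\Q\in\mathcal{M}^d$ with $w_t^T(\Q,e_1+m_\perp)\in\LdoK{}{\R^d_+}$, $m_\perp\in\{0\}\times\LdoK{t}{\R^{d-1}_+}$, and $w_t^T(\Q,e_1+m_\perp)\in\plus{\K_t}$. A secondary point needing care is that the Fatou property of the \emph{scalar} map $\rho_t$ yields $\sigma(\LdiF{},\LdoF{})$-closedness of the \emph{vector-valued} acceptance set, which is where the representation $R_t(X)=(\rho_t(X)+\LdiK{t}{\R_+})e_1$ and the two-sided bounds on $\rho_t$ from Theorem~\ref{thm_scalar} are used.
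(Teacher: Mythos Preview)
Your argument for ``dual representation $\Rightarrow$ Fatou'' and for the coherent reduction coincides with the paper's: both use conditional dominated convergence for the first direction, and the paper simply cites \cite[Corollary~2.5]{FP06} for the second, which is exactly the $0/+\infty$ dichotomy you spell out.

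For ``Fatou $\Rightarrow$ dual representation'' your route is genuinely different. You work \emph{conditionally} throughout: show the acceptance set is weak*-closed, separate locally on each $\{\rho_t(X)>c\}$ by conditional Hahn--Banach, normalize the separating density to land in $\W_t(e_1)$, and patch over $c\in\Q$ using $\Ft{t}$-decomposability. The paper instead passes to the \emph{unconditional} problem: it shows that the scalar functional $X\mapsto\E{\rho_t(X)}$ is proper and weak*-lower semicontinuous (via Krein--Smulian applied to the sublevel sets $C_z=\{Z:\E{\rho_t(Z)}\le z\}$, using the Fatou property together with Fatou's lemma to get closedness of $C_z^r$ in probability), and then invokes \cite[Proposition~A.1.7]{FR15-supermtg}, which already packages the conditional dual representation once $\E{\rho_t(\cdot)}$ is known to be proper and lsc. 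In effect, the paper outsources precisely the ``fiddly bookkeeping'' you identify (conditional separation, verification that the patched density sits in $\W_t(e_1)$) to that external proposition.

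What each buys: your approach is self-contained and makes the structure of the dual variables visible (you see explicitly why monotonicity forces $v\ge 0$, why $\K_t$-compatibility forces $v\in\plus{\K_t}$, and why cash invariance allows the normalization $(e_1+m_\perp)_1=1$), at the cost of carrying out the conditional separation and pasting by hand. The paper's reduction to $\E{\rho_t(\cdot)}$ is much shorter and avoids the conditional machinery entirely, but relies on the cited result doing the heavy lifting. Both are correct; if you want to match the paper's brevity, the key shortcut is to check lower semicontinuity of the \emph{expectation} rather than run the separation conditionally.
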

\begin{proof}
\begin{enumerate}
\item Assume the dual representation~\eqref{eq_scalar_dynamic_dual} holds.  Let $X \in \LdiF{}$ and $(X_n)_{n \in \N} \subseteq \LdiF{}$ be a bounded sequence such that $X = \lim_{n \to \infty} X_n$ almost surely. Therefore, $\trans{Y}X_n \to \trans{Y}X$ a.s.\ and in $\LoF{}$ for any $Y \in \LdoF{}$ (by the dominated convergence theorem).  Thus $\Et{-\trans{w_t^T(\Q,e_1+m_{\perp})}X_n}{t} \to \Et{-\trans{w_t^T(\Q,e_1+m_{\perp})}X}{t}$ a.s.\ and in $\LoF{}$ for any $(\Q,m_{\perp}) \in \W_t(e_1)$ as well.
    Thus we can see (as in the static setting, e.g., in~\cite{FS11})
    \begin{align*}
    \rho_t(X) &= \esssup_{(\Q,m_{\perp}) \in \hatW_t(e_1)} \lrparen{-\alpha_t(\Q,m_{\perp}) + \transp{e_1+m_{\perp}}\EQt{-X}{t}}\\
    &= \esssup_{(\Q,m_{\perp}) \in \hatW_t(e_1)} \lrparen{-\alpha_t(\Q,m_{\perp}) + \lim_{n \to \infty} \transp{e_1+m_{\perp}}\EQt{-X_n}{t}}\\
    &= \esssup_{(\Q,m_{\perp}) \in \hatW_t(e_1)} \lim_{n \to \infty} \lrparen{-\alpha_t(\Q,m_{\perp}) + \transp{e_1+m_{\perp}}\EQt{-X_n}{t}}\\
    &\leq \liminf_{n \to \infty} \esssup_{(\Q,m_{\perp}) \in \hatW_t(e_1)} \lrparen{-\alpha_t(\Q,m_{\perp}) + \transp{e_1+m_{\perp}}\EQt{-X_n}{t}}\\
    &= \liminf_{n \to \infty} \rho_t(X_n).
    \end{align*}
\item Assume that the ``Fatou property'' is satisfied.  To prove the dual representation \eqref{eq_scalar_dynamic_dual} holds, we only need to prove that $\E{\rho_t(\cdot)}$ is proper and lower semicontinuous from the same logic as in Proposition A.1.7 of \cite{FR15-supermtg}.  This modification to the conditions in \cite[Proposition A.1.7]{FR15-supermtg} can be accomplished because the requirements on the set-valued risk measures in that result are utilized, via \cite[Proposition A.1.1]{FR15-supermtg} and its proof, to guarantee that $\E{\rho_t(\cdot)}$ is proper and lower semicontinuous. By $\rho_t(X) \in \LiF{t}$ for every $X \in \LdiF{}$, we can conclude $\E{\rho_t(\cdot)}$ is proper.
    Define $C_z := \lrcurly{Z \in \LdiF{} \; | \; \E{\rho_t(Z)} \leq z}$ for any $z \in \R$.  Define $C_z^r := C_z \cap \lrcurly{Z \in \LdiF{} \; | \; \|Z\|_\infty \leq r}$ for any $r > 0$ (in this case we will take $\|Z\|_\infty := \esssup \max_{i = 1,...,d} \abs{Z_i}$).  Take $(X_n)_{n \in \N} \subseteq C_z^r$ converging to $X \in \LdiF{}$ in probability, and there exists a bounded subsequence $(X_{n_m})_{m \in \N}$ which converges to $X$ almost surely.  Therefore $\rho_t(X) \leq \liminf_{m \to \infty} \rho_t(X_{n_m})$, and thus (by Fatou's lemma) $X \in C_z^r$.  That is $C_z^r$ is closed in probability for any $r > 0$.  By~\cite[Proposition 5.5.1]{KS09}, $C_z$ is weak* closed.  Now consider a net $(X_i)_{i \in I} \subseteq \LdiF{}$ which converges to $X \in \LdiF{}$ in the weak* topology.  Let $z_{\epsilon} := \E{\rho_t(X)} - \epsilon$ for any $\epsilon > 0$. Then $(C_{z_{\epsilon}})^c = \lrcurly{Z \in \LdiF{} \; | \; \E{\rho_t(Z)} > z_{\epsilon}}$ is an open neighborhood of $X$ for any $\epsilon > 0$.  Since $X_i \to X$, for any $\epsilon > 0$ there exists a $j_{\epsilon} \in I$ such that for any $i \geq j_{\epsilon}$ we have $X_i \in (C_{z_{\epsilon}})^c$, i.e., $\E{\rho_t(X_i)} > \E{\rho_t(X)} - \epsilon$.  Therefore $\liminf_{i \in I} \E{\rho_t(X_i)} \geq \E{\rho_t(X)}$, and the proof is complete.
\end{enumerate}
Finally, the coherent case follows identically to Corollary 2.5 of~\cite{FP06}.
\end{proof}

\begin{remark}\label{rem:set-fatou}
For the reader who is interested in the relation to set-valued risk measures, let us note that Proposition A.1.3 of \cite{FR15-supermtg} provides a sufficient condition on the underlying set-valued risk measure $R_t$ (with any choice of eligible space $\tilde{M} \supseteq M$) to guarantee that $\rho_t$ satisfies the Fatou property. 
\end{remark}

With the dual representation from Proposition~\ref{prop_scalar}, we construct a second dual representation with a single probability measure and with respect to vectors of consistent prices.  This is a generalization of the dual representation of the superhedging risk measure from \cite{JK95}. This new dual representation is valuable for two primary reasons: it provides a clear interpretation as the supremum over all frictionless pricing processes consistent with the market model and, as a consequence, it allows for a decomposition of the risk measure into frictionless markets that allows for our new time consistency notion discussed in Section~\ref{sec_tc}.  For this representation, recall the definition of $\|\cdot\|_{\K_T,0}$ from~\eqref{eq_norm}.

\begin{theorem}
\label{thm_dual_e1}
Let $\rho_t$ be a scalarized conditional convex risk measure.  Then $\rho_t$ satisfies the Fatou property 
if and only if for any $X \in \LdiF{}$ 
\begin{equation}\label{eq_dual_e1}
\rho_t(X) = \esssup_{(\Q,S) \in \QQ_t} \lrparen{-\beta_t(\Q,S) - \EQt{\trans{S_T}X}{t}}
\end{equation}
where \[\beta_t(\Q,S) = \esssup_{Z \in A_t} -\EQt{\trans{S_T}Z}{t}\] and
\[\QQ_t = \lrcurly{(\Q,(S_s)_{s = t}^T) \in \mathcal{M} \times \prod_{s = t}^T \LdiF{s} \; \left| \; \begin{array}{l} \forall s = t,...,T,\; \forall i = 1,...,d:\\ S_{s,1} \equiv 1, \; \|S_{s,i}\|_\infty \leq \max\{\|e_i\|_{\K_T,0} , 1\}, \\ S_s = \EQt{S_T}{s}, \; \dQdP S_T \in \plus{\K_T} \end{array} \right.}.\]
\end{theorem}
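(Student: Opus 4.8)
The plan is to prove the equivalence via Proposition~\ref{prop_scalar}, which already identifies the Fatou property with representation \eqref{eq_scalar_dynamic_dual}. For the ``if'' direction -- \eqref{eq_dual_e1} forces the Fatou property -- I would argue directly as in the first part of the proof of Proposition~\ref{prop_scalar}: if $X_n \to X$ a.s.\ with $(X_n)$ bounded, then for each $(\Q,S) \in \QQ_t$ the boundedness of $S_T$ and conditional dominated convergence give $\EQt{\trans{S_T}X_n}{t} \to \EQt{\trans{S_T}X}{t}$ a.s., whence $-\beta_t(\Q,S) - \EQt{\trans{S_T}X}{t} \le \liminf_n \rho_t(X_n)$, and taking the essential supremum over $(\Q,S)\in\QQ_t$ yields $\rho_t(X)\le\liminf_n\rho_t(X_n)$. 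For the converse, the inequality ``$\rho_t(X)$ dominates the supremum in \eqref{eq_dual_e1}'' needs no hypothesis and is weak duality: for $(\Q,S)\in\QQ_t$ and any $m\in\LiF{t}$ with $X+me_1\in A_t$, the normalization $S_{T,1}\equiv1$ gives $\EQt{\trans{S_T}(me_1)}{t}=m$, so
\[
\beta_t(\Q,S)\;\ge\;\EQt{-\trans{S_T}(X+me_1)}{t}\;=\;-\EQt{\trans{S_T}X}{t}-m,
\]
i.e.\ $m\ge-\beta_t(\Q,S)-\EQt{\trans{S_T}X}{t}$; taking the essential infimum over such $m$ via \eqref{A-rep}, then the essential supremum over $(\Q,S)\in\QQ_t$, gives the claim.

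It remains to prove ``$\rho_t(X) \le$ (the supremum in \eqref{eq_dual_e1})'', and this is where the Fatou property enters through Proposition~\ref{prop_scalar}: granting it, $\rho_t(X)=\esssup_{(\Q,m_{\perp})\in\W_t(e_1)}\lrparen{-\alpha_t(\Q,m_{\perp})-\transp{e_1+m_{\perp}}\EQt{X}{t}}$, so it suffices to associate to each $(\Q,m_{\perp})\in\W_t(e_1)$ a pair $(\bar\Q,(S_s)_{s=t}^T)\in\QQ_t$ with the same objective. The construction renormalizes $\eta:=w_t^T(\Q,e_1+m_{\perp})=\diag{e_1+m_{\perp}}\xi_{t,T}(\Q)$ by its first coordinate: as $(e_1+m_{\perp})_1=1$ one has $\eta_1=\bar{\xi}_{t,T}(\Q_1)$ with $\Et{\eta_1}{t}=1$, so $\eta_1$ is the $\P$-density of a measure $\bar\Q\in\mathcal M$ that agrees with $\P$ on $\Ft{t}$ (hence $\bar{\xi}_{t,T}(\bar\Q)=\eta_1$); set $S_T:=\eta/\eta_1$ on $\{\eta_1>0\}$, $S_T:=e_1$ elsewhere, and $S_s:=\EbQt{S_T}{s}$. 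The constraint $w_t^T(\Q,e_1+m_{\perp})\in\plus{\K_t}$ decouples (over the Minkowski sum defining $\K_t$) into $\Et{\eta}{s}=w_t^s(\Q,e_1+m_{\perp})\in\plus{K_s}$ a.s.\ for each $s$, giving in particular $\eta\in\plus{K_T}$ a.s.; Assumption~\ref{ass_friction} puts the deterministic vectors $c\,e_1\pm e_i$ into $\K_T$ for $c:=\max\{\|e_i\|_{\K_T,0},1\}$, so testing these against $\eta$ yields $|\eta_i|\le c\,\eta_1$ a.s. This simultaneously forces $\eta$ to vanish wherever $\eta_1$ does -- so $\eta_1 S_T=\eta$ a.s., $\frac{d\bar\Q}{d\P}S_T=\eta\in\plus{\K_T}$, and nothing is lost -- and gives $\|S_{s,i}\|_\infty\le\|S_{T,i}\|_\infty\le\max\{\|e_i\|_{\K_T,0},1\}$ by $L^\infty$-contractivity of the conditional expectation, so $(\bar\Q,(S_s)_{s=t}^T)\in\QQ_t$. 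Finally, $\EbQt{\trans{S_T}Z}{t}=\Et{\eta_1\trans{S_T}Z}{t}=\Et{\trans{\eta}Z}{t}=\transp{e_1+m_{\perp}}\EQt{Z}{t}$ for every $Z\in\LdiF{}$, whence $\beta_t(\bar\Q,S)=\alpha_t(\Q,m_{\perp})$ and the two objectives coincide, completing the proof; the coherent reduction \eqref{eq_scalar_dynamic_dual_coherent} transfers along the same correspondence.

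The step I expect to be the crux is the lossless-renormalization argument: dividing by $\eta_1$, which can vanish on a set of positive probability, is harmless only because Assumption~\ref{ass_friction} forces $\eta$ itself to vanish there, and this same structural input is exactly what delivers the uniform bound on $S$ built into the definition of $\QQ_t$. The rest is routine but requires attention: the bookkeeping with the $\P$-almost-sure conditional Radon--Nikodym derivatives $\bar{\xi}$ (especially keeping $\bar{\xi}_{t,T}(\bar\Q)=\eta_1$ distinct from $\frac{d\bar\Q}{d\P}$), and verifying the decoupling $\plus{\K_t}=\{v : \Et{v}{s}\in\plus{K_s}\text{ a.s. }\forall s\}$ used above.
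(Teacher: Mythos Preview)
Your proposal is correct and follows essentially the same approach as the paper: both reduce to Proposition~\ref{prop_scalar} and carry out the same renormalization $(\Q,m_\perp)\mapsto(\bar\Q,S)$ via division by the first component of $w_t^T(\Q,e_1+m_\perp)$, invoking Assumption~\ref{ass_friction} exactly as you describe to handle the null set $\{\eta_1=0\}$ and to obtain the $L^\infty$-bound on $S$. The only difference is organizational: the paper establishes a two-way correspondence between $\W_t(e_1)$ and $\QQ_t$ (so that the equality \eqref{eq_dual_e1} $\Leftrightarrow$ \eqref{eq_scalar_dynamic_dual} is immediate in both directions), whereas you build only the forward map and replace the reverse construction by a direct weak-duality argument for ``$\ge$'' and a direct dominated-convergence argument for the Fatou implication---a modest economy that works precisely because $S_T$ is bounded by design.
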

\begin{proof}
By Proposition~\ref{prop_scalar} the result trivially follows if
\begin{enumerate}
\item for every $(\R,m_{\perp}) \in \hatW_t(e_1)$ with $\E{\alpha_t(\R,m_{\perp})} < +\infty$ there exists a $(\Q,S) \in \QQ_t$ such that $w_t^s(\R,e_1+m_{\perp}) = \bar\xi_{t,s}(\Q) S_s$ for every time $s \geq t$, and
\item for every $(\Q,S) \in \QQ_t$ there exists a $(\R,m_{\perp}) \in \hatW_t(e_1)$ such that $w_t^s(\R,e_1+m_{\perp}) = \bar\xi_{t,s}(\Q) S_s$ for every time $s$. 
\end{enumerate}

Let us show that the first property holds. Let $(\R,m_{\perp}) \in \hatW_t(e_1)$ with $\E{\alpha_t(\R,m_{\perp})} < +\infty$.  Let $Z_s := w_t^s(\R,e_1+m_{\perp})$; note that $(Z_s)_{s = t}^T$ is a $\P$-martingale.  Then we will define $S_s$ by $S_{s,i} = \begin{cases} \frac{Z_{s,i}}{Z_{s,1}} &\text{on } \{Z_{s,1} > 0\}\\ 1 &\text{else} \end{cases}$ for all indices $i = 1,...,d$.  Note that $S_{s,1} \equiv 1$ for all times $s$ and $S_t = e_1 + m_{\perp} \in \LdiK{t}{\R^d_+}$.  Further, define $\Q \in \mathcal{M}$ by $\dQdP = \frac{Z_{T,1}}{Z_{t,1}} = Z_{T,1}$ (since $Z_t = e_1+m_{\perp}$).  Notice that $\{Z_{s,1} = 0\} \subseteq \{Z_{T,1} = 0\}$ and by property of $\K_t$ and Assumption~\ref{ass_friction} we have that $\{Z_{T,1} = 0\} = \{Z_{T,i} = 0\}$ (as $Z_T \in \K_t^+$) for any choice of index $i$, therefore we can see that
\begin{align*}
\EQt{S_{T,i}}{s} &= \Et{1_{\{\Et{\dQdP}{s} > 0\}} \frac{\dQdP}{\Et{\dQdP}{s}}S_{T,i} + 1_{\{\Et{\dQdP}{s} = 0\}} S_{T,i}}{s}\\
&= \Et{\begin{array}{l} 1_{\{Z_{s,1} > 0\}} \frac{Z_{T,1}}{Z_{s,1}} \lrparen{1_{\{Z_{T,1} > 0\}} \frac{Z_{T,i}}{Z_{T,1}} + 1_{\{Z_{T,1} = 0\}}}\\ + 1_{\{Z_{s,1} = 0\}} \lrparen{1_{\{Z_{T,1} > 0\}} \frac{Z_{T,i}}{Z_{T,1}} + 1_{\{Z_{T,1} = 0\}}}\end{array}}{s}\\
&= \Et{\begin{array}{l} 1_{\{Z_{s,1} > 0\}}1_{\{Z_{T,1} > 0\}} \frac{Z_{T,i}}{Z_{s,1}} + 1_{\{Z_{s,1} > 0\}}1_{\{Z_{T,1} = 0\}} \frac{Z_{T,1}}{Z_{s,1}}\\ + 1_{\{Z_{s,1} = 0\}}1_{\{Z_{T,1} > 0\}} \frac{Z_{T,i}}{Z_{T,1}} + 1_{\{Z_{s,1} = 0\}}1_{\{Z_{T,1} = 0\}}\end{array}}{s}\\
&= 1_{\{Z_{s,1} > 0\}} \frac{1}{Z_{s,1}} \Et{1_{\{Z_{T,1} > 0\}} Z_{T,i}}{s} + 1_{\{Z_{s,1} = 0\}}\\
&= 1_{\{Z_{s,1} > 0\}} \frac{1}{Z_{s,1}} \Et{1_{\{Z_{T,i} > 0\}} Z_{T,i}}{s} + 1_{\{Z_{s,1} = 0\}}\\
&= 1_{\{Z_{s,1} > 0\}} \frac{1}{Z_{s,1}} \Et{Z_{T,i}}{s} + 1_{\{Z_{s,1} = 0\}}\\
&= 1_{\{Z_{s,1} > 0\}} \frac{Z_{s,i}}{Z_{s,1}} + 1_{\{Z_{s,1} = 0\}} = S_{s,i}
\end{align*}
for any $s \in \{t,...,T\}$ and any $i = 1,...,d$.  Finally 
\begin{align*}
\bar\xi_{t,s}(\Q) S_s &= Z_{s,1} \lrparen{1_{\{Z_{s,1} > 0\}} \frac{Z_s}{Z_{s,1}} + 1_{\{Z_{s,1} = 0\}}} = 1_{\{Z_{s,1} > 0\}} Z_s\\
&= 1_{\{\bar\xi_{t,s}(\R_1) > 0\}} w_t^s(\R,e_1+m_{\perp}) \in \plus{\K_t}.
\end{align*}
It remains to show that $S_T \in \LdiK{}{\R^d_+}$ (the case for $s < T$ would then follow trivially): by definition $\|e_i\|_{\K_T,0} e_1 \geq_{\K_T} e_i \geq_{\K_T} -\|e_i\|_{\K_T,0} e_1$ for any index $i$ and, by the definition of $\K_t$, we know $\|e_i\|_{\K_T,0} < \infty$.  This implies $\dQdP \|e_i\|_{\K_T,0} \geq \dQdP S_{T,i} \geq -\dQdP \|e_i\|_{\K_T,0}$ (noting that $\plus{\K_t} \subseteq \plus{\K_T} = \lrcurly{Z \in \LdoF{} \; | \; \trans{Z}K \geq 0 \; \P\text{-}\as \; \forall K \in \LdiK{}{K_T}}$), by dividing by $\dQdP$ on the set $\{\dQdP > 0\}$ recovers $S_{T,i} \in \Lnp{}{\infty}{\Omega,\Ft{s},\Q;\R}$ with $\|S_{T,i}\|_\infty^\Q \leq \|e_i\|_{\K_T,0}$ (i.e., the $\infty$-norm of $S_{T,i}$ under measure $\Q$ is bounded).  Finally, by construction $S_{T,i} = 1$ on $\{\dQdP = 0\}$, thus $\|S_{T,i}\|_\infty \leq \max\{\|e_i\|_{\K_T,0} , 1\}$.  Since $S$ is a $\Q$-martingale, the bound $\|S_{s,i}\|_\infty \leq \max\{\|e_i\|_{\K_T,0} , 1\}$ holds as well.

To show the converse, i.e., the second property, let $(\Q,S) \in \QQ_t$.  Define $m_{\perp} = S_t - e_1 \in \{0\} \times \LdpK{1}{t}{\R^{d-1}}$.  And define $\dRidP = \begin{cases} \dQdP \frac{S_{T,i}}{S_{t,i}} &\text{on }
\lrcurly{S_{t,i} > 0} \\ 1 &\text{else} \end{cases}$.  Then $w_t^s(\R,e_1+m_{\perp})_i = S_{t,i} \Et{\dRidP}{s} = S_{t,i} \Et{\dQdP \frac{S_{T,i}}{S_{t,i}} 1_{\{S_{t,i} > 0\}} + 1_{\{S_{t,i} = 0\}}}{s} = \Et{\dQdP S_{T,i}}{s} = \bar\xi_{t,s}(\Q) \EQt{S_{T,i}}{s} = \bar\xi_{t,s}(\Q) S_{s,i}$ for every index $i = 1,...,d$ (because $\lrcurly{S_{t,i} = 0} \subseteq \lrcurly{\dQdP S_{T,i} = 0}$ by $S$ a $\Q$-martingale). 
\end{proof}

\begin{remark}\label{rem_dualvar}
For every choice of dual variables $(\Q,S) \in \QQ_t$ we find that the dual norm $\|\dQdP S_T\|_{\K_T,t}^* := \esssup\{|\EQt{\trans{S_T}X}{t}| \; | \; \|X\|_{\K_T,t} \leq 1 \; \P\text{-}\as\} = 1$ almost surely.  
\end{remark}

\begin{corollary}
\label{cor_dual_e1_coherent}
Let $\rho_t$ be a scalarized conditional coherent risk measure.  Then $\rho_t$ satisfies the Fatou property 
if and only if for any $X \in \LdiF{}$ 
\[\rho_t(X) = \esssup_{(\Q,S) \in \QQ_t^{\rho}} -\EQt{\trans{S_T}X}{t}\]
where $\QQ_t^{\rho} = \lrcurly{(\Q,S) \in \QQ_t \; | \; \beta_t(\Q,S) = 0 \; \P\text{-}\as}$.
\end{corollary}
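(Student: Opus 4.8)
The plan is to deduce the statement from the coherent representation \eqref{eq_scalar_dynamic_dual_coherent} of Proposition~\ref{prop_scalar}, transported along the correspondence between dual variables that was built in the proof of Theorem~\ref{thm_dual_e1} — that is, to repeat for \eqref{eq_scalar_dynamic_dual_coherent} exactly what the proof of Theorem~\ref{thm_dual_e1} did for \eqref{eq_scalar_dynamic_dual}. Recall that the proof of Theorem~\ref{thm_dual_e1} sets up, in both directions, a matching of $(\R,m_{\perp}) \in \W_t(e_1)$ (with finite penalty) against $(\Q,S) \in \QQ_t$, characterised by $w_t^s(\R,e_1+m_{\perp}) = \bar\xi_{t,s}(\Q)\,S_s$ for every $s \in \{t,\dots,T\}$.

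The extra observation needed is that this matching intertwines the two penalty functions and the two dual pairings. For matched $(\R,m_{\perp})$ and $(\Q,S)$ and any $Z \in A_t$, unwinding the $\P$-almost-sure conditional expectations gives
\[\transp{e_1+m_{\perp}}\ERt{-Z}{t} = \Et{\transp{w_t^T(\R,e_1+m_{\perp})}(-Z)}{t} = \Et{\bar\xi_{t,T}(\Q)\trans{S_T}(-Z)}{t} = -\EQt{\trans{S_T}Z}{t},\]
so taking the essential supremum over $Z \in A_t$ yields $\alpha_t(\R,m_{\perp}) = \beta_t(\Q,S)$, and the same computation with $X$ replacing $-Z$ gives $\transp{e_1+m_{\perp}}\EQt{X}{t} = \EQt{\trans{S_T}X}{t}$. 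In particular the matching restricts to a bijection between $\W_t^{\rho}(e_1)$ and $\QQ_t^{\rho}$.

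Then I would assemble as follows. If $\rho_t$ has the Fatou property, Proposition~\ref{prop_scalar} (coherent case) gives \eqref{eq_scalar_dynamic_dual_coherent}, and substituting the bijection $\W_t^{\rho}(e_1)\leftrightarrow\QQ_t^{\rho}$ together with the identity $\transp{e_1+m_{\perp}}\EQt{X}{t}=\EQt{\trans{S_T}X}{t}$ turns it into the claimed formula. Conversely, suppose the claimed formula holds. Since $\QQ_t^{\rho}\subseteq\QQ_t$ with $\beta_t\equiv 0$ there, its right-hand side is $\le$ that of \eqref{eq_dual_e1}; for the reverse inequality, note that for each $(\Q,S)\in\QQ_t$ and $\varepsilon>0$ one has $X+(\rho_t(X)+\varepsilon)e_1\in A_t$ (by \eqref{A-rep}, monotonicity, and $\Ft t$-decomposability of $A_t$), hence, using $S_{T,1}\equiv 1$,
\[\beta_t(\Q,S) \ge -\EQt{\trans{S_T}\lrparen{X+(\rho_t(X)+\varepsilon)e_1}}{t} = -\EQt{\trans{S_T}X}{t}-\rho_t(X)-\varepsilon,\]
so letting $\varepsilon\downarrow 0$ gives $\rho_t(X)\ge -\beta_t(\Q,S)-\EQt{\trans{S_T}X}{t}$ for every $(\Q,S)\in\QQ_t$; thus \eqref{eq_dual_e1} holds and Theorem~\ref{thm_dual_e1} returns the Fatou property.

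The delicate point — and the only place coherence really enters — is that the matching of Theorem~\ref{thm_dual_e1} genuinely carries $\W_t^{\rho}(e_1)$ \emph{onto} all of $\QQ_t^{\rho}$: one must check that applying that construction to a zero-penalty pair again produces a zero-penalty pair (this is the intertwining identity above), while carefully tracking the $\{d\R/d\P=0\}$ and $\{Z_{\cdot,1}=0\}$ conventions in the construction so that $\alpha_t$ and $\beta_t$ differ only on a $\P$-null set; Assumption~\ref{ass_friction}, which forces $\{Z_{T,1}=0\}=\{Z_{T,i}=0\}$, is used here exactly as in the proof of Theorem~\ref{thm_dual_e1}. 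As an alternative route that avoids the bijection, one can mimic \cite[Corollary~2.5]{FP06} directly on \eqref{eq_dual_e1}: coherence makes $A_t$ a weak*-closed convex cone, so $\beta_t(\Q,S)$ is $\{0,+\infty\}$-valued on each $\Ft t$-atom, the $+\infty$ contributions drop out of the essential supremum, and on the $\Ft t$-measurable set where $\beta_t$ vanishes one pastes with a fixed reference element of $\QQ_t^{\rho}$.
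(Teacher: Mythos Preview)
Your proposal is correct and is essentially what the paper's one-line proof (``This follows directly from Theorem~\ref{thm_dual_e1} and Proposition~\ref{prop_scalar}'') leaves implicit: you transport the coherent reduction of Proposition~\ref{prop_scalar} across the correspondence built in the proof of Theorem~\ref{thm_dual_e1}, and you also sketch the alternative of redoing the \cite[Corollary~2.5]{FP06} argument directly on \eqref{eq_dual_e1}. One small imprecision: the matching from the proof of Theorem~\ref{thm_dual_e1} is not literally a bijection between $\W_t^{\rho}(e_1)$ and $\QQ_t^{\rho}$ (the two constructions need not be mutual inverses), but your intertwining identity $\alpha_t(\R,m_{\perp})=\beta_t(\Q,S)$ and the equality of the dual pairings are exactly what is needed to conclude that the two essential suprema coincide, so the argument goes through.
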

\begin{proof}
This follows directly from Theorem~\ref{thm_dual_e1} and Proposition~\ref{prop_scalar}. 
\end{proof}

\begin{remark}\label{rem_SHP_dual}
The $d$ dimensional version of the dual representation of the scalar superhedging price~\eqref{JK d assets} given in Jouini, Kallal \cite{JK95} is with respect to dual variables $(\Q,S) \in \mathcal{M} \times \LdpK{0}{}{\R^d_+}$ with $S_s \equiv 1$ for every $s$, $\Q$ is a martingale measure for $S$, and $\dQdP S_T \in \plus{\K_0}$.  This set of dual variables is a superset of $\QQ_t$ from Theorem~\ref{thm_dual_e1}. From the proof of Theorem~\ref{thm_dual_e1}, it is clear that this larger set of dual variables could be used within this work as well.  However, for reasons that will be clear in subsequent results, e.g., Proposition~\ref{prop:pi_inf}, which is used extensively after, we restrict ourselves to the bounded prices as defined in $\QQ_t$. \end{remark}

\subsection{Relevance}
\label{sec_relevance}

In the below proposition we will define a property so that the dual representation~\eqref{eq_dual_e1} can be defined w.r.t.\ the dual variables 
\[\QQ_t^e := \{(\Q,S) \in \QQ_t \; | \; \Q \in \mathcal{M}^e\},\] 
i.e., the set of equivalent martingale measures and their associated price processes $S$. We will then demonstrate that this necessary and sufficient condition is satisfied under a version of relevance or sensitivity (see the usual definition in, e.g., \cite{FP06}).

\begin{remark}\label{rem:QQ}
Before continuing, note that $(\Q,S) \in \QQ_t^e$ if and only if $\Q \in \mathcal{M}^e$ is a martingale measure for $S$ and $S \in \prod_{s = t}^T \LdiK{s}{\plus{K_s}}$ such that $S_{s,1} \equiv 1$ and $\|S_{s,i}\|_\infty \leq \max\{\|e_i\|_{\K_T,0} , 1\}$.  This is in contrast to the definition of $\QQ_t$ in which $\dQdP S_T \in \plus{\K_t}$.  This modification can be accomplished as we are now only using equivalent measures $\Q$ and since $\plus{\K_t} = \bigcap_{s = t}^T \lrcurly{Y \in \LdoF{} \; | \; \trans{\Et{Y}{s}}k_s \geq 0 \; \P\text{-}\as \; \forall k_s \in \LdiK{s}{K_s}}$. 
\end{remark}

\begin{proposition}
\label{prop_equivalent-prob}
Let $\rho_t$ be a scalarized conditional convex risk measure satisfying the Fatou property. Then
\begin{equation}
\label{eq_equivalent-prob}
\rho_t(X) = \esssup_{(\Q,S) \in \QQ_t^e} \lrparen{-\beta_t(\Q,S) - \EQt{\trans{S_T}X}{t}}
\end{equation}
for every $X \in \LdiF{}$ if and only if $\inf_{Z \in A_t} \EQ{\trans{S_T} Z} > -\infty$ for some $(\Q,S) \in \QQ_t^e$.
\end{proposition}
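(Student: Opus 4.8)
The plan is to recast the condition in terms of the penalty $\beta_t$ and then to treat the two implications by a convex‑combination perturbation and by a pasting argument, respectively. Throughout put $c:=\|\rho_t(\zero)\|_\infty$. Since cash invariance gives $\rho_t(ce_1)=\rho_t(\zero)-c\le 0$ we have $ce_1\in A_t$, hence $\beta_t(\Q,S)\ge -\EQt{\trans{S_T}(ce_1)}{t}=-c$ for every $(\Q,S)\in\QQ_t$ (using $S_{t,1}\equiv 1$). Because $\rho_t$ has the local property (Proposition~\ref{prop_lipschitz}) and $A_t=\{Z\in\LdiF{}:\rho_t(Z)\le 0\}$, the set $A_t$ is $\Ft{t}$‑decomposable, so $\{-\EQt{\trans{S_T}Z}{t}:Z\in A_t\}$ is directed upward; combining this with $\EQ{\trans{S_T}Z}=\E{\Et{\dQdP}{t}\,\EQt{\trans{S_T}Z}{t}}$ (valid since the $\P$‑a.s.\ conditional expectation obeys the corresponding factorization and $\Et{\dQdP}{t}>0$ for equivalent $\Q$) and monotone convergence (the constant comparison element coming from $ce_1\in A_t$) yields, for every $(\Q,S)\in\QQ_t^e$,
\[ \inf_{Z\in A_t}\EQ{\trans{S_T}Z}\;=\;-\,\EQ{\beta_t(\Q,S)}. \]
Hence the stated condition for $(\Q,S)$ is equivalent to $\EQ{\beta_t(\Q,S)}<+\infty$, and, since $\beta_t(\Q,S)\ge -c$, it then forces $\beta_t(\Q,S)<+\infty$ $\P$‑a.s.

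For the ``if'' direction, let $(\Q_0,S_0)\in\QQ_t^e$ satisfy the condition, so $\beta_t(\Q_0,S_0)<\infty$ a.s.; rescaling $\Q_0$ (which alters neither $S_0$ nor $\beta_t(\Q_0,S_0)$) we may assume $\Q_0=\P$ on $\Ft{t}$. By the proof of Theorem~\ref{thm_dual_e1} the supremum in \eqref{eq_dual_e1} may be restricted to $(\Q,S)\in\QQ_t$ with $\Q=\P$ on $\Ft{t}$. Fix such a $(\Q,S)$ and $\lambda\in(0,1]$, and let $(\Q_\lambda,S_\lambda)$ be the dual variable determined by $\tfrac{d\Q_\lambda}{d\P}S_{\lambda,T}=(1-\lambda)\tfrac{d\Q}{d\P}S_T+\lambda\tfrac{d\Q_0}{d\P}S_{0,T}$; then $(\Q_\lambda,S_\lambda)\in\QQ_t^e$ with $\Q_\lambda=\P$ on $\Ft t$, because $S_{\lambda,T}$ is a random convex combination of $S_T$ and $S_{0,T}$ (so $S_{\lambda,s,1}\equiv 1$ and the bounds $\|S_{\lambda,s,i}\|_\infty\le\max\{\|e_i\|_{\K_T,0},1\}$ are inherited), $\Q_\lambda\sim\P$ since $\Q_0\sim\P$ and $\lambda>0$, the martingale property follows by combining the two density processes, and $\tfrac{d\Q_\lambda}{d\P}S_{\lambda,T}\in\plus{\K_T}$ as $\plus{\K_T}$ is a convex cone. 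Using sublinearity of $Y\mapsto\esssup_{Z\in A_t}(-\Et{\trans{Y}Z}{t})$ together with linearity of the conditional expectations in this normalization, one gets $\beta_t(\Q_\lambda,S_\lambda)\le(1-\lambda)\beta_t(\Q,S)+\lambda\beta_t(\Q_0,S_0)$ and $\EPt{\Q_\lambda}{\trans{S_{\lambda,T}}X}{t}=(1-\lambda)\EQt{\trans{S_T}X}{t}+\lambda\EPt{\Q_0}{\trans{S_{0,T}}X}{t}$, hence the $(\Q_\lambda,S_\lambda)$‑term of \eqref{eq_equivalent-prob} dominates $(1-\lambda)$ times the $(\Q,S)$‑term plus $\lambda$ times the $(\Q_0,S_0)$‑term. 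Letting $\lambda\downarrow 0$ (the bound converges on $\{\beta_t(\Q,S)<\infty\}$ and the value is $-\infty$ on $\{\beta_t(\Q,S)=\infty\}$) shows the essential supremum over $\QQ_t^e$ dominates $-\beta_t(\Q,S)-\EQt{\trans{S_T}X}{t}$ for every $(\Q,S)\in\QQ_t$, hence dominates $\rho_t(X)$; the reverse inequality is immediate from $\QQ_t^e\subseteq\QQ_t$.

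For the ``only if'' direction, evaluate \eqref{eq_equivalent-prob} at $X=\zero$ to get $\rho_t(\zero)=\esssup_{(\Q,S)\in\QQ_t^e}\bigl(-\beta_t(\Q,S)\bigr)$, so that $\essinf_{(\Q,S)\in\QQ_t^e}\beta_t(\Q,S)=-\rho_t(\zero)\in\LiF{t}$ is in particular \emph{bounded}. The family $\{\beta_t(\Q,S):(\Q,S)\in\QQ_t^e\}$ is directed downward: for two of its members, gluing the associated price processes on the $\Ft{t}$‑set where one penalty is the smaller produces, by the local property of $\beta_t$, a member of $\QQ_t^e$ whose penalty is the pointwise minimum (an $\Ft{t}$‑gluing of equivalent martingale measures and their bounded consistent price processes again lies in $\QQ_t^e$). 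Choose a decreasing sequence $\beta_t(\hat\Q_n,\hat S_n)\downarrow -\rho_t(\zero)$ a.s.\ (taken with $\hat\Q_n=\P$ on $\Ft t$), put $G_n:=\{\beta_t(\hat\Q_n,\hat S_n)<-\rho_t(\zero)+1\}\in\Ft{t}$ (increasing with $\bigcup_nG_n=\Omega$ a.s.), and define $(\Q_*,S_*)$ by gluing $(\hat\Q_n,\hat S_n)$ on $G_n\setminus G_{n-1}$ ($G_0:=\emptyset$). Then $(\Q_*,S_*)\in\QQ_t^e$, and by the local property $\beta_t(\Q_*,S_*)=\sum_n 1_{G_n\setminus G_{n-1}}\beta_t(\hat\Q_n,\hat S_n)$ takes values in $[-c,\,c+1)$, hence is bounded; therefore $\EP{\Q_*}{\beta_t(\Q_*,S_*)}\le c+1<\infty$ and, by the identity above, $\inf_{Z\in A_t}\EP{\Q_*}{\trans{S_{*,T}}Z}\ge -(c+1)>-\infty$.

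The step I expect to be the main obstacle is exactly this ``only if'' construction: the fact that an essential supremum of random variables is a.s.\ finite does \emph{not} by itself furnish a single dual variable with an a.s.\ (let alone integrable) finite penalty, so the argument must genuinely exploit both the boundedness of $-\rho_t(\zero)=\essinf_{\QQ_t^e}\beta_t$ and the stability of $\QQ_t^e$ under countable $\Ft{t}$‑gluings. The accompanying routine verifications — that such gluings remain in $\QQ_t^e$ with the relevant series convergent thanks to the uniform bounds $\|S_{s,i}\|_\infty\le\max\{\|e_i\|_{\K_T,0},1\}$, and that $\beta_t\bigl(\sum_i 1_{E_i}(\Q_i,S_i)\bigr)=\sum_i 1_{E_i}\beta_t(\Q_i,S_i)$ over $\Ft{t}$‑partitions — follow from the local property and $\Ft{t}$‑decomposability established earlier.
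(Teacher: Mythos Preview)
Your proof is correct, and the ``if'' direction is essentially the paper's argument: the convex perturbation $(\Q_\lambda,S_\lambda)$ you build via the density $\tfrac{d\Q_\lambda}{d\P}S_{\lambda,T}=(1-\lambda)\tfrac{d\Q}{d\P}S_T+\lambda\tfrac{d\Q_0}{d\P}S_{0,T}$ is exactly what the paper means by $(1-\epsilon)(\Q,S)+\epsilon(\tilde\Q,\tilde S)$ (this must be read at the level of the density process, consistent with the bijection in Theorem~\ref{thm_dual_e1}), and the sublinearity of $\beta_t$ plus letting $\lambda\downarrow 0$ is the same mechanism. Your normalization $\Q=\P$ on $\Ft{t}$ lets you work pointwise, while the paper passes to expectations; these are equivalent packagings.

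For the ``only if'' direction you take a genuinely different, more constructive route. You explicitly manufacture a single $(\Q_*,S_*)\in\QQ_t^e$ with \emph{bounded} penalty by gluing a minimizing sequence on an $\Ft{t}$-partition, relying on the directedness of $\{\beta_t(\Q,S)\}$ under $\Ft{t}$-pasting and the uniform bounds on $S$. The paper instead dispatches this direction in one line: it uses $\Ft{t}$-decomposability of the family $\{\tfrac{d\Q}{d\P}\trans{S_T}Z:(\Q,S)\in\QQ_t^e,\,Z\in A_t\}$ to interchange expectation with the essential supremum, obtaining
\[
\E{\rho_t(\zero)}=\sup_{(\Q,S)\in\QQ_t^e}\inf_{Z\in A_t}\E{\tfrac{d\Q}{d\P}\trans{S_T}Z}>-\infty,
\]
whence some $(\Q,S)\in\QQ_t^e$ has $\inf_{Z\in A_t}\EQ{\trans{S_T}Z}>-\infty$ directly. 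Your argument yields more (a dual variable with uniformly bounded $\beta_t$, not merely finite $\EQ{\beta_t}$), at the cost of the extra gluing machinery; the paper's argument is shorter but nonconstructive.
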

\begin{proof}
If $\rho_t(X) = \esssup_{(\Q,S) \in \QQ_t^e} \lrparen{-\beta_t(\Q,S) - \EQt{\trans{S_T}X}{t}}$ for every $X \in \LdiF{}$ then, by $\Ft{t}$-decomposability of $\{\dQdP\trans{S_T}Z \; | \; (\Q,S) \in \QQ_t^e, \; Z \in A_t\}$ (recall from Section~\ref{sec_prelim} that a set $B$ is called $\Ft{t}$-decomposable if $1_D B + 1_{D^c} B \subseteq B$ for any $D \in \Ft{t}$) and the definition of the penalty function $\beta_t$, $\E{\rho_t(\zero)} = \sup_{(\Q,S) \in \QQ_t^e} \inf_{Z \in A_t} \E{\dQdP\trans{S_T} Z} > -\infty$ by $\rho_t(\zero) \in \LiF{t}$.  In particular this implies that there exists some $(\Q,S) \in \QQ_t^e$ such that $\inf_{Z \in A_t} \EQ{\trans{S_T} Z} > -\infty$.

Conversely, let $(\tilde\Q,\tilde{S}) \in \QQ_t^e$ such that $\inf_{Z \in A_t} \EtQ{\trans{\tilde{S}_T} Z} > -\infty$.  Since $\QQ_t^e \subseteq \QQ_t$, the dual representation in Theorem~\ref{thm_dual_e1} implies that for every $X \in \LdiF{}$, $\rho_t(X) \geq \esssup_{(\Q,S) \in \QQ_t^e} \lrparen{-\beta_t(\Q,S) - \EQt{\trans{S_T}X}{t}}$.
As in the proof of Proposition A.1.7 of \cite{FR15-supermtg} we will show the reverse inequality by considering the expectation.
Since $\{-\beta_t(\Q,S) - \EQt{\trans{S_T}X}{t} \; | \; (\Q,S) \in \QQ_t\}$ is $\Ft{t}$-decomposable we are able to interchange the expectation and essential supremum.  
As done in~\cite[Lemma 3.5]{FP06}, for any $\epsilon \in (0,1)$ and $(\Q,S) \in \QQ_t$, define
$(\Q_\epsilon,S_\epsilon) := (1-\epsilon) (\Q,S) + \epsilon (\tilde\Q,\tilde{S}) \in \QQ_t^e$.
Additionally, by definition of the penalty function, we can conclude
$\beta_t(\Q_{\epsilon},S_{\epsilon}) \leq (1-\epsilon)\beta_t(\Q,S) + \epsilon \beta_t(\tilde\Q,\tilde{S})$.
Choose any $X \in \LdiF{}$
\begin{align*}
\E{\rho_t(X)} &= \sup_{(\Q,S) \in \QQ_t} \E{-\beta_t(\Q,S) - \dQdP \trans{S_T} X}\\
&\geq \sup_{(\Q,S) \in \QQ_t^e} \E{-\beta_t(\Q,S) - \dQdP \trans{S_T}X}\\
&\geq \sup_{(\Q,S) \in \QQ_t} \E{-\beta_t(\Q_{\epsilon},S_{\epsilon}) - \frac{d\Q_\epsilon}{d\P}\trans{S_{\epsilon,T}}X}\\
&\geq \sup_{(\Q,S) \in \QQ_t} \lparen{(1-\epsilon) \E{-\beta_t(\Q,S) - \dQdP \trans{S_T}X}}\\
&\quad\quad \rparen{+ \epsilon \E{-\beta_t(\tilde\Q,\tilde{S}) - \frac{d\tilde\Q}{d\P}\trans{\tilde{S}_T}X}}\\
&= (1-\epsilon) \E{\rho_t(X)} + \epsilon \E{-\beta_t(\tilde\Q,\tilde{S}) - \frac{d\tilde\Q}{d\P}\trans{\tilde{S}_T}X}.
\end{align*}
Since $\E{\rho_t(X)} \in \R$ then we take the limit as $\epsilon$ tends to $0$ to recover that the final line above is equivalent to the first line, and the result is shown.
\end{proof}

\begin{corollary}
\label{cor_equivalent-prob}
Let $\rho_t$ be a scalarized conditional coherent risk measure satisfying the Fatou property. Then
\begin{equation}
\label{eq_coherent_equivalent-prob}
\rho_t(X) = \esssup_{(\Q,S) \in \QQ_t^{\rho,e}} -\EQt{\trans{S_T}X}{t}
\end{equation}
for every $X \in \LdiF{}$ if and only if $\QQ_t^{\rho,e} := \QQ_t^e \cap \QQ_t^\rho \neq \emptyset$, i.e., there exists some $(\Q,S) \in \QQ_t^e$ such that $\beta_t(\Q,S) = 0$ almost surely.
\end{corollary}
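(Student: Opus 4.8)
The plan is to deduce this from Proposition~\ref{prop_equivalent-prob} together with the coherent-to-convex reduction that already turns Theorem~\ref{thm_dual_e1} into Corollary~\ref{cor_dual_e1_coherent} (the argument of Corollary~2.5 in~\cite{FP06}). The structural fact driving everything is that, when $\rho_t$ is coherent, $\rho_t(\zero)=0$ by conditional positive homogeneity, $A_t$ is an $\Ft{t}$-decomposable convex cone with $\zero\in A_t$, and hence for each $(\Q,S)\in\QQ_t$ the penalty $\beta_t(\Q,S)=\esssup_{Z\in A_t}-\EQt{\trans{S_T}Z}{t}$ is $\{0,+\infty\}$-valued: on any $\Ft{t}$-set where some $Z\in A_t$ makes $-\EQt{\trans{S_T}Z}{t}$ positive, rescaling $Z$ by large $\Ft{t}$-measurable factors (which keeps it in $A_t$) forces $\beta_t(\Q,S)=+\infty$ there, while $Z=\zero$ gives $\beta_t(\Q,S)\geq 0$ everywhere. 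In particular $\QQ_t^{\rho,e}$ is exactly the set of $(\Q,S)\in\QQ_t^e$ with $\P$-a.s.\ finite penalty.

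For ``$\QQ_t^{\rho,e}\neq\emptyset\Rightarrow$~\eqref{eq_coherent_equivalent-prob}'', take $(\tilde\Q,\tilde{S})\in\QQ_t^{\rho,e}$, so $\essinf_{Z\in A_t}\EtQt{\trans{\tilde{S}_T}Z}{t}=-\beta_t(\tilde\Q,\tilde{S})=0$ almost surely; hence $\EtQt{\trans{\tilde{S}_T}Z}{t}\geq 0$, and therefore $\EtQ{\trans{\tilde{S}_T}Z}=\EtQ{\EtQt{\trans{\tilde{S}_T}Z}{t}}\geq 0$, for every $Z\in A_t$. Thus $\inf_{Z\in A_t}\EtQ{\trans{\tilde{S}_T}Z}\geq 0>-\infty$, which is the hypothesis of Proposition~\ref{prop_equivalent-prob}, so~\eqref{eq_equivalent-prob} holds. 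Now discard from the essential supremum in~\eqref{eq_equivalent-prob} every $(\Q,S)$ with $\beta_t(\Q,S)\not\equiv 0$: on $\{\beta_t(\Q,S)=+\infty\}$ its contribution is $-\infty$ and is dominated by $(\tilde\Q,\tilde{S})$, while on $\{\beta_t(\Q,S)=0\}$ one pastes $(\Q,S)$ with $(\tilde\Q,\tilde{S})$ (using $\Ft{t}$-decomposability of the set of dual values, as in the proof of Proposition~\ref{prop_equivalent-prob}) to obtain an element of $\QQ_t^{\rho,e}$ with at least as large a value. Hence the supremum may be taken over $\QQ_t^{\rho,e}$, on which $\beta_t\equiv 0$, which is~\eqref{eq_coherent_equivalent-prob}; this step is exactly Corollary~2.5 of~\cite{FP06}.

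For the converse, if~\eqref{eq_coherent_equivalent-prob} holds with $\QQ_t^{\rho,e}=\emptyset$, its right-hand side is an essential supremum over the empty set, so $\rho_t(\zero)=-\infty$, contradicting $\rho_t(\zero)\in\LiF{t}$; thus $\QQ_t^{\rho,e}\neq\emptyset$. The only non-routine ingredient is the coherent reduction of the second paragraph --- the $\{0,+\infty\}$ dichotomy of $\beta_t$ and the pasting that removes the infinite-penalty variables --- which is handled precisely as in~\cite[Corollary~2.5]{FP06}, just as it was for Corollary~\ref{cor_dual_e1_coherent}; the remainder is a direct application of Proposition~\ref{prop_equivalent-prob} and the convention that an essential supremum over the empty set equals $-\infty$.
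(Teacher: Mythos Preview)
Your proof is correct and follows essentially the same route as the paper: both invoke the coherent penalty dichotomy (the argument of \cite[Corollary~2.5]{FP06}) to identify $\QQ_t^{\rho,e}$ with the finite-penalty elements of $\QQ_t^e$, then feed the condition into Proposition~\ref{prop_equivalent-prob} for one implication and obtain a contradiction with $\rho_t(\zero)\in\LiF{t}$ for the other. Your pasting argument to strip the infinite-penalty dual variables is spelled out more explicitly than in the paper, but the content is the same.
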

\begin{proof}
Recall from the logic of the proof of Corollary~2.5 of~\cite{FP06}, for any $(\Q,S) \in \QQ_t$ we have that $\E{\beta_t(\Q,S)} < +\infty$ if and only if $(\Q,S) \in \QQ_t^{\rho}$.  The results follows from Proposition~\ref{prop_equivalent-prob} above.  If $\rho_t(X) = \esssup_{(\Q,S) \in \QQ_t^{\rho,e}} -\EQt{\trans{S_T}X}{t}$ then there exists some $(\Q,S) \in \QQ_t^e$ such that $\inf_{Z \in A_t} \EQ{\trans{S_T} Z} > -\infty$ (else $\E{\rho_t(\cdot)} \equiv -\infty$).  Conversely, if $\inf_{Z \in A_t} \E{\trans{S_T} Z} > -\infty$ for some $(\Q,S) \in \QQ_t^e$ then we have
\begin{align*}
\rho_t(X) &= \esssup_{\substack{(\Q,S) \in \QQ_t^e \\ \E{\beta_t(\Q,S)} < +\infty}} \lrparen{-\beta_t(\Q,S) - \EQt{\trans{S_T}X}{t}}\\
&= \esssup_{(\Q,S) \in \QQ_t^{\rho,e}} -\EQt{\trans{S_T}X}{t}.
\end{align*}
\end{proof}

As in the literature for scalar risk measures, we will now introduce relevance (or sensitivity) and demonstrate that it implies the
existence of $(\Q,S) \in \QQ_t^e$ such that $\inf_{Z \in A_t} \EQ{\trans{S_T} Z} >
-\infty$.  See, e.g., \cite{FP06,CDK06,KS07} for previous literature on relevance of univariate conditional risk measures. 
\begin{definition}
\label{defn_relevance}
A risk measure $\rho_t$ is called \textbf{\emph{relevant}} if 
\[\P(\rho_t(-\epsilon 1_D e_1) > \rho_t(\zero)) > 0\]
for every $\epsilon > 0$ and every $D \in \Ft{}$ with $\P(D) > 0$. 
\end{definition}

\begin{lemma}
\label{lemma_relevance}
Let $\rho_t$ be a scalarized conditional convex risk measure satisfying the Fatou property. If $\rho_t$ is relevant then there exists $(\Q,S) \in \QQ_t^e$ such that $\inf_{Z \in A_t} \EQ{\trans{S_T} Z} > -\infty$.
\end{lemma}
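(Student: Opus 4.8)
The plan is to follow the classical Halmos--Savage / exhaustion argument adapted to the present multivariate setting, exactly as relevance is used in the univariate literature (e.g.\ \cite{FP06}). By Theorem~\ref{thm_dual_e1} the Fatou property gives the dual representation over $\QQ_t$, so the penalty function $\beta_t$ is not identically $+\infty$; in particular $\sup_{(\Q,S) \in \QQ_t} \lrparen{-\E{\beta_t(\Q,S)}} = \E{\rho_t(\zero)} \in \R$, hence there is some $(\Q_0,S^0) \in \QQ_t$ with $\inf_{Z \in A_t} \EP{\Q_0}{\trans{S^0_T} Z} > -\infty$ (this is the content of the first half of Proposition~\ref{prop_equivalent-prob}, read with $\QQ_t$ in place of $\QQ_t^e$). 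The issue is only that $\Q_0$ need not be equivalent to $\P$. The strategy is therefore to enlarge the null set of $\Q_0 S^0_T$: among all $(\Q,S) \in \QQ_t$ with finite penalty, choose one whose ``support'' $\{\dfrac{d\Q}{d\P}\trans{S_T}\text{-positivity set}\}$ is as large as possible, and use relevance to show that this maximal support must be all of $\Omega$.

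In detail, first I would set $\mathcal{D} := \lrcurly{(\Q,S) \in \QQ_t \; | \; \inf_{Z \in A_t} \EQ{\trans{S_T}Z} > -\infty}$, which is nonempty and, by the convexity argument already used in Proposition~\ref{prop_equivalent-prob} (taking convex combinations $(1-\epsilon)(\Q,S) + \epsilon(\Q',S')$ and using subadditivity of $\beta_t$ under such combinations), is closed under convex combinations. For $(\Q,S) \in \mathcal D$ let $D(\Q,S) := \lrcurly{\dQdP S_{T,1} > 0} \in \Ft{}$ (note $S_{T,1} \equiv 1$, so this is just $\{\dQdP > 0\}$, the support of $\Q$; but working with the first component keeps the bookkeeping uniform). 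Put $c := \sup_{(\Q,S)\in\mathcal D} \P(D(\Q,S))$ and pick $(\Q_n,S^n) \in \mathcal D$ with $\P(D(\Q_n,S^n)) \to c$. Define $\Q^* := \sum_{n\geq 1} 2^{-n} \Q_n$ with the associated price process obtained by the corresponding convex-combination recipe from the proof of Proposition~\ref{prop_equivalent-prob} (a countable convex combination; one checks the weights make $\frac{d\Q^*}{d\P}$ integrable and that the martingale/consistency/boundedness conditions defining $\QQ_t$ are preserved, the boundedness because all $S^n_{s,i}$ share the common bound $\max\{\|e_i\|_{\K_T,0},1\}$, which is the precise reason Theorem~\ref{thm_dual_e1} restricts to bounded prices, cf.\ Remark~\ref{rem_SHP_dual}). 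Then $(\Q^*,S^*) \in \mathcal D$ and $\P(D(\Q^*,S^*)) = \P\lrparen{\bigcup_n D(\Q_n,S^n)} = c$, so the supremum is attained.

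Next I would show $\P(\Omega \setminus D(\Q^*,S^*)) = 0$, which gives $\Q^* \in \mathcal M^e$ and finishes the proof. Suppose not: let $D := \Omega \setminus D(\Q^*,S^*)$ have $\P(D) > 0$. Apply relevance with this $D$ and $\epsilon = 1$: $\P(\rho_t(-1_D e_1) > \rho_t(\zero)) > 0$, which by the dual representation of Theorem~\ref{thm_dual_e1} forces the existence of some $(\Q',S') \in \QQ_t$ with $\E{\beta_t(\Q',S')} < +\infty$ and $\E{\frac{d\Q'}{d\P}\trans{S'_T}1_D e_1} = \E{\frac{d\Q'}{d\P}1_D S'_{T,1}} = \Q'(D) > 0$ (otherwise every dual element would give the same value on $-1_D e_1$ as on $\zero$, contradicting relevance after taking expectations and using $\Ft{t}$-decomposability to pass between $\esssup$ and $\sup$). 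Since $(\Q',S') \in \mathcal D$, forming the convex combination $\frac12(\Q^*,S^*) + \frac12(\Q',S') \in \mathcal D$ yields an element whose support contains $D(\Q^*,S^*) \cup D(\Q',S')$, which strictly contains $D(\Q^*,S^*)$ because $\Q'(D) > 0$ means $D(\Q',S')$ meets $D$ on a set of positive $\P$-measure. This contradicts maximality of $c$. Hence $D$ is $\P$-null, $\Q^* \sim \P$, so $(\Q^*,S^*) \in \QQ_t^e$ with $\inf_{Z\in A_t}\EP{\Q^*}{\trans{S^*_T}Z} > -\infty$, as required.

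The main obstacle is the second step: extracting from relevance a dual element that actually \emph{charges} the set $D$. This is where one must be careful that relevance is a statement about $\rho_t$, translate it through the dual representation (using $\Ft{t}$-decomposability to move between the conditional $\esssup$ and the unconditional $\sup$, exactly as in Proposition~\ref{prop_equivalent-prob}), and then argue that if no finite-penalty dual element gave positive mass to $D$ then $\rho_t(-1_D e_1) = \rho_t(\zero)$ almost surely, contradicting relevance. The countable-convex-combination bookkeeping in the exhaustion step is routine given that the price bounds are uniform, but it is worth stating explicitly that this uniform bound is exactly what makes $\QQ_t$ (rather than the larger Jouini--Kallal set of Remark~\ref{rem_SHP_dual}) stable under such countable mixtures.
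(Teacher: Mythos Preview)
Your overall strategy is the same Halmos--Savage exhaustion that the paper uses, and the contradiction via relevance is also the paper's idea. The paper, however, runs the exhaustion inside the a.s.\ level sets
\[
\QQ_t^{\epsilon} := \lrcurly{(\Q,S) \in \QQ_t \; \middle| \; \beta_t(\Q,S) < -\rho_t(\zero) + \epsilon \ \P\text{-a.s.}}
\]
rather than your $\mathcal D = \{(\Q,S) : \E{\beta_t(\Q,S)} < \infty\}$, and in the final step it uses $\Ft{t}$-pasting (gluing $(\bar\Q,\bar S)$ on a set $B \in \Ft{t}$ with an element of $\QQ_t^\epsilon$ on $B^c$) instead of your half--half convex combination. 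These differences are not cosmetic; they are exactly what closes two gaps in your write-up.

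First, your countable mixture $(\Q^*,S^*)=\sum_n 2^{-n}(\Q_n,S^n)$ need not lie in $\mathcal D$: convexity of the penalty only gives $\E{\beta_t(\Q^*,S^*)} \le \sum_n 2^{-n}\E{\beta_t(\Q_n,S^n)}$, and nothing bounds the right-hand side. The paper's choice of $\QQ_t^\epsilon$ gives a uniform a.s.\ bound $-\rho_t(\zero)+\epsilon$ on the penalties, so the countable mixture stays in $\QQ_t^\epsilon$ automatically. (You can patch your version by reweighting, e.g.\ $a_n \propto 2^{-n}/(1+\E{\beta_t(\Q_n,S^n)})$, but you should say so.)

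Second, your extraction of $(\Q',S')\in\mathcal D$ with $\Q'(D)>0$ is not as immediate as claimed. After taking expectations through the $\esssup$ you get $\E{\EPt{\Q'}{1_D}{t}}>0$, but with the $\P$-a.s.\ convention for $\EPt{\Q'}{\cdot}{t}$ used in this paper (set to $\Et{\cdot}{t}$ on $\{\Et{d\Q'/d\P}{t}=0\}$) this quantity equals $\Q'(D)$ only when $\Et{d\Q'/d\P}{t}>0$ a.s.; otherwise it can be positive purely from the convention on the $\Ft{t}$-null set of $\Q'$. The paper sidesteps this by not asserting $\bar\Q(D)>0$ at all: it isolates the $\Ft{t}$-set $B=\{\beta_t(\bar\Q,\bar S) < -\rho_t(\zero)+\epsilon\,\EbQt{1_D}{t}\}$, pastes $(\bar\Q,\bar S)$ onto $B$ with an element of $\QQ_t^\epsilon$ on $B^c$ to land back in $\QQ_t^\epsilon$, and then shows $\E{1_B 1_D \, d\hat\Q/d\P}>0$ directly. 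If you keep your $\mathcal D$ framework you will need a similar pasting (or an argument that one may restrict to dual pairs with $\Et{d\Q/d\P}{t}>0$ a.s.) before concluding $\Q'(D)>0$.
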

\begin{proof}
First, $\lrcurly{\beta_t(\Q,S) \; | \; (\Q,S) \in \QQ_t}$ is $\Ft{t}$-decomposable (by the definition of $\QQ_t$ and $\beta_t$).  Therefore $\R \ni \E{-\rho_t(\zero)} = \E{\essinf_{(\Q,S) \in \QQ_t} \beta_t(\Q,S)} = \inf_{(\Q,S) \in \QQ_t} \E{\beta_t(\Q,S)}$. 
For any $\epsilon > 0$ let the set $\QQ_t^\epsilon \subseteq \QQ_t$ be defined by
\[\QQ_t^{\epsilon} := \lrcurly{(\Q,S) \in \QQ_t \; | \; \beta_t(\Q,S) < -\rho_t(\zero) + \epsilon \; \P\text{-}\as}.\]

Now we will show that for any $\epsilon > 0$ there exists a pair of dual variables $(\tilde\Q,\tilde{S}) \in \QQ_t^e \cap \QQ_t^\epsilon$ (and in particular this implies $\inf_{Z \in A_t} \EtQ{\trans{\tilde{S}_T} Z} > -\infty$).
\begin{enumerate}
\item We will show that $\QQ_t^{\epsilon} \neq \emptyset$ for any $\epsilon > 0$.  Let $(\Q_n,S_n)_{n \in \N} \subseteq \QQ_t$ such that $\beta_t(\Q_n,S_n) \searrow -\rho_t(\zero)$ almost surely.  This sequence exists because $\lrcurly{\beta_t(\Q,S) \; | \; (\Q,S) \in \QQ_t}$ is $\Ft{t}$-decomposable, thus there exists a sequence $(\Q_n,S_n)_{n \in \N} \subseteq \QQ_t$ such that \[\beta_t(\Q_n,S_n) \searrow \essinf_{(\Q,S) \in \QQ_t} \beta_t(\Q,S) = -\rho_t(\zero) \quad \P\text{-}\as\]
    Define the $\Ft{t}$-measurable random variable $\tau_{\epsilon}$ by
    \[\tau_{\epsilon} := \min\lrcurly{n \; | \; \beta_t(\Q_n,S_n) < -\rho_t(\zero) + \epsilon}.\]
    It holds $\tau_{\epsilon} < +\infty$ almost surely by construction of the sequence $(\Q_n,S_n)_{n \in \N}$, and thus $(\lrcurly{\tau_{\epsilon} = n})_{n \in \N} \subseteq \Ft{t}$ defines a partition of $\Omega$. We can define $(\bar\Q_\epsilon,\bar{S}_\epsilon) := \sum_{n = 1}^{\infty} 1_{\lrcurly{\tau_{\epsilon} = n}} (\Q_n,S_n) \in \QQ_t$ (trivially) with $\beta_t(\bar \Q_{\epsilon},\bar{S}_\epsilon) \leq \sum_{n = 1}^{\infty} 1_{\lrcurly{\tau_{\epsilon} = n}} \beta_t(\Q_n,S_n) < -\rho_t(\zero) + \epsilon$, i.e., $(\bar \Q_{\epsilon},\bar{S}_\epsilon) \in \QQ_t^{\epsilon}$.

\item We will show that if $\rho_t$ is relevant then for every $\epsilon > 0$ there exists an element $(\tilde\Q,\tilde{S}) \in \QQ_t^{\epsilon}$ such that $\frac{d\tilde\Q}{d\P} > 0$ a.s.  Fix $\epsilon > 0$ and let $c_\epsilon := \sup \lrcurly{\P(\dQdP > 0) \; | \; (\Q,S) \in \QQ_t^{\epsilon}}$.
Take a sequence $(\Q_n,S_n)_{n \in \N} \subseteq \QQ_t^{\epsilon}$ such that $\lim_{n \to \infty} \P(\dQndP{n} > 0) = c_{\epsilon}$.  Define $(\tilde\Q,\tilde{S}) := \sum_{n = 1}^{\infty} \frac{1}{2^n} (\Q_n,S_n)$. It holds $(\tilde\Q,\tilde{S}) \in \QQ_t^\epsilon$ by convexity of $\beta_t$. In fact, $\lrcurly{\frac{d\tilde\Q}{d\P} > 0} = \cup_{n = 1}^{\infty} \lrcurly{\dQndP{n} > 0}$.  Therefore $\P(\frac{d\tilde\Q}{d\P} > 0) = c_\epsilon$.  It remains to show that $c_\epsilon = 1$, and thus $\tilde\Q \in \mathcal{M}^e$.

    Assume $c_\epsilon < 1$, and let $D := \lrcurly{\frac{d\tilde\Q}{d\P} = 0}$.  Relevance implies
\[\P(\rho_t(-\epsilon 1_D e_1) > \rho_t(\zero)) > 0,\]
and by construction we have
\[\rho_t(-\epsilon 1_D e_1) = \esssup_{(\Q,S) \in \QQ_t} \lrparen{-\beta_t(\Q,S) + \epsilon \EQt{1_D}{t}}.\]
Therefore, there exists $(\bar{\Q},\bar{S}) \in \QQ_t$ such that the set
\[B := \lrcurly{\beta_t(\bar{\Q},\bar{S}) < -\rho_t(\zero) + \epsilon \EbQt{1_D}{t}} \in \Ft{t}\]
has positive probability, i.e., $\beta_t(\bar{\Q},\bar{S}) < -\rho_t(\zero) + \epsilon$ on $B$.

    Now we will define $(\hat\Q,\hat{S}) \in \QQ_t^{\epsilon}$ which is equal to $(\bar{\Q},\bar{S})$ on $B$.  Let $(\Q,S) \in \QQ_t^{\epsilon}$ (arbitrary) and let $(\hat\Q,\hat{S}) := 1_B (\bar{\Q},\bar{S}) + 1_{B^c} (\Q,S) \in \QQ_t^{\epsilon}$ (because $\beta_t(\hat\Q,\hat{S}) = 1_B \beta_t(\bar{\Q},\bar{S}) + 1_{B^c} \beta_t(\Q,S)$).

    By definition of $B$ and since $\beta_t(\bar{\Q},\bar{S}) \geq -\rho_t(\zero)$ (by the dual representation of $\rho_t(\zero)$),     
    \[\E{1_B 1_D \frac{d\hat\Q}{d\P}} = \E{1_B \Et{1_D \frac{d\hat\Q}{d\P}}{t}} > 0.\]
    This implies that $\P\lrparen{\lrcurly{\frac{d\hat\Q}{d\P} > 0} \cap D} \geq \P\lrparen{\lrcurly{\frac{d\hat\Q}{d\P} > 0} \cap D \cap B} > 0$.  Therefore $\frac{1}{2} (\tilde\Q,\tilde{S}) + \frac{1}{2} (\hat\Q,\hat{S}) \in \QQ_t^{\epsilon}$ by convexity of $\QQ_t^\epsilon$, and we have found a contradiction.
\end{enumerate}
\end{proof}

We now show that, in the coherent case, relevance is actually necessary and sufficient for a dual representation as in \eqref{eq_coherent_equivalent-prob}.

\begin{proposition}
\label{cor_relevance-coherent}
Let $\rho_t$ be a scalarized conditional coherent risk measure satisfying the Fatou property.  $\rho_t$ is relevant if and only if \eqref{eq_coherent_equivalent-prob} holds for every $X \in \LdiF{}$.
\end{proposition}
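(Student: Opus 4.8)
The plan is to prove the two implications separately, with the heavy lifting already done in the preceding results. For the direction ``relevance $\Rightarrow$ \eqref{eq_coherent_equivalent-prob},'' I would simply chain Lemma~\ref{lemma_relevance} with Corollary~\ref{cor_equivalent-prob}: relevance of $\rho_t$ gives the existence of $(\Q,S) \in \QQ_t^e$ with $\inf_{Z \in A_t} \EQ{\trans{S_T} Z} > -\infty$, and by Corollary~\ref{cor_equivalent-prob} this is exactly equivalent to the dual representation \eqref{eq_coherent_equivalent-prob} holding for every $X \in \LdiF{}$. So this direction is immediate and requires no new work.

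The converse, ``\eqref{eq_coherent_equivalent-prob} $\Rightarrow$ relevance,'' is where the actual argument lies, and it is specific to the coherent case. Suppose the representation $\rho_t(X) = \esssup_{(\Q,S) \in \QQ_t^{\rho,e}} -\EQt{\trans{S_T}X}{t}$ holds. Fix $\epsilon > 0$ and $D \in \Ft{}$ with $\P(D) > 0$; I need $\P(\rho_t(-\epsilon 1_D e_1) > \rho_t(\zero)) > 0$. Since every $(\Q,S) \in \QQ_t^{\rho,e}$ has $\beta_t(\Q,S) = 0$ a.s., the representation gives $\rho_t(\zero) = \esssup_{(\Q,S)\in\QQ_t^{\rho,e}} 0 = 0$ (note $\QQ_t^{\rho,e} \neq \emptyset$ since otherwise $\rho_t \equiv -\infty$, contradicting finiteness), and $\rho_t(-\epsilon 1_D e_1) = \esssup_{(\Q,S) \in \QQ_t^{\rho,e}} \epsilon\,\EQt{1_D}{t}$. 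The key point is that each $\Q \in \mathcal{M}^e$ is equivalent to $\P$, so $\EQt{1_D}{t} > 0$ on a set of positive probability: indeed $\E{\EQt{1_D}{t}} = \E{\dQdP 1_D} = \Q(D) > 0$ because $\P(D)>0$ and $\Q \sim \P$. Hence for any single $(\Q,S) \in \QQ_t^{\rho,e}$ we get $\rho_t(-\epsilon 1_D e_1) \geq \epsilon\,\EQt{1_D}{t} > 0 = \rho_t(\zero)$ on a set of positive probability, which is precisely relevance.

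The main (mild) obstacle is making sure the edge case $\QQ_t^{\rho,e} = \emptyset$ is handled and that the constant risk measure identities $\rho_t(\zero) = 0$ are justified from coherence — in the coherent representation the penalty vanishes, so one must invoke Corollary~\ref{cor_dual_e1_coherent}/Corollary~\ref{cor_equivalent-prob} to know the supremum is over a nonempty set and that $\beta_t \equiv 0$ on it; finiteness at zero (from the definition of a scalarized conditional convex risk measure, Definition~\ref{defn_scalar}) rules out the degenerate case. A small additional care point: I should phrase the argument so it does not secretly need $A_t$ to be a cone beyond what coherence already guarantees, but since coherence is assumed throughout this subsection for $\rho_t$, this is automatic. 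No new estimates or limiting arguments are needed; the proof is essentially a two-line deduction in each direction once the preceding corollaries are in place.

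\begin{proof}
If $\rho_t$ is relevant, then by Lemma~\ref{lemma_relevance} there exists $(\Q,S) \in \QQ_t^e$ with $\inf_{Z \in A_t} \EQ{\trans{S_T} Z} > -\infty$, and hence \eqref{eq_coherent_equivalent-prob} holds for every $X \in \LdiF{}$ by Corollary~\ref{cor_equivalent-prob}.

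Conversely, assume \eqref{eq_coherent_equivalent-prob} holds. Since $\rho_t(\zero) \in \LiF{t}$, the set $\QQ_t^{\rho,e}$ is nonempty, and for every $(\Q,S) \in \QQ_t^{\rho,e}$ we have $\beta_t(\Q,S) = 0$ $\P$-a.s., so \eqref{eq_coherent_equivalent-prob} yields $\rho_t(\zero) = 0$ and, for any $\epsilon > 0$ and $D \in \Ft{}$,
\[\rho_t(-\epsilon 1_D e_1) = \esssup_{(\Q,S) \in \QQ_t^{\rho,e}} \epsilon\, \EQt{1_D}{t} \geq \epsilon\, \EQt{1_D}{t}\]
for any fixed $(\Q,S) \in \QQ_t^{\rho,e}$. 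If $\P(D) > 0$, then since $\Q \in \mathcal{M}^e$ is equivalent to $\P$ we obtain $\E{\EQt{1_D}{t}} = \Q(D) > 0$, so $\EQt{1_D}{t} > 0$ on a set of positive probability. On that set $\rho_t(-\epsilon 1_D e_1) \geq \epsilon\, \EQt{1_D}{t} > 0 = \rho_t(\zero)$, hence $\P(\rho_t(-\epsilon 1_D e_1) > \rho_t(\zero)) > 0$. As $\epsilon > 0$ and $D \in \Ft{}$ with $\P(D) > 0$ were arbitrary, $\rho_t$ is relevant.
\end{proof}
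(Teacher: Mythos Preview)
Your proof is correct and follows essentially the same approach as the paper: both directions chain Lemma~\ref{lemma_relevance} with Corollary~\ref{cor_equivalent-prob} for the forward implication, and for the converse both exploit that $\EQt{1_D}{t}$ cannot vanish almost surely when $\Q \in \mathcal{M}^e$ and $\P(D) > 0$ (the paper phrases this by contradiction, you do it directly). One small slip: the identity $\E{\EQt{1_D}{t}} = \Q(D)$ is not correct in general for $t > 0$, since the outer expectation is under $\P$ while $\EQt{\cdot}{t}$ is defined via $\bar\xi_{t,T}(\Q)$; the correct statement is $\EQ{\EQt{1_D}{t}} = \Q(D) > 0$, which (together with $\Q \sim \P$) still yields that $\EQt{1_D}{t} > 0$ on a set of positive $\P$-probability, so your conclusion stands.
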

\begin{proof}
As shown above in Corollary~\ref{cor_equivalent-prob} and Lemma~\ref{lemma_relevance}, if $\rho_t$ is relevant then \eqref{eq_coherent_equivalent-prob} holds for every $X \in \LdiF{}$.  Conversely assume that $\rho_t$ has dual representation with respect to the dual variables $\QQ_t^{\rho,e}$ and assume $\rho_t$ is not relevant, i.e., there exists some $\epsilon > 0$ and $D \in \Ft{}$ with positive probability such that $\rho_t(-\epsilon 1_D e_1) \leq \rho_t(\zero) = 0$ almost surely.  
In particular this implies $\epsilon\EQt{1_D}{t} \leq \rho_t(-\epsilon 1_D e_1) \leq 0$ for any $(\Q,S) \in \QQ_t^{\rho,e}$.
Therefore $\epsilon \Q(D) \leq 0$ for every $(\Q,S) \in \QQ_t^{\rho,e}$.  However, this can only be true if $\Q(D) = 0$, which implies $\P(D) = 0$ since $\Q \in \mathcal{M}^e$.  This is a contradiction.
\end{proof}

\section{Time consistency}\label{sec_tc}

In this section we will study a modified version of the traditional time consistency property.  Typically, as studied in e.g., \cite{HMBS16}, the time consistency of scalarized risk measures is considered for $\rho$ directly, i.e.,
\begin{equation}\label{eq:rho-tc}
\rho_s(X) \leq \rho_s(Y) \Rightarrow \rho_t(X) \leq \rho_t(Y)
\end{equation}
for $X,Y \in \LdiF{}$ and $t \leq s$.  
As we will demonstrate in Section~\ref{sec_shp} the superhedging portfolios do not satisfy this \textbf{$\rho$-time consistency}.  Instead we will consider a decomposition of the scalarizations via their dual representations as defined in Theorem~\ref{thm_dual_e1}.  These details are provided in Section~\ref{sec_tc-prelim}.  Then in Section~\ref{sec:pi-tc}, we consider a new time consistency property.  In particular, with this new property we determine the typical equivalence of time consistency and a recursive relation.

\begin{assumption}\label{ass:fatou}
For the remainder of this paper we will assume that $\rho_t$ is a relevant scalarized conditional convex risk measure satisfying the Fatou property, and corresponds with the acceptance set $A_t$ as given in \eqref{A-rep}.
\end{assumption}

\subsection{Overview}\label{sec_tc-overview}
Before continuing to the main results of this work in regards to time consistency of the dynamic multivariate risk measures $\rho$, we will provide a brief overview over the topic, illustrate the importance of the derived results, set them in relation to existing results in the literature, and provide an economic interpretation.

As mentioned previously, the usual time consistency property, $\rho$-time consistency, is in general not satisfied for dynamic multivariate risk measures $\rho$ given by~\eqref{eq_scalar}. In particular, it does not hold for the prototypical case of the superhedging risk measure as detailed in, e.g., Section~\ref{sec_shp} below.  This is because of the frictions inherent to the risk measure in covering all risks with the cash asset alone, i.e., by considering a market with transaction costs.
In contrast, many prior works (e.g., \cite{FR12,FR12b}) on time consistency of multivariate risk measures have considered set-valued risk measures and a property called multiportfolio time consistency.  A set-valued risk measure is said to be \textbf{multiportfolio time consistent} if
\begin{equation}\label{eq_mptc}
\bigcup_{X \in \mathcal{X}} R_s(X) \supseteq R_s(Y) \; \Rightarrow \; \bigcup_{X \in \mathcal{X}} R_t(X) \supseteq R_t(Y)
\end{equation}
for all $\mathcal{X} \subseteq \LdiF{}$, $Y \in \LdiF{}$, and $t \leq s$.  This property relies on the choice of eligible assets $\tilde M$ taken in the image space for the set-valued risk measure $R_t: \LdiF{} \to \mathcal{G}(\tilde M_t;\tilde M_{t,+})$.  For instance, the set-valued superhedging risk measure is multiportfolio time consistent w.r.t.\ the full space of eligible assets (i.e., when one chooses $\tilde M = \R^d$) but not with the cash asset only (i.e., when $\tilde M = M=\R \times \{0\}^{d-1}$ as assumed within this paper), see Section~\ref{sec_shp} below for details.  As with $\rho$-time consistency, this is due to the frictions inherent to the risk measure when restricting to covering risk with only the cash asset.

However, in Section~\ref{sec_tc-prelim}, we will introduce a decomposition of the conditional risk measure $\rho_t$ into a collection of multivariate risk measures in frictionless markets, each of which is defined by a price process satisfying a no-arbitrage condition. This decomposition allows us to recover the desired risk measure as the pointwise supremum over this collection of frictionless risk measures.  Given a price process $S = (S_t)_{t = 0}^T \in \SS$ (defined in~\eqref{eq:price_process} below), we denote these frictionless multivariate risk measures as $\pi^S$.  The relationship between these frictionless multivariate risk measures and the typical (univariate) risk measures of, e.g.,~\cite{BN04,DS05,FP06,AP10}, is presented in Section~\ref{sec:pi-phi}.  These new risk measures allow us to define a new notion of time consistency, called $\pi$-time consistency and provided in Definition~\ref{defn:pi-tc} below, which overcomes the obstacles related to market frictions in both $\rho$-time consistency and multiportfolio time consistency.  Briefly, if every no-arbitrage price $S$ consistent with the market model $\K$ induces a risk measure $\pi^S$ that is time consistent in the usual way, then we say the original risk measure $\rho$ is $\pi$-time consistent.  Notably, the superhedging risk measure \emph{is} $\pi$-time consistent.  Results on $\pi$-time consistency and its relation to multiportfolio time consistency of the underlying set-valued risk measure (appearing in~\eqref{eq_scalar} and with possibly different choices of eligible assets) are provided in Sections~\ref{sec:pi-tc} and~\ref{sec:mptc}.  Though the proposed time consistency notion is weaker than $\rho$-time consistency, it is stronger than the weak notion of $\rho$-time consistency (i.e., where $\rho_s(X) \leq \rho_s(\zero)$ implies $\rho_t(X) \leq \rho_t(\zero)$ for every portfolio $X \in \LdiF{}$) as studied for univariate risk measures in, e.g.,~\cite{weber2006distribution}. Section~\ref{sec:mptc} also contains a result on a backward recursion for $\pi$-time consistent scalar multivariate risk measures $\rho_t$  with respect to the underlying risk measure $\pi^S$.

\subsection{Preliminaries}\label{sec_tc-prelim}

For notational simplicity we will introduce the set of eligible pricing processes:
\begin{equation}
\label{eq:price_process}
\SS := \lrcurly{S \in \prod_{t = 0}^T \LdiK{t}{\plus{K_t}} \; \left| \; \begin{array}{l} \forall t = 0,...,T, \; \forall i \in \{2,3,...,d\}: \\ S_{t,1} = 1, \; \|S_{t,i}\|_\infty \leq \max\{\|e_i\|_{\K_T,0} , 1\} \end{array} \right.}.
\end{equation}

In the following definition we introduce a decomposition of the risk measures $\rho_t$.  This section will introduce some simple properties of these functions, which will then be used to define a new property for time consistency.
\begin{definition}
Let $S \in \SS$.  Define $\pi_t^S: \LdiF{} \to \LzF{t} \cup \{-\infty\}$ by
\begin{equation}
\label{eq:pi}
\pi_t^S(X) := \esssup_{\Q \in \QQ^e(S)} \lrparen{-\beta_t(\Q,S) - \EQt{\trans{S_T}X}{t}}
\end{equation}
for any $X \in \LdiF{}$ where $\beta_t$ is defined as in Theorem~\ref{thm_dual_e1} dependent on the acceptance set $A_t$ and
\[\QQ^e(S) := \lrcurly{\Q \in \mathcal{M} \; | \; (\Q,S) \in \QQ_0^e}.\]
\end{definition}
Note that $\pi_t^S(X) \in \LzF{t} \cup \{-\infty\}$ in general even though $\rho_t(X) \in \LiF{t}$.  Though $\pi_t^S(X)$ is trivially bounded from above by $\rho_t(X)$, it is not integrable in general. Proposition~\ref{prop:pi_inf} provides a simple condition so that $\pi_t^S(X) \in \LiF{t}$ for any $X \in \LdiF{}$.

\begin{definition}
\label{defn:NA}
A sequence of prices $S \in \SS$ satisfies the \textbf{\emph{no-arbitrage condition}} if
\begin{equation}
\label{NA}
\tag{NA} \lrsquare{\sum_{s = 0}^{T} \Gamma_s(S_s)} \cap \lrsquare{-\Gamma_T(S_T)} \subseteq \Gamma_T(S_T)
\end{equation}
where $\Gamma_t(S_t) := \{Z \in \LdiF{t} \; | \; \trans{S_t}Z \geq 0 \sas\}$.
\end{definition}

The motivation for calling this property ``no-arbitrage'' comes from the following proposition which relates the condition to the fundamental theorem of asset pricing.

\begin{proposition}
\label{prop:NA}
Let $S \in \SS$, then $S$ satisfies the no-arbitrage condition if and only if $\QQ^e(S) \neq \emptyset$.
\end{proposition}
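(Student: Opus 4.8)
The plan is to recognize that the no-arbitrage condition (NA) together with the structural properties of the price processes in $\SS$ is precisely the setting of a (vector-valued) fundamental theorem of asset pricing, and to deduce the equivalence $\QQ^e(S)\neq\emptyset \Leftrightarrow \text{(NA)}$ by reducing it to a known frictionless FTAP. Fix $S\in\SS$. The crucial observation is that once $S$ is fixed, the discounted value process $V_t^S := \trans{S_t}X$ of a portfolio $X$ turns the multivariate market into an ordinary scalar market, and the cones $\Gamma_t(S_t) = \{Z\in\LdiF{t}\;|\;\trans{S_t}Z\ge 0\sas\}$ are exactly the sets of $\Ft{t}$-measurable claims with nonnegative discounted value. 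The element $\Q\in\mathcal M$ belongs to $\QQ^e(S)$ iff $\Q$ is equivalent to $\P$, $S$ is a $\Q$-martingale, and the boundedness/cone constraints defining $\QQ_0^e$ hold — but the latter are automatic once $S\in\SS$ (see Remark~\ref{rem:QQ}), so $\Q\in\QQ^e(S)$ iff $\Q\in\mathcal M^e$ and $S=\EQt{S_T}{\cdot}$, i.e.\ $\Q$ is an equivalent martingale measure for $S$.

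First I would prove the easy direction: if $\Q\in\QQ^e(S)\neq\emptyset$, then (NA) holds. Suppose $Z\in\lrsquare{\sum_{s=0}^T \Gamma_s(S_s)}\cap\lrsquare{-\Gamma_T(S_T)}$, say $Z=\sum_{s=0}^T Z_s$ with $Z_s\in\Gamma_s(S_s)$ and $-Z\in\Gamma_T(S_T)$. Taking $\Q$-expectation and using the martingale property $\EQ{\trans{S_T}Z_s}=\EQ{\trans{S_s}Z_s}\ge 0$ for each $s$ (here one uses $Z_s$ bounded and $S$ a bounded $\Q$-martingale, so all expectations are finite), we get $\EQ{\trans{S_T}Z}\ge 0$; but $-Z\in\Gamma_T(S_T)$ gives $\trans{S_T}Z\le 0$ a.s., hence $\EQ{\trans{S_T}Z}\le 0$. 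Therefore $\EQ{\trans{S_T}Z}=0$ with $\trans{S_T}Z\le 0$ a.s., and since $\Q\sim\P$ we conclude $\trans{S_T}Z=0$ a.s., i.e.\ $Z\in\Gamma_T(S_T)$. This is exactly the inclusion (NA).

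For the converse — (NA) $\Rightarrow$ $\QQ^e(S)\neq\emptyset$ — I would invoke the discrete-time Dalang–Morton–Willinger fundamental theorem of asset pricing applied to the scalar discounted price process associated with $S$. Concretely, the condition (NA) expressed in terms of $\Gamma_t(S_t)$ says that one cannot build, by trading the $d$ assets under the pricing functional $\trans{S_t}\cdot$, a nonnegative terminal discounted wealth that is positive with positive probability without it being already ``null'' in the $S_T$-pricing sense; this is the standard no-arbitrage statement for the frictionless market with price process $S$. The DMW theorem then yields an equivalent probability measure $\Q\sim\P$ under which $S$ is a martingale. It remains only to check that this $\Q$ meets the defining constraints of $\QQ_0^e$: $S_{t,1}\equiv 1$ and $\|S_{t,i}\|_\infty\le\max\{\|e_i\|_{\K_T,0},1\}$ hold by $S\in\SS$, and $\dQdP S_T\in\plus{\K_T}$ (equivalently, by Remark~\ref{rem:QQ}, $S\in\prod_s\LdiK{s}{\plus{K_s}}$ with the listed bounds) again holds by membership of $S$ in $\SS$. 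Hence $(\Q,S)\in\QQ_0^e$, i.e.\ $\Q\in\QQ^e(S)$.

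The main obstacle is the careful bookkeeping in translating the cone-theoretic condition (NA) into the hypotheses of a ready-made FTAP: one must verify that ``$\sum_s \Gamma_s(S_s)$'' genuinely encodes self-financing trading strategies in the $S$-discounted market and that the resulting no-arbitrage notion matches the DMW formulation (including the fact that absolutely continuous martingale measures automatically exist once we only need the equivalent one, and that the bounded-price normalization $S_{t,1}\equiv 1$ plays the role of the num\'eraire). A secondary technical point is integrability: since all $Z_s$ are in $\LdiF{s}$ and $S$ is a bounded process, every expectation appearing is finite, so no localization is needed — but this should be stated explicitly. Once these identifications are in place, both implications follow from the classical theorem and the remarks already established in the paper.
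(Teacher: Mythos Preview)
Your approach is correct and essentially the same as the paper's: both reduce the statement to the classical Dalang--Morton--Willinger FTAP by identifying the cone-theoretic condition \eqref{NA} with the usual frictionless no-arbitrage for the price process $S$. Your direct martingale argument for the easy direction is a mild variation (the paper simply cites FTAP both ways after establishing the equivalence), but the one piece of bookkeeping you flag as ``the main obstacle'' and leave unexecuted---that $\sum_{s=0}^{T}\Gamma_s(S_s)$ and $\sum_{s=0}^{T-1}\Gamma_s(S_s)$ give the same no-arbitrage condition, and that the latter encodes self-financing strategies---is exactly what the paper writes out explicitly, so you should fill that in rather than defer it.
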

\begin{proof}
This follows directly from the fundamental theorem of asset pricing by noting that our definition of no-arbitrage is equivalent to the usual one in frictionless markets.  This can be seen by:
\begin{enumerate}
\item A predictable process $\seq{\eta}$ is self-financing if and only if $\trans{S_t}(\eta_{t+1} - \eta_t) \leq 0$ almost surely for every time $t = 0,1,...,T-1$ with $\eta_0 = 0$.  Equivalently this can be written as $\eta_{t+1} - \eta_t \in -\Gamma_t(S_t)$ for every time $t = 0,1,...,T-1$ with $\eta_0 = 0$.  Through summations, $\eta_t$ is attainable through self-financing trading strategy if and only if $\eta_t \in -\sum_{s = 0}^{t-1} \Gamma_s(S_s)$.
\item The wealth of a portfolio $\eta_t$ at time $t$ is given by $V_t := \trans{S_t}\eta_t$.  Therefore, in our notation, wealth satisfies the conditions $V_t \geq 0$ and $\P(V_t > 0) > 0$ if and only if $\eta_t \in \Gamma_t(S_t)$ and $\eta_t \not\in -\Gamma_t(S_t)$ respectively.
\end{enumerate}
Therefore it is immediately clear that \eqref{NA} is equivalent to stating that $V_0 = 0$, $V_T \geq 0$ almost surely implies $V_T = 0$ almost surely.  Finally, we can replace the self-financing condition $-\sum_{s = 0}^{T-1} \Gamma_s(S_s)$ with the summation up to time $T$ as these are equivalent conditions:
\begin{enumerate}
\item Assume $\lrsquare{\sum_{s = 0}^{T} \Gamma_s(S_s)} \cap \lrsquare{-\Gamma_T(S_T)} \subseteq \Gamma_T(S_T)$ then immediately by $0 \in \Gamma_T(S_T)$ we can conclude $\lrsquare{\sum_{s = 0}^{T-1} \Gamma_s(S_s)} \cap \lrsquare{-\Gamma_T(S_T)} \subseteq \Gamma_T(S_T)$.
\item Conversely, assume $\lrsquare{\sum_{s = 0}^{T-1} \Gamma_s(S_s)} \cap \lrsquare{-\Gamma_T(S_T)} \subseteq \Gamma_T(S_T)$ and let $X \in \lrsquare{\sum_{s = 0}^{T} \Gamma_s(S_s)} \cap \lrsquare{-\Gamma_T(S_T)}$. We can decompose $X = \sum_{s = 0}^T X_s$ where $\trans{S_s}X_s \geq 0$ almost surely and $\trans{S_T}X \leq 0$ almost surely.  Then we can easily see that
\begin{equation}\label{eq:NA-proof}
0 \geq \trans{S_T} X = \trans{S_T}\sum_{s = 0}^{T-1} X_s + \trans{S_T} X_T \geq \trans{S_T}\sum_{s = 0}^{T-1} X_s.
\end{equation}
Therefore $\sum_{s = 0}^{T-1} X_s \in \lrsquare{\sum_{s = 0}^{T-1} \Gamma_s(S_s)} \cap \lrsquare{-\Gamma_T(S_T)} \subseteq \Gamma_T(S_T)$, i.e., $\trans{S_T}\sum_{s = 0}^{T-1} X_s = 0$ almost surely.  This immediately implies that $\trans{S_T}X = 0$ almost surely as well from \eqref{eq:NA-proof}, i.e., $X \in \Gamma_T(S_T)$.
\end{enumerate}
\end{proof}

\begin{proposition}
\label{prop:pi0}
Let $S \in \SS$ then the following are equivalent:
\begin{enumerate}
\item \label{prop:pi0-1} $\pi_t^S(\zero) \in \LiF{t}$,
\item \label{prop:pi0-2} there exists some $\Q \in \QQ^e(S)$ such that $\beta_t(\Q,S) \in \LiF{t}$, and 
\item \label{prop:pi0-3} $S$ satisfies \eqref{NA} and there exists some $X \in \LdiF{t}$ such that $\P\lrparen{X \in \tilde A_t^\pi} = 0$, where $A_t^\pi := \lrcurly{X \in \LdiF{} \; | \; \pi_t^S(X) \leq 0} = \cl\lrparen{A_t + \sum_{s = t}^T \Gamma_s(S_s)}$ and $\tilde A_t^\pi$ is a $\Ft{t}$-measurable random set such that $A_t^\pi \cap \LdiF{t} = \LdpK{\infty}{t}{\tilde A_t^\pi}$.
\end{enumerate}
\end{proposition}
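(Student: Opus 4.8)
Two preliminary observations drive the whole argument. First, for every $\Q \in \QQ^e(S)$ the pair $(\Q,(S_s)_{s=t}^T)$ belongs to $\QQ_t$, so Theorem~\ref{thm_dual_e1} gives $\pi_t^S(X) \le \rho_t(X)$ for all $X \in \LdiF{}$; in particular $\pi_t^S(\zero) \le \rho_t(\zero) \in \LiF{t}$, so $\pi_t^S(\zero)$ is always bounded above, and condition~$(\ref{prop:pi0-1})$ is equivalent to saying that $\pi_t^S(\zero)$ is bounded below (which in turn forces $\QQ^e(S) \neq \emptyset$, hence \eqref{NA} by Proposition~\ref{prop:NA}, since an essential supremum over the empty index set is $-\infty$). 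Second, for $\Ft{t}$-measurable $X$ one has $\EQt{\trans{S_T}X}{t} = \trans{S_t}X$ (pull out $X$ and use $S_t = \EQt{S_T}{t}$), so $\pi_t^S$ restricted to $\LdiF{t}$ is the affine map $X \mapsto \pi_t^S(\zero) - \trans{S_t}X$. Hence $A_t^\pi \cap \LdiF{t} = \{X \in \LdiF{t} : \trans{S_t}X \ge \pi_t^S(\zero) \ \text{a.s.}\}$, which equals $\LdpK{\infty}{t}{H}$ for the $\Ft{t}$-measurable random closed set $H(\omega) := \{x \in \R^d : \trans{S_t(\omega)}x \ge \pi_t^S(\zero)(\omega)\}$; by the uniqueness of the random-set representation of a nonempty closed $\Ft{t}$-decomposable subset of $\LdiF{t}$ (note $\pi_t^S$ has the local property since its defining family over $\QQ^e(S)$ is $\Ft{t}$-decomposable) we may identify $\tilde A_t^\pi = H$ $\P$-a.s. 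Thus $\tilde A_t^\pi(\omega)$ is a closed half-space on $\{\pi_t^S(\zero) \in \R\}$ and all of $\R^d$ on $\{\pi_t^S(\zero) = -\infty\}$.

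With this in hand I would close the cycle $(\ref{prop:pi0-1}) \Rightarrow (\ref{prop:pi0-2}) \Rightarrow (\ref{prop:pi0-3}) \Rightarrow (\ref{prop:pi0-1})$. For $(\ref{prop:pi0-1}) \Rightarrow (\ref{prop:pi0-2})$ I mimic the pasting in the proof of Lemma~\ref{lemma_relevance}: the family $\{\beta_t(\Q,S) \;|\; \Q \in \QQ^e(S)\}$ is $\Ft{t}$-decomposable (paste equivalent martingale measures for the fixed $S$ along $\Ft{t}$-sets), so there is a sequence $\Q_n \in \QQ^e(S)$ with $\beta_t(\Q_n,S) \searrow \essinf_{\Q \in \QQ^e(S)}\beta_t(\Q,S) = -\pi_t^S(\zero)$ a.s.; setting $\tau := \min\{n \;|\; \beta_t(\Q_n,S) \le -\pi_t^S(\zero)+1\}$ (finite a.s.\ and $\Ft{t}$-measurable) and $\bar\Q := \sum_n 1_{\{\tau = n\}}\Q_n \in \QQ^e(S)$ yields $-\pi_t^S(\zero) \le \beta_t(\bar\Q,S) \le -\pi_t^S(\zero)+1$, hence $\beta_t(\bar\Q,S) \in \LiF{t}$ since $\pi_t^S(\zero) \in \LiF{t}$. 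For $(\ref{prop:pi0-2}) \Rightarrow (\ref{prop:pi0-3})$: if $\beta_t(\Q,S) \in \LiF{t}$ for some $\Q \in \QQ^e(S)$, then $\QQ^e(S) \neq \emptyset$ so $S$ satisfies \eqref{NA}, and the constant vector $X := -(\|\beta_t(\Q,S)\|_\infty+1)e_1 \in \LdiF{t}$ has $\trans{S_t}X = -\|\beta_t(\Q,S)\|_\infty-1 < -\beta_t(\Q,S) \le \pi_t^S(\zero)$ a.s., so $X \notin H(\omega) = \tilde A_t^\pi(\omega)$ for a.e.\ $\omega$, i.e.\ $\P(X \in \tilde A_t^\pi) = 0$. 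For $(\ref{prop:pi0-3}) \Rightarrow (\ref{prop:pi0-1})$: given $X \in \LdiF{t}$ with $\P(X \in \tilde A_t^\pi) = 0$, we get $X \notin H(\omega)$ for a.e.\ $\omega$, i.e.\ $\trans{S_t}X < \pi_t^S(\zero)$ a.s.\ (in particular $\pi_t^S(\zero) > -\infty$ a.s., since $H(\omega) = \R^d$ whenever $\pi_t^S(\zero)(\omega) = -\infty$); as $X \in \LdiF{t}$ and each $S_{t,i}$ is bounded, $\trans{S_t}X \in \LiF{t}$, so $\pi_t^S(\zero) \ge \trans{S_t}X \ge -\|\trans{S_t}X\|_\infty$ is bounded below, and together with $\pi_t^S(\zero) \le \rho_t(\zero) \in \LiF{t}$ this gives $\pi_t^S(\zero) \in \LiF{t}$.

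The main obstacle is the first paragraph, i.e.\ pinning down $\tilde A_t^\pi$: one must verify the local property of $\pi_t^S$, handle the value $-\infty$ with care, and invoke the uniqueness of the random closed set representing an $\Ft{t}$-decomposable subset of $\LdiF{t}$, since without this pointwise identification one only gets statements about membership in $\LdpK{\infty}{t}{\tilde A_t^\pi}$ whereas $(\ref{prop:pi0-3})$ is the pointwise condition $\P(X \in \tilde A_t^\pi) = 0$. By contrast the pasting in $(\ref{prop:pi0-1}) \Rightarrow (\ref{prop:pi0-2})$ is routine given the analogous arguments in the proof of Lemma~\ref{lemma_relevance}, the only extra point being that a vertical paste of equivalent martingale measures for the fixed price process $S$ is again in $\QQ^e(S)$. (The identity $A_t^\pi = \cl(A_t + \sum_{s=t}^T \Gamma_s(S_s))$ quoted in $(\ref{prop:pi0-3})$ follows from the robust representation of $\pi_t^S$ together with a Hahn--Banach separation, after observing that $\beta_t(\cdot,S)$ is unchanged when $A_t$ is enlarged to $A_t + \sum_{s=t}^T \Gamma_s(S_s)$ because $\EQt{\trans{S_T}\gamma}{t} \ge 0$ for $\gamma \in \Gamma_s(S_s)$ with $s \ge t$; it is not needed for these equivalences and can be recorded separately.)
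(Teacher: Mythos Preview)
Your proof is correct and, for the equivalence $(\ref{prop:pi0-1})\Leftrightarrow(\ref{prop:pi0-2})$, essentially identical to the paper's (the same pasting via an $\Ft{t}$-measurable stopping time, bracketed between $-\pi_t^S(\zero)$ and $-\pi_t^S(\zero)+\epsilon$).

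For the link with $(\ref{prop:pi0-3})$ you take a genuinely different route. You first pin down $\tilde A_t^\pi$ explicitly as the $\Ft{t}$-measurable half-space $H(\omega)=\{x:\trans{S_t(\omega)}x\ge\pi_t^S(\zero)(\omega)\}$ (all of $\R^d$ on $\{\pi_t^S(\zero)=-\infty\}$), using cash invariance to compute $A_t^\pi\cap\LdiF{t}=\LdpK{\infty}{t}{H}$ and then invoking uniqueness of the random-set representation of a closed $\Ft{t}$-decomposable subset. With that in hand, $(\ref{prop:pi0-2})\Rightarrow(\ref{prop:pi0-3})$ and $(\ref{prop:pi0-3})\Rightarrow(\ref{prop:pi0-1})$ become one-line computations: the pointwise condition $\P(X\in\tilde A_t^\pi)=0$ reads directly as the strict inequality $\trans{S_t}X<\pi_t^S(\zero)$ a.s., from which the lower bound on $\pi_t^S(\zero)$ is immediate. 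The paper instead proves $(\ref{prop:pi0-3})\Rightarrow(\ref{prop:pi0-2})$ by a conditional separating-hyperplane argument that directly produces a $\Q\in\QQ^e(S)$ with $\beta_t(\Q,S)<-\trans{S_t}X$, never explicitly identifying $\tilde A_t^\pi$; and for $(\ref{prop:pi0-2})\Rightarrow(\ref{prop:pi0-3})$ it takes the $\Ft{t}$-measurable witness $-(\beta_t(\Q,S)+\epsilon)e_1$ rather than your constant $-(\|\beta_t(\Q,S)\|_\infty+1)e_1$. Your approach is more elementary and transparent once the half-space identification is secured (and you are right that this identification, together with the uniqueness of the random-set representative, is the only real work); the paper's separation argument stays closer to the abstract duality machinery and avoids committing to a particular representative of $\tilde A_t^\pi$. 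Both are valid.
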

\begin{proof}
First we will show that (\ref{prop:pi0-1}.) is equivalent to (\ref{prop:pi0-2}.), then we will show this is equivalent to (\ref{prop:pi0-3}.).
\begin{enumerate}
\item First we will assume there exists some $\Q \in \QQ^e(S)$ such that $\beta_t(\Q,S) \in \LiF{t}$. For such a $\Q$ and for $\rho_t$ defined by the same acceptance set $A_t$,
\[\LiF{t} \ni -\beta_t(\Q,S) \leq \pi_t^S(\zero) \leq \rho_t(\zero) \in \LiF{t}.\]
Now we wish to prove the converse. We can trivially deduce a lower bound $\beta_t(\Q,S) \geq -\pi_t^S(\zero) \in \LiF{t}$ for every $\Q \in \QQ^e(S)$.  To prove the upper bound we first note that $\{\beta_t(\Q,S) \; | \; \Q \in \QQ^e(S)\}$ is an $\Ft{t}$-decomposable set. This implies there exists a sequence $(\Q_n)_{n \in \N} \subseteq \QQ^e(S)$ such that $-\beta_t(\Q_n,S) \nearrow \pi_t^S(\zero)$ almost surely.  Let us define $\tau^\epsilon := \min\{n \in \N \; | \; -\beta_t(\Q_n,S) + \epsilon > \pi_t^S(\zero)\} < \infty$ almost surely.  Then $\tilde\Q := \sum_{n = 1}^\infty 1_{\{\tau^\epsilon = n\}} \Q_n \in \QQ^e(S)$ (trivially utilizing monotone convergence) and 
\[\beta_t(\tilde\Q,S) \leq \sum_{n = 1}^\infty 1_{\{\tau^\epsilon = n\}} \beta_t(\Q_n,S) < -\pi_t^S(\zero) + \epsilon \in \LiF{t}.\]
\item First, $\tilde A_t^\pi$ exists by~\cite[Theorem 2.1.6]{M05} since $A_t^\pi$ is a nonempty, closed, and $\Ft{t}$-decomposable.

Given the existence of $\tilde A_t^\pi$, first we will assume that $S$ satisfies \eqref{NA} and there exists some $X \in \LdiF{t}$ such that $\P\lrparen{X \in \tilde A_t^\pi} = 0$.  Fix $X \in \LdiF{t}$ such that $\P\lrparen{X \in \tilde A_t^\pi} = 0$, then there exists some $\Q \in \QQ^e(S)$ (where $\QQ^e(S) \neq \emptyset$ by Proposition~\ref{prop:NA}) such that $\beta(\Q,S) = \esssup_{Z \in A_t} -\EQt{\trans{S_T}Z}{t} < -\trans{S_t}X \in \LiF{t}$ by a conditional separating hyperplane argument.
Additionally, $\beta_t(\Q,S) \geq -\rho_t(\zero) \in \LiF{t}$ for any $\Q \in \QQ^e(S)$ by construction.
Conversely, $\QQ^e(S) \neq \emptyset$ implies the prices $S$ satisfy \eqref{NA} by Proposition~\ref{prop:NA}.  And given $\beta_t(\Q,S) = \esssup_{Z \in A_t} -\EQt{\trans{S_T}Z}{t} \in \LiF{t}$, we can immediately conclude that $-\lrparen{\beta_t(\Q,S) + \epsilon}e_1 \in \LdiF{t}$ for any $\epsilon > 0$ and $\P\lrparen{-\lrparen{\beta_t(\Q,S) + \epsilon}e_1 \in \tilde A_t^\pi} = 0$.
\end{enumerate}
\end{proof}

\begin{proposition}
\label{prop:pi}
Fix $S \in \SS$ satisfying \eqref{NA}, the following properties hold for $\pi_t^S$:
\begin{enumerate}
\item \emph{Conditional cash invariance}: $\pi_t^S(X + m) = \pi_t^S(X) - \trans{S_t}m$ for any $X \in \LdiF{}$ and $m \in \LdiF{t}$;
\item \emph{Monotonicity}: $\pi_t^S(X) \leq \pi_t^S(Y)$ if $\trans{S_T}X \geq \trans{S_T}Y$ almost surely;
\item \emph{Conditional convexity}: $\pi_t^S(\lambda X + (1-\lambda)Y) \leq \lambda \pi_t^S(X) + (1-\lambda)\pi_t^S(Y)$ for every $\lambda \in \LpK{\infty}{t}{[0,1]}$;
\item \emph{Value consistent}: $\pi_t^S(X) = \pi_t^S([\trans{S_T}X]e_1)$ for any $X \in \LdiF{}$.
\end{enumerate}
If $A_t$ is additionally a conditional cone then $\pi_t^S$ is conditional positive homogeneous.
\end{proposition}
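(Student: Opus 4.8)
The plan is to verify each listed property directly from the defining dual representation~\eqref{eq:pi}, using only three facts: that $S$ is a $\Q$-martingale for every $\Q \in \QQ^e(S)$ (so that $\EQt{\trans{S_T}m}{s} = \trans{S_s}m$ for bounded $\Ft{s}$-measurable $m$), the normalization $S_{T,1} \equiv 1$, and the $\Ft{t}$-decomposability of $\QQ^e(S)$ and of the associated family $\{\beta_t(\Q,S) \; | \; \Q \in \QQ^e(S)\}$. Since each component $S_{T,i}$ is bounded and $X \in \LdiF{}$, the random variable $\trans{S_T}X$ lies in $\LiF{}$, so every conditional expectation appearing below is of a bounded random variable and all manipulations are legitimate.

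For conditional cash invariance, fix $X \in \LdiF{}$ and $m \in \LdiF{t}$; the martingale property gives $\EQt{\trans{S_T}(X+m)}{t} = \EQt{\trans{S_T}X}{t} + \trans{S_t}m$ for every $\Q \in \QQ^e(S)$, and pulling the $\Ft{t}$-measurable term $\trans{S_t}m$ out of the essential supremum in~\eqref{eq:pi} gives $\pi_t^S(X+m) = \pi_t^S(X) - \trans{S_t}m$. For monotonicity, $\trans{S_T}X \geq \trans{S_T}Y$ $\P$-a.s.\ implies $-\EQt{\trans{S_T}X}{t} \leq -\EQt{\trans{S_T}Y}{t}$ for every $\Q$, so the term in~\eqref{eq:pi} for $X$ is dominated $\Q$-by-$\Q$ by that for $Y$, and the inequality passes to the essential supremum. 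Value consistency is immediate from $\trans{S_T}([\trans{S_T}X]e_1) = [\trans{S_T}X]\,S_{T,1} = \trans{S_T}X$ (using $S_{T,1} \equiv 1$, and noting $[\trans{S_T}X]e_1 \in \LdiF{}$ because $\trans{S_T}X \in \LiF{}$), since~\eqref{eq:pi} depends on $X$ only through $\trans{S_T}X$. For conditional convexity, $\Ft{t}$-measurability of $\lambda \in \LpK{\infty}{t}{[0,1]}$ gives $\EQt{\trans{S_T}(\lambda X + (1-\lambda)Y)}{t} = \lambda \EQt{\trans{S_T}X}{t} + (1-\lambda)\EQt{\trans{S_T}Y}{t}$, so each term in the essential supremum defining $\pi_t^S(\lambda X + (1-\lambda)Y)$ equals $\lambda(-\beta_t(\Q,S) - \EQt{\trans{S_T}X}{t}) + (1-\lambda)(-\beta_t(\Q,S) - \EQt{\trans{S_T}Y}{t}) \leq \lambda \pi_t^S(X) + (1-\lambda)\pi_t^S(Y)$, and taking the essential supremum over $\Q$ preserves this bound.

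For conditional positive homogeneity when $A_t$ is a conditional cone, I would first argue (as in the coherent case behind Corollary~\ref{cor_dual_e1_coherent}, cf.\ Corollary~2.5 of \cite{FP06}) that $A_t$ being a closed, $\Ft{t}$-decomposable convex cone forces $\beta_t(\Q,S) \in \{0,+\infty\}$ $\P$-a.s.\ pointwise for each $\Q \in \QQ^e(S)$: indeed $\zero \in A_t$ gives $\beta_t(\Q,S) \geq 0$, while if $\EQt{\trans{S_T}Z}{t} < 0$ on an $\Ft{t}$-set of positive probability for some $Z \in A_t$, then rescaling $Z$ on that set (legitimate since $A_t$ is a decomposable cone) drives the essential supremum defining $\beta_t$ to $+\infty$ there. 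Consequently $\pi_t^S(X) = \esssup_{\Q \in \QQ^{e,\rho}(S)} -\EQt{\trans{S_T}X}{t}$ with $\QQ^{e,\rho}(S) := \{\Q \in \QQ^e(S) \; | \; \beta_t(\Q,S) = 0 \sas\}$, the restriction of the index set being justified by $\Ft{t}$-decomposability; then for $\Ft{t}$-measurable $\lambda \geq 0$ one has $\pi_t^S(\lambda X) = \esssup_{\Q \in \QQ^{e,\rho}(S)} (-\lambda \EQt{\trans{S_T}X}{t}) = \lambda \esssup_{\Q \in \QQ^{e,\rho}(S)} (-\EQt{\trans{S_T}X}{t}) = \lambda \pi_t^S(X)$, where pulling the nonnegative $\Ft{t}$-measurable $\lambda$ through the essential supremum uses that the family is upward directed.

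I expect none of these steps to be deep; the points that need genuine care, rather than the naive term-by-term manipulation, are the $\Ft{t}$-conditional essential-supremum operations (commuting $\esssup$ with $\lambda$-convex combinations, and pulling out nonnegative $\Ft{t}$-measurable factors), which rely on the relevant families being upward directed and $\Ft{t}$-decomposable, and the bookkeeping for the degenerate case $\pi_t^S \equiv -\infty$ --- which in the cone setting is precisely $\QQ^{e,\rho}(S) = \emptyset$, and in which the homogeneity identity is to be read in $[-\infty,+\infty)$ with the usual sign conventions. Proposition~\ref{prop:pi0} records exactly when finiteness, hence the non-degenerate statement, holds.
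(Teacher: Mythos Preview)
Your proposal is correct and follows precisely the approach indicated in the paper: the paper's own proof is a single sentence stating that all properties follow trivially from the definition of $\pi_t^S$, the martingale identity $S_t = \EQt{S_T}{t}$ for $\Q \in \QQ^e(S)$, and $S_{T,1} \equiv 1$. Your write-up simply unpacks each of these ingredients explicitly, including the $\{0,+\infty\}$ dichotomy for $\beta_t(\Q,S)$ in the conditional-cone case, which the paper leaves implicit.
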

\begin{proof}
Trivially from the definition of $\pi_t^S$ and that $S_t = \EQt{S_T}{t}$ for any $\Q \in \QQ^e(S)$ and $S_{T,1} = 1$ almost surely.
\end{proof}

The following proposition provides a necessary and sufficient condition for $\pi_t^S(X) \in \LiF{t}$ for every $X \in \LdiF{}$.  Proposition~\ref{prop:pi0} relates this condition to the no-arbitrage condition~\eqref{NA} for the frictionless prices $S$.
\begin{proposition}
\label{prop:pi_inf}
Let $S \in \SS$ and $X \in \LdiF{}$ then, for any time $t$, $\pi_t^S(X) \in \LiF{t}$ if and only if  $\pi_t^S(\zero) \in \LiF{t}$ for any time $t$.
\end{proposition}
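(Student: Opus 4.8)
The plan is to prove the sharper quantitative fact that $|\pi_t^S(X) - \pi_t^S(\zero)| \le \|\trans{S_T}X\|_\infty$ with a finite right-hand side, from which the equivalence is immediate. I would begin by disposing of the degenerate case $\QQ^e(S) = \emptyset$: then $\pi_t^S(Y) = \esssup_{\Q \in \emptyset}(\cdots) = -\infty$ for every $Y \in \LdiF{}$, so neither $\pi_t^S(X)$ nor $\pi_t^S(\zero)$ lies in $\LiF{t}$ and the claimed equivalence holds vacuously. Hence I may assume $\QQ^e(S) \ne \emptyset$, which by Proposition~\ref{prop:NA} is the same as $S$ satisfying \eqref{NA}, so that Proposition~\ref{prop:pi} is available for $\pi_t^S$.

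Next I would set $c := \|\trans{S_T}X\|_\infty$. Since each component $S_{T,i}$ is essentially bounded by $\max\{\|e_i\|_{\K_T,0},1\} < \infty$ from the definition of $\SS$, and $X \in \LdiF{}$, the random variable $\trans{S_T}X$ is a finite sum of products of $\LiF{}$-functions; thus $c < \infty$ and $-c \le \trans{S_T}X \le c$ $\P$-a.s. Using $S_{T,1} \equiv 1$ this rewrites as $\trans{S_T}(-ce_1) \le \trans{S_T}X \le \trans{S_T}(ce_1)$ $\P$-a.s., so monotonicity of $\pi_t^S$ (Proposition~\ref{prop:pi}) yields $\pi_t^S(ce_1) \le \pi_t^S(X) \le \pi_t^S(-ce_1)$. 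Because $\pm ce_1 \in \LdiF{t}$ and $\trans{S_t}e_1 = S_{t,1} \equiv 1$, conditional cash invariance gives $\pi_t^S(\pm ce_1) = \pi_t^S(\zero) \mp c$, whence
\[\pi_t^S(\zero) - c \le \pi_t^S(X) \le \pi_t^S(\zero) + c \qquad \P\text{-}\as\]
These inequalities are read in $[-\infty,+\infty]$; the right-hand side never equals $+\infty$ since $\pi_t^S(\cdot) \le \rho_t(\cdot) \in \LiF{t}$. (The same sandwich could instead be extracted straight from~\eqref{eq:pi} using $|\EQt{\trans{S_T}X}{t}| \le c$ for every $\Q \in \QQ^e(S)$, which holds because $Y \mapsto \EQt{Y}{t} = \Et{\bar\xi_{t,T}(\Q)Y}{t}$ is positivity preserving with $\Et{\bar\xi_{t,T}(\Q)}{t} = 1$ $\P$-a.s.)

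Finally I would read off the equivalence: both $\pi_t^S(X)$ and $\pi_t^S(\zero)$ are $\Ft{t}$-measurable by construction, so membership in $\LiF{t}$ amounts to $\P$-essential boundedness. If $\pi_t^S(\zero) \in \LiF{t}$, the displayed sandwich forces $-\|\pi_t^S(\zero)\|_\infty - c \le \pi_t^S(X) \le \|\pi_t^S(\zero)\|_\infty + c$ $\P$-a.s., so $\pi_t^S(X) \in \LiF{t}$; the reverse implication is identical with $X$ and $\zero$ interchanged. I do not anticipate a genuine obstacle here — the whole argument is the elementary sandwich above — the only points needing care being the empty index set (dealt with first) and the extended-real bookkeeping in the monotonicity and cash-invariance steps.
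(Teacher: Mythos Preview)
Your proof is correct and follows essentially the same sandwich argument as the paper: bound $\pi_t^S(X)$ above and below by $\pi_t^S(\zero)$ shifted by a finite constant via monotonicity and conditional cash invariance. The only cosmetic difference is that the paper uses the componentwise vector $\|X\| := (\|X_1\|_\infty,\dots,\|X_d\|_\infty)^{\mathsf T}$ and obtains the bound $\trans{S_t}\|X\|$, whereas you work with the scalar $c = \|\trans{S_T}X\|_\infty$ and the direction $e_1$; your explicit treatment of the case $\QQ^e(S)=\emptyset$ is an extra bit of care that the paper leaves implicit.
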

\begin{proof}
Let $\|X\| = \transp{\|X_1\|_\infty,\dots,\|X_d\|_\infty}$ and assume $\pi_t^S(\zero) \in \LiF{t}$. 
\begin{align*}
\pi_t^S(X) &\leq \pi_t^S(-\|X\|) = \pi_t^S(\zero) + \trans{S_t}\|X\| \in \LiF{t}\\
\pi_t^S(X) &\geq \pi_t^S(\|X\|) = \pi_t^S(\zero) - \trans{S_t}\|X\| \in \LiF{t}.
\end{align*}
The converse direction is trivial from the same inequalities.
\end{proof}

For the remainder of this work, we will use the notation 
\[\SS_t := \{S \in \SS \; | \; \pi_t^S(\zero) \in \LiF{t}\},\] 
i.e., $S \in \SS_t$ if and only if $S$ satisfies the no-arbitrage condition \eqref{NA} and there exists some $X \in \LdiF{}$ such that $\P\lrparen{X \in \tilde A_t^\pi} = 0$.

The following proposition shows the decomposition of the dual representation \eqref{eq_dual_e1} of $\rho_t$ into two components based on the two dual variables $(\Q,S)$.  $\pi_t^S(X)$, as defined in \eqref{eq:pi}, runs over the dual variables $\Q$ only.  But, it is shown in Proposition~\ref{prop:pi_to_rho} below, it is enough to take the essential supremum with respect to $S$ over the set $\SS_t$ only. This means it is enough to consider in the dual representation of $\rho_t$ just those $\pi_t^S(X)$ that map into $\LiF{t}$, i.e., to consider those frictionless prices that satisfy the no-arbitrage condition~\eqref{NA}.  This will be of particular importance when considering the new time consistency concept presented in Sections~\ref{sec:pi-tc} and~\ref{sec:mptc}.

\begin{proposition}
\label{prop:pi_to_rho} 
For any time $t$ and any $X \in \LdiF{}$ 
\[\rho_t(X) = \esssup_{S \in \SS_t} \pi_t^S(X).\]
\end{proposition}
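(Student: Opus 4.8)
The plan is to prove the two inequalities separately, using Theorem~\ref{thm_dual_e1} as the anchor on one side and Proposition~\ref{prop:pi_inf} together with an $\Ft{t}$-decomposability/exhaustion argument on the other. For the ``$\geq$'' direction, fix $S \in \SS_t$. By definition of $\SS_t$ we have $\pi_t^S(\zero) \in \LiF{t}$, so Proposition~\ref{prop:pi0} (equivalence of (1) and (3)) tells us $S$ satisfies \eqref{NA} and hence $\QQ^e(S) \neq \emptyset$ by Proposition~\ref{prop:NA}. For every $\Q \in \QQ^e(S)$ we have $(\Q,S) \in \QQ_0^e \subseteq \QQ_0 \subseteq \QQ_t$ (here I would double-check the time-indexing: membership in $\QQ_0$ restricts all the price/martingale conditions to $s = 0,\dots,T$, which in particular covers $s = t,\dots,T$, so $(\Q,S) \in \QQ_t$), so each term $-\beta_t(\Q,S) - \EQt{\trans{S_T}X}{t}$ appearing in the essential supremum defining $\pi_t^S(X)$ also appears in the essential supremum of Theorem~\ref{thm_dual_e1} defining $\rho_t(X)$. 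Taking the essential supremum over $\Q \in \QQ^e(S)$ gives $\pi_t^S(X) \leq \rho_t(X)$, and then taking the essential supremum over $S \in \SS_t$ yields $\esssup_{S \in \SS_t} \pi_t^S(X) \leq \rho_t(X)$.

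For the ``$\leq$'' direction, the point is that in the dual representation $\rho_t(X) = \esssup_{(\Q,S) \in \QQ_t}(-\beta_t(\Q,S) - \EQt{\trans{S_T}X}{t})$ we may restrict attention to those $(\Q,S)$ with $\E{\beta_t(\Q,S)} < +\infty$ (pairs with infinite penalty contribute $-\infty$ to the expectation after interchanging $\E{}$ and $\esssup$ via $\Ft{t}$-decomposability, exactly as in the proof of Proposition~\ref{prop_equivalent-prob}) and, moreover, to those with $\Q$ equivalent, by the convex-combination trick of Proposition~\ref{prop_equivalent-prob}: replace $(\Q,S)$ by $(\Q_\epsilon,S_\epsilon) := (1-\epsilon)(\Q,S) + \epsilon(\tilde\Q,\tilde S)$ for a fixed reference pair $(\tilde\Q,\tilde S) \in \QQ_t^e$ with finite penalty (which exists by Lemma~\ref{lemma_relevance} and Assumption~\ref{ass:fatou}), note $\beta_t(\Q_\epsilon,S_\epsilon) \leq (1-\epsilon)\beta_t(\Q,S) + \epsilon\beta_t(\tilde\Q,\tilde S)$, and let $\epsilon \to 0$. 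So $\rho_t(X) = \esssup\{-\beta_t(\Q,S) - \EQt{\trans{S_T}X}{t} \mid (\Q,S) \in \QQ_t^e,\ \E{\beta_t(\Q,S)} < +\infty\}$. Now for any such $(\Q,S)$, the condition $\E{\beta_t(\Q,S)} < +\infty$ forces $\beta_t(\Q,S) \in \LiF{t}$ (it is bounded below by $-\rho_t(\zero) \in \LiF{t}$ and its expectation is finite, so it lies in $\LoF{t}$; to upgrade to $\LiF{t}$ I would invoke Proposition~\ref{prop:pi0}: having a single $\Q \in \QQ^e(S)$ with $\beta_t(\Q,S)$ integrable-and-bounded-below... — here I would instead argue directly that $\E{\beta_t(\Q,S)} < \infty$ together with $\Ft{t}$-decomposability lets one extract, via the exhaustion/stopping-index construction in part~1 of the proof of Proposition~\ref{prop:pi0}, a $\Q' \in \QQ^e(S)$ with $\beta_t(\Q',S) \in \LiF{t}$, hence by Proposition~\ref{prop:pi0} condition (2) $\Rightarrow$ (1), $\pi_t^S(\zero) \in \LiF{t}$, i.e.\ $S \in \SS_t$). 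Therefore every pair $(\Q,S)$ surviving the reduction has $S \in \SS_t$ and $\Q \in \QQ^e(S)$, so $-\beta_t(\Q,S) - \EQt{\trans{S_T}X}{t} \leq \pi_t^S(X) \leq \esssup_{S' \in \SS_t} \pi_t^{S'}(X)$; taking the essential supremum over the reduced set gives $\rho_t(X) \leq \esssup_{S \in \SS_t}\pi_t^S(X)$.

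The main obstacle I anticipate is the passage, for a pair $(\Q,S) \in \QQ_t^e$ with merely $\E{\beta_t(\Q,S)} < +\infty$, to the conclusion $S \in \SS_t$ — i.e.\ going from ``$\beta_t(\Q,S)$ integrable for this particular $\Q$'' to ``there is some $\Q'$ with $\beta_t(\Q',S) \in \LiF{t}$''. This is precisely where one must reuse the $\Ft{t}$-decomposability of $\{\beta_t(\cdot,S)\}$ and the monotone-exhaustion/stopping-index pasting from the proof of Proposition~\ref{prop:pi0}(1)$\Leftrightarrow$(2), so the cleanest writeup will simply cite that proposition: $\E{\beta_t(\Q,S)}<\infty$ gives $-\beta_t(\Q,S)\le\pi_t^S(\zero)$ with the left side in $\LoF{t}$ and, by the decomposability-plus-pasting argument, $\pi_t^S(\zero)$ itself is (a.s.\ finite and) in $\LiF{t}$, so $S\in\SS_t$. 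A secondary point to handle carefully is the interchange of $\E{}$ and $\esssup$ in both directions, which is legitimate because the relevant families are $\Ft{t}$-decomposable; I would state this once and refer to the proof of Proposition~\ref{prop_equivalent-prob} rather than repeat it. Everything else is bookkeeping with the inclusions $\QQ_0^e \subseteq \QQ_t$ and the definitions.
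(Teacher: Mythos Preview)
Your ``$\geq$'' direction is essentially the paper's argument (the paper is slightly more explicit about the restriction/extension $(\Q,(S_s)_{s=0}^T)\leftrightarrow(\Q^t,(S_s)_{s=t}^T)$, but the content is the same).

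The ``$\leq$'' direction has a genuine gap at exactly the point you flag. You want to show that every $(\Q,S)\in\QQ_t^e$ with $\E{\beta_t(\Q,S)}<\infty$ has (after extending $S$ to $[0,T]$) $S\in\SS_t$, i.e.\ $\pi_t^S(\zero)\in\LiF{t}$. From $\E{\beta_t(\Q,S)}<\infty$ together with $\beta_t(\Q,S)\geq -\rho_t(\zero)$ you only get $\beta_t(\Q,S)\in\LoF{t}$, hence $\pi_t^S(\zero)\geq -\beta_t(\Q,S)$ is a.s.\ finite but not necessarily essentially bounded below; Proposition~\ref{prop:pi0} does not upgrade this. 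Your proposed fix---use decomposability of $\{\beta_t(\cdot,S)\}$ over $\QQ^e(S)$ and reuse the exhaustion/pasting from Proposition~\ref{prop:pi0}---fails for two reasons. First, that pasting \emph{assumes} $\pi_t^S(\zero)\in\LiF{t}$ in order to bound the pasted penalty (it goes $(1)\Rightarrow(2)$, not the other way), so invoking it here is circular. Second, for a \emph{fixed} full-length $S\in\SS$, the set $\QQ^e(S)=\{\Q:(\Q,S)\in\QQ_0^e\}$ is in general \emph{not} $\Ft{t}$-decomposable: pasting two equivalent martingale measures for $S$ along $D\in\Ft{t}$ need not preserve the martingale property of $S$ on $[0,t)$.

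The paper's argument avoids fixing $S$ altogether. It first establishes $\rho_t(X)=\esssup_{S\in\SS}\pi_t^S(X)$ and then sandwiches $\esssup_{S\in\SS_t}\pi_t^S(X)$ between $\rho_t(X)$ and $\esssup\{-\beta_t(\Q,S)-\EQt{\trans{S_T}X}{t}:(\Q,S)\in\QQ_t^e,\ \beta_t(\Q,S)\in\LiF{t}\}$. For the remaining inequality it uses $\Ft{t}$-decomposability of the \emph{whole} family $\{-\beta_t(\Q,S)-\EQt{\trans{S_T}X}{t}:(\Q,S)\in\QQ_t^e\}$ (pasting both coordinates, not just $\Q$) to extract a sequence $(\Q_n,S_n)$ increasing to $\rho_t(X)$, then pastes along a stopping index $\tau^\epsilon$ to obtain a single $(\tilde\Q,\tilde S)\in\QQ_t^e$ with $-\rho_t(\zero)\leq\beta_t(\tilde\Q,\tilde S)\leq -\rho_t(X)-\EtQt{\trans{\tilde S_T}X}{t}+\epsilon\in\LiF{t}$. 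This one $\tilde S$ (extended) lies in $\SS_t$, and letting $\epsilon\to 0$ finishes. The moral: paste over $(\Q,S)$ jointly to produce a \emph{single} $\tilde S\in\SS_t$ approximating the supremum, rather than trying to force each individual $S$ into $\SS_t$.
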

\begin{proof}
Let $S \in \SS$ then $\Q \in \QQ^e(S)$ implies $(\Q^t,(S_s)_{s = t}^T) \in \QQ_t^e$ where $\frac{d\Q^t}{d\P} := \bar\xi_{t,T}(\Q)$.  By the definition of $\Q^t$ and the $\P$-almost sure definition of the conditional expectation, we recover that $-\beta_t(\Q,S) - \EQt{\trans{S_T}X}{t} = -\beta_t(\Q^t,S) - \EPt{\Q^t}{\trans{S_T}X}{t}$ for any $X \in \LdiF{}$.  
Conversely, let $(\Q,(S_s)_{s = t}^T) \in \QQ_t^e$ and define $\hat S \in \SS$ by $\hat S_s := \EQt{S_T}{s} = \begin{cases} S_s &\text{if } s \geq t \\ \Et{S_t}{s} &\text{if } s < t\end{cases}$.  Immediately it is clear that $\Q \in \QQ^e(\hat S)$.  By definition of $\hat S$, we recover that $-\beta_t(\Q,S) - \EQt{\trans{S_T}X}{t} = -\beta_t(\Q,\hat S) - \EQt{\trans{\hat S_T}X}{t}$ for any $X \in \LdiF{}$.

Using this result, we determine the equality of the first and second lines below:
\begin{align*}
\rho_t(X) &= \esssup_{(\Q,S) \in \QQ_t^e} \lrparen{-\beta_t(\Q,S) - \EQt{\trans{S_T}X}{t}}\\
&= \esssup_{S \in \SS} \esssup_{\Q \in \QQ^e(S)} \lrparen{-\beta_t(\Q,S) - \EQt{\trans{S_T}X}{t}}\\ 
&= \esssup_{S \in \SS} \pi_t^S(X).
\end{align*}

From this and Proposition~\ref{prop:pi0} we immediately find 
\[\rho_t(X) \geq \esssup_{\substack{S \in \SS\\ \pi_t^S(\zero) \in \LiF{t}}} \pi_t^S(X) \geq \esssup_{\substack{(\Q,S) \in \QQ_t^e\\ \beta_t(\Q,S) \in \LiF{t}}} \lrparen{-\beta_t(\Q,S) - \EQt{\trans{S_T}X}{t}}.\]
We now wish to show the reverse inequality.  Since $\{-\beta_t(\Q,S) - \EQt{\trans{S_T}X}{t} \; | \; (\Q,S) \in \QQ_t^e\}$ is $\Ft{t}$-decomposable we can find a sequence $(\Q_n,S_n)_{n \in \N} \subseteq \QQ_t^e$ such that $-\beta_t(\Q_n,S_n) - \EQnt{n}{\trans{S_{n,T}}X}{t} \nearrow \rho_t(X) \in \LiF{t}$ almost surely.  Define $\tau^\epsilon := \min\{n \in \N \; | \; \beta_t(\Q_n,S_n) + \EQnt{n}{\trans{S_{n,T}}X}{t} < -\rho_t(X) + \epsilon\} < \infty$ almost surely for any $\epsilon > 0$, and let $(\tilde\Q,\tilde{S}) := \sum_{n = 1}^\infty 1_{\{\tau^\epsilon = n\}} (\Q_n,S_n) \in \QQ_t^e$ (trivially using monotone convergence).  Then
\[\LiF{t} \ni -\rho_t(\zero) \leq \beta_t(\tilde\Q,\tilde{S}) \leq -\rho_t(X) - \EtQt{\trans{\tilde{S}_T}X}{t} + \epsilon \in \LiF{t}.\]
Since such a $(\tilde\Q,\tilde{S}) \in \QQ_t^e$ can be found for any $\epsilon > 0$, a sequence of those can be used to get convergence as $\epsilon \to 0$ and the proof is complete.
\end{proof}

\subsection{Relation to univariate scalar risk measures}\label{sec:pi-phi}
In this section we wish to relate the terms $\seq{\pi^S}$ for a fixed $S \in \SS$ introduced in the prior section to the univariate dynamic risk measures as studied in, e.g., \cite{BN04,DS05,FP06,AP10}.  This relationship will then be used later to directly prove some properties for $\seq{\pi^S}$.

\begin{proposition}
\label{prop:phi}
Let $S \in \SS_t$, then $\phi_t^S: \LiF{} \to \LiF{t}$ defined by 
\begin{equation}
\label{eq_phi}
	\phi_t^S(Z) := \pi_t^S(Ze_1)
\end{equation}
for any $Z \in \LiF{}$ is a convex (coherent) univariate scalar risk measure.  Additionally, $\pi_t^S(X) = \phi_t^S(\trans{S_T}X)$ for any $X \in \LdiF{}$.
\end{proposition}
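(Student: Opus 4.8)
The plan is to verify that $\phi_t^S$ satisfies each defining axiom of a univariate convex (resp.\ coherent) risk measure by transporting the corresponding property of $\pi_t^S$ from Proposition~\ref{prop:pi} through the relation $\phi_t^S(Z) = \pi_t^S(Ze_1)$, using $S_{t,1} = S_{T,1} = 1$ throughout, and then to read off the identity $\pi_t^S(X) = \phi_t^S(\trans{S_T}X)$ from value consistency. The one point that genuinely needs the hypothesis $S \in \SS_t$ (rather than just $S \in \SS$) is $\LiF{t}$-valuedness: since $S \in \SS_t$ means $\pi_t^S(\zero) \in \LiF{t}$, Proposition~\ref{prop:pi_inf} gives $\pi_t^S(Y) \in \LiF{t}$ for every $Y \in \LdiF{}$, and applying this with $Y = Ze_1 \in \LdiF{}$ shows $\phi_t^S(Z) \in \LiF{t}$ for all $Z \in \LiF{}$.

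With well-definedness in hand, I would check the remaining axioms one at a time. For cash invariance, conditional cash invariance of $\pi_t^S$ and $S_{t,1} = 1$ give $\phi_t^S(Z+m) = \pi_t^S(Ze_1 + me_1) = \pi_t^S(Ze_1) - \trans{S_t}(me_1) = \phi_t^S(Z) - m$ for $m \in \LiF{t}$. For monotonicity, if $Z \leq Z'$ $\P$-a.s.\ then, since $S_{T,1}=1$, $\trans{S_T}(Z'e_1) = Z' \geq Z = \trans{S_T}(Ze_1)$, so the monotonicity of $\pi_t^S$ yields $\phi_t^S(Z') \leq \phi_t^S(Z)$. Convexity is immediate from the conditional convexity of $\pi_t^S$ applied to $Ze_1$ and $Z'e_1$, using $(\lambda Z + (1-\lambda)Z')e_1 = \lambda(Ze_1) + (1-\lambda)(Z'e_1)$ for $\lambda \in \LpK{\infty}{t}{[0,1]}$; and finiteness at zero is just $\phi_t^S(0) = \pi_t^S(\zero) \in \LiF{t}$. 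For the coherent statement, when $A_t$ is a conditional cone Proposition~\ref{prop:pi} provides conditional positive homogeneity of $\pi_t^S$, whence $\phi_t^S(\lambda Z) = \pi_t^S(\lambda Ze_1) = \lambda\pi_t^S(Ze_1) = \lambda\phi_t^S(Z)$ for $\lambda \in \LdiK{t}{\R_+}$.

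Finally, for the identity $\pi_t^S(X) = \phi_t^S(\trans{S_T}X)$: the definition of $\SS$ makes $S_T$ bounded, so for $X \in \LdiF{}$ the scalar random variable $\trans{S_T}X$ lies in $\LiF{}$ and $\phi_t^S(\trans{S_T}X) = \pi_t^S([\trans{S_T}X]e_1)$; the value consistency property in Proposition~\ref{prop:pi} then identifies this with $\pi_t^S(X)$. I do not expect any real difficulty here --- the proof is essentially bookkeeping --- the only things to watch being the appeal to $S \in \SS_t$ for integrability and keeping the monotonicity inequality oriented correctly.
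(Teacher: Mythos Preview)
Your proposal is correct and follows essentially the same route as the paper: verify each axiom of a univariate risk measure by specializing the properties of $\pi_t^S$ from Proposition~\ref{prop:pi} to inputs of the form $Ze_1$, using $S_{t,1}=S_{T,1}=1$, and then obtain $\pi_t^S(X)=\phi_t^S(\trans{S_T}X)$ from value consistency. Your explicit appeal to Proposition~\ref{prop:pi_inf} for $\LiF{t}$-valuedness is slightly more detailed than the paper, which only records finiteness at zero and leaves the rest implicit in the other axioms, but this is a cosmetic rather than substantive difference.
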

\begin{proof}
First we will show that $\phi_t^S$ is a traditional scalar risk measure:
\begin{enumerate}
\item $\phi_t^S(0) = \pi_t^S(\zero) \in \LiF{t}$.
\item Let $Z \in \LiF{}$ and $m \in \LiF{t}$, then $\phi_t^S(Z + m) = \pi_t^S([Z + m]e_1) = \pi_t^S(Ze_1) - (\trans{S_t}e_1)m = \phi_t^S(Z) - m$.
\item Let $Z_1,Z_2 \in \LiF{}$ where $Z_1 \geq Z_2$ almost surely.  Then $Z_1e_1 \geq Z_2e_1$ almost surely, and by definition $\phi_t^S(Z_1) = \pi_t^S(Z_1e_1) \leq \pi_t^S(Z_2e_1) = \phi_t^S(Z_2)$.
\item Convexity and positive homogeneity follow immediately.
\end{enumerate}
The final statement follows by $\pi_t^S(X) = \pi_t^S([\trans{S_T}X]e_1) = \phi_t^S(\trans{S_T}X)$ for any $X \in \LdiF{}$.
\end{proof}

We will now consider the primal and dual representations for $\phi_t^S$, defined in \eqref{eq_phi}.  Afterwards we will consider the representations for the stepped risk measures $\phi_{t,s}^S$ for $t < s$.  These stepped risk measures are useful for equivalent properties of time consistency, as discussed in the next section.
\begin{corollary}
\label{cor:phi}
Let $S \in \SS_t$, then the primal and dual representations for $\phi_t^S$, defined in \eqref{eq_phi}, are given by:
\begin{enumerate}
\item $\phi_t^S(Z) = \essinf\lrcurly{m \in \LiF{t} \; | \; Z + m \in \trans{S_T}A_t^\pi}$ for any $Z \in \LiF{}$ where $A_t^\pi = \{X \in \LdiF{} \; | \; \pi_t^S(X) \leq 0\}$;
\item $\phi_t^S(Z) = \esssup_{\Q \in \QQ^e(S)} \lrparen{-\beta_t(\Q,S) - \EQt{Z}{t}}$ for any $Z \in \LiF{}$.
\end{enumerate}
\end{corollary}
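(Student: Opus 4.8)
The plan is to read off both representations directly from the definition~\eqref{eq:pi} of $\pi_t^S$ together with the structural properties of $\pi_t^S$ and $\phi_t^S$ already established in Propositions~\ref{prop:pi} and~\ref{prop:phi}; no additional machinery is needed.

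For the dual representation (item~2) I would specialize~\eqref{eq:pi} to the argument $Ze_1$. Since every $S \in \SS$ has $S_{T,1} \equiv 1$ almost surely, $\trans{S_T}(Ze_1) = S_{T,1}Z = Z$, hence $\EQt{\trans{S_T}(Ze_1)}{t} = \EQt{Z}{t}$ for every $\Q \in \QQ^e(S)$, and therefore
\[\phi_t^S(Z) = \pi_t^S(Ze_1) = \esssup_{\Q \in \QQ^e(S)}\lrparen{-\beta_t(\Q,S) - \EQt{Z}{t}}.\]
That is the whole argument for item~2.

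For the acceptance-set representation (item~1) the key input is value consistency, $\pi_t^S(X) = \phi_t^S(\trans{S_T}X)$ for all $X \in \LdiF{}$ (Proposition~\ref{prop:phi}), together with cash invariance $\phi_t^S(Z+m) = \phi_t^S(Z) - m$ for $m \in \LiF{t}$ (shown in the proof of Proposition~\ref{prop:phi}, using $\trans{S_t}e_1 = 1$). I would first prove the equivalence, for $Z \in \LiF{}$ and $m \in \LiF{t}$,
\[Z + m \in \trans{S_T}A_t^\pi \iff \phi_t^S(Z+m) \leq 0,\]
where $A_t^\pi = \{X \in \LdiF{} \mid \pi_t^S(X) \leq 0\}$. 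For ``$\Rightarrow$'': if $\trans{S_T}X = Z+m$ with $X \in A_t^\pi$, then $\phi_t^S(Z+m) = \phi_t^S(\trans{S_T}X) = \pi_t^S(X) \leq 0$. For ``$\Leftarrow$'': take the particular preimage $X := (Z+m)e_1 \in \LdiF{}$, for which $\trans{S_T}X = Z+m$ and $\pi_t^S(X) = \phi_t^S(\trans{S_T}X) = \phi_t^S(Z+m) \leq 0$, so $X \in A_t^\pi$. Combining this equivalence with cash invariance yields $Z+m \in \trans{S_T}A_t^\pi \iff \phi_t^S(Z) - m \leq 0 \iff m \geq \phi_t^S(Z)$; taking the essential infimum over $m \in \LiF{t}$ and using $\phi_t^S(Z) \in \LiF{t}$ (Proposition~\ref{prop:phi}, valid since $S \in \SS_t$) gives $\essinf\{m \in \LiF{t} \mid Z + m \in \trans{S_T}A_t^\pi\} = \phi_t^S(Z)$, which is item~1.

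Because the argument is elementary I do not expect a genuine obstacle; the only points requiring a moment of care are that the set $\{m \in \LiF{t} \mid Z + m \in \trans{S_T}A_t^\pi\}$ is nonempty (true because cash invariance forces $\phi_t^S(Z+m) \leq 0$, i.e.\ $(Z+m)e_1 \in A_t^\pi$, for every sufficiently large constant $m$, so the infimum lands in $\LiF{t}$ rather than being $-\infty$ or over an empty set), and that $\trans{S_T}X \in \LiF{}$ for $X \in \LdiF{}$ (true since $S_T$ has bounded components by definition of $\SS$), so that all expressions above are well defined.
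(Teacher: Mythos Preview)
Your proof is correct and is more direct than the paper's. Both arguments identify the acceptance set of $\phi_t^S$ with $\trans{S_T}A_t^\pi$, but you do it via value consistency $\pi_t^S(X)=\phi_t^S(\trans{S_T}X)$ alone, whereas the paper routes through the explicit description $A_t^\pi=\cl\!\bigl[A_t+\sum_{s=t}^T\Gamma_s(S_s)\bigr]$ from Proposition~\ref{prop:pi0}. For the dual representation you simply evaluate~\eqref{eq:pi} at $Ze_1$ and use $S_{T,1}\equiv 1$; the paper instead invokes the general univariate dual representation of $\phi_t^S$ over all of $\mathcal{M}$ with penalty $\beta_t^S(\Q)=\esssup_{Z\in A_t^\phi}-\EQt{Z}{t}$, and then computes $\beta_t^S(\Q)=\beta_t(\Q,S)$ for $\Q\in\QQ^e(S)$ and $\beta_t^S(\Q)=+\infty$ otherwise. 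What the paper's route buys is precisely this last piece of information: the intrinsic penalty of $\phi_t^S$ is infinite off $\QQ^e(S)$, so the essential supremum over $\QQ^e(S)$ really coincides with the full dual representation of the univariate risk measure $\phi_t^S$. Your argument delivers the stated corollary with less work but does not extract that identification; this is harmless for Corollary~\ref{cor:phi} itself, but the explicit form of $\beta_t^S$ is what the paper reuses in Corollary~\ref{cor:phi-stepped} and in Theorem~\ref{thm:pi-tc}.
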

\begin{proof}
First, since $\phi_t$ is a traditional scalar risk measure, we immediately know a primal and dual representation exist: for any $Z \in \LiF{}$
\begin{align*}
\phi_t^S(Z) &= \essinf\lrcurly{m \in \LiF{t} \; | \; Z + m \in A_t^\phi}\\
&= \esssup_{\Q \in \mathcal{M}} \lrparen{-\beta_t^S(\Q) - \EQt{Z}{t}}.
\end{align*}
where $A_t^\phi = \lrcurly{Z \in \LiF{} \; | \; \phi_t^S(Z) \leq 0}$ and $\beta_t^S(\Q) = \esssup_{Z \in A_t^\phi} -\EQt{Z}{t}$ for every $\Q \in \mathcal{M}$.
Therefore showing $A_t^\phi = \trans{S_T}A_t^\pi$ is sufficient to prove this result.  We note that $A_t^\pi = \ascl\lrsquare{A_t + \sum_{s = t}^T \Gamma_s(S_s)}$ in order to find
\begin{align}
\nonumber A_t^\phi &= \lrcurly{Z \in \LiF{} \; | \; \phi_t^S(Z) \leq 0}\\
\nonumber &= \lrcurly{Z \in \LiF{} \; | \; \pi_t^S(Ze_1) \leq 0}\\
\nonumber &= \lrcurly{Z \in \LiF{} \; | \; Ze_1 \in \ascl\lrsquare{A_t + \sum_{s = t}^T \Gamma_s(S_s)}}\\
\label{eq:cor:phi_proof1} &= \lrcurly{\trans{S_T}X \; | \; X \in \ascl\lrsquare{A_t + \sum_{s = t}^T \Gamma_s(S_s)}\cap M_T}\\
\label{eq:cor:phi_proof2} &= \lrcurly{\trans{S_T}X \; | \; X \in \ascl\lrsquare{A_t + \sum_{s = t}^T \Gamma_s(S_s)}}\\
\nonumber &= \trans{S_T}A_t^\pi. 
\end{align}
Equation~\eqref{eq:cor:phi_proof1} follows from $S_{T,1} = 1$ almost surely for any $S \in \SS_t$.  Equation~\eqref{eq:cor:phi_proof2} follows since $X \in A_t^\pi$ if and only if $[\trans{S_T}X]e_1 \in A_t^\pi$.

We conclude this proof with a consideration of the structure of the penalty function $\beta_t^S$.  Let $\Q \in \mathcal{M}$, then
\begin{align*}
\beta_t^S(\Q) &= \esssup_{Z \in A_t^\phi} -\EQt{Z}{t}\\
&= \esssup_{X \in A_t^\pi} -\EQt{\trans{S_T}X}{t}\\
&= \esssup_{X \in A_t} -\EQt{\trans{S_T}X}{t} + \sum_{s = t}^T \esssup_{k_s \in \Gamma_s(S_s)} -\EQt{\trans{S_T}k_s}{t}\\
&= \begin{cases} \beta_t(\Q,S) &\text{on } \{\Q \in \QQ^e(S)\}\\ \infty &\text{else}\end{cases}
\end{align*}
Since the dual representation is taken on the set with $\E{\beta_t^S(\Q)} < \infty$ this implies
\[\phi_t^S(Z) = \esssup_{\Q \in \QQ^e(S)} \lrparen{-\beta_t(\Q,S) - \EQt{Z}{t}}\] 
for any $Z \in \LiF{}$.
\end{proof}

\begin{corollary}
\label{cor:phi-stepped}
Let $S \in \SS_t$ and $t < s$. Define $\phi_{t,s}^S: \LiF{s} \to \LiF{t}$ as the restriction of the domain of $\phi_t^S$ to $\LiF{s}$.  Then primal and dual representations are defined via the acceptance set and penalty function for primal and dual representations given in:
\begin{enumerate}
\item $A_{t,s}^\phi = \trans{S_s}A_{t,s}^\pi$ where $A_{t,s}^\pi := A_t^\pi \cap \LdiF{s}$;
\item $\beta_{t,s}^S(\Q) = \esssup_{X \in A_{t,s}^\pi} -\EQt{\trans{S_s}X}{t}$ for any $\Q \in \QQ^e(S)$.
\end{enumerate}
\end{corollary}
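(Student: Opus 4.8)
The plan is to run the argument of the proof of Corollary~\ref{cor:phi} once more, using that $\phi_{t,s}^S$ is by construction the restriction of the traditional scalar conditional convex risk measure $\phi_t^S$ of Proposition~\ref{prop:phi} to the sub-domain $\LiF{s} \subseteq \LiF{}$.  Conditional cash invariance, monotonicity, conditional convexity, finiteness at zero and the Fatou property all survive restriction of the domain, so $\phi_{t,s}^S$ is again a traditional scalar conditional convex risk measure with the Fatou property and therefore admits a primal representation $\phi_{t,s}^S(Z) = \essinf\lrcurly{m \in \LiF{t} \;|\; Z + m \in A_{t,s}^\phi}$ through its acceptance set $A_{t,s}^\phi := \lrcurly{Z \in \LiF{s} \;|\; \phi_{t,s}^S(Z) \leq 0} = A_t^\phi \cap \LiF{s}$, and a dual representation $\phi_{t,s}^S(Z) = \esssup_{\Q}\lrparen{-\beta_{t,s}^S(\Q) - \EQt{Z}{t}}$ with minimal penalty $\beta_{t,s}^S(\Q) = \esssup_{Z \in A_{t,s}^\phi} -\EQt{Z}{t}$.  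It then only remains to recast $A_{t,s}^\phi$ and $\beta_{t,s}^S$ in terms of $A_{t,s}^\pi := A_t^\pi \cap \LdiF{s}$, just as Corollary~\ref{cor:phi} recast $A_t^\phi$ in terms of $A_t^\pi$.

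The one substantive observation is that $A_{t,s}^\pi$ is stable under the map $X \mapsto [\trans{S_s}X]e_1$.  Indeed, by Proposition~\ref{prop:pi0} we have $A_t^\pi = \ascl\lrparen{A_t + \sum_{u=t}^T \Gamma_u(S_u)}$, so $A_t^\pi$ absorbs the convex cone $\Gamma_s(S_s)$, i.e.\ $A_t^\pi + \Gamma_s(S_s) \subseteq A_t^\pi$; and since $S_{s,1} = 1$ we have $\trans{S_s}\lrparen{[\trans{S_s}X]e_1 - X} = 0$, so both $[\trans{S_s}X]e_1 - X$ and its negative belong to $\Gamma_s(S_s)$.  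Hence for $X \in A_{t,s}^\pi$ we get $[\trans{S_s}X]e_1 = X + \lrparen{[\trans{S_s}X]e_1 - X} \in A_t^\pi$, and, being $\Ft{s}$-measurable, it lies in $A_t^\pi \cap \LdiF{s} = A_{t,s}^\pi$.

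The acceptance set identity is now routine.  If $Z \in A_{t,s}^\phi$ then $Z \in \LiF{s}$ and $\pi_t^S(Ze_1) = \phi_{t,s}^S(Z) \leq 0$, so $Ze_1 \in A_t^\pi \cap \LdiF{s} = A_{t,s}^\pi$; since $S_{s,1} = 1$ gives $\trans{S_s}(Ze_1) = Z$, this yields $Z \in \trans{S_s}A_{t,s}^\pi$.  Conversely, for $X \in A_{t,s}^\pi$ the previous paragraph gives $[\trans{S_s}X]e_1 \in A_{t,s}^\pi$, whence $\phi_{t,s}^S(\trans{S_s}X) = \pi_t^S([\trans{S_s}X]e_1) \leq 0$, i.e.\ $\trans{S_s}X \in A_{t,s}^\phi$.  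Thus $A_{t,s}^\phi = \trans{S_s}A_{t,s}^\pi$, which is assertion~1.  Substituting the family $\lrcurly{\trans{S_s}X \;|\; X \in A_{t,s}^\pi}$ for $A_{t,s}^\phi$ in the definition of the minimal penalty immediately gives $\beta_{t,s}^S(\Q) = \esssup_{X \in A_{t,s}^\pi} -\EQt{\trans{S_s}X}{t}$, which is assertion~2.

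Finally, to see that the dual representation of $\phi_{t,s}^S$ can be taken over $\QQ^e(S)$ with the penalty $\beta_{t,s}^S$: from $A_{t,s}^\phi \subseteq A_t^\phi$ we get $\beta_{t,s}^S(\Q) \leq \beta_t^S(\Q) = \beta_t(\Q,S)$ for every $\Q \in \QQ^e(S)$, the last equality being the penalty identification in the proof of Corollary~\ref{cor:phi}.  Since $\phi_{t,s}^S(Z) = \phi_t^S(Z) = \esssup_{\Q \in \QQ^e(S)}\lrparen{-\beta_t(\Q,S) - \EQt{Z}{t}}$ for $Z \in \LiF{s}$ by Corollary~\ref{cor:phi}, a squeeze between this quantity, the intermediate $\esssup_{\Q \in \QQ^e(S)}\lrparen{-\beta_{t,s}^S(\Q) - \EQt{Z}{t}}$, and the unrestricted essential supremum $\esssup_{\Q}\lrparen{-\beta_{t,s}^S(\Q) - \EQt{Z}{t}} = \phi_{t,s}^S(Z)$ forces all three to coincide, so $\phi_{t,s}^S$ is indeed represented by $\beta_{t,s}^S$ over $\QQ^e(S)$.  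I do not expect a genuine obstacle: the argument is a transcription of the one already carried out for Corollary~\ref{cor:phi}, and the only step worth isolating is the cone-absorption remark of the second paragraph, which is precisely what makes the passage from the terminal price $S_T$ (appearing for $\phi_t^S$) to the intermediate price $S_s$ (appearing for $\phi_{t,s}^S$) go through.
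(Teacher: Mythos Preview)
Your proof is correct and follows essentially the same route as the paper's: both reduce to showing $A_{t,s}^\phi = \trans{S_s}A_{t,s}^\pi$ and then read off the penalty. The only cosmetic difference is that for the inclusion $\trans{S_s}A_{t,s}^\pi \subseteq A_{t,s}^\phi$ the paper invokes $\pi_t^S([\trans{S_s}X]e_1) = \pi_t^S(X)$ directly from the dual definition (martingale property of $S$ and $\Ft{s}$-measurability of $X$), whereas you obtain $[\trans{S_s}X]e_1 \in A_{t,s}^\pi$ via the primal cone-absorption $A_t^\pi + \Gamma_s(S_s) \subseteq A_t^\pi$; your final paragraph on the dual representation over $\QQ^e(S)$ is correct but goes slightly beyond what the statement asks.
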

\begin{proof}
Since $\phi_{t,s}^S$ is a univariate stepped risk measure, if we prove the result for the acceptance set then the penalty function follows immediately.  First we will show that $A_{t,s}^\phi := A_t^\phi \cap \LiF{s} \subseteq \trans{S_s}A_{t,s}^\pi$ where $A_{t,s}^\pi := A_t^\pi \cap \LdiF{s}$.  Take $Z \in A_{t,s}^\phi$, then $\pi_t^S(Ze_1) = \phi_t^S(Z) \leq 0$ thus $Ze_1 \in A_{t,s}^\pi$.  By $S_{s,1} = 1$ almost surely, $Z = \trans{S_s}(Ze_1) \in \trans{S_s}A_{t,s}^\pi$.  Now we will show that $A_{t,s}^\phi \supseteq \trans{S_s}A_{t,s}^\pi$.  Let $X \in A_{t,s}^\pi$, then
\[\phi_{t,s}^S(\trans{S_s}X) = \phi_t^S(\trans{S_s}X) = \pi_t^S([\trans{S_s}X]e_1) = \pi_t^S(X) \leq 0\]
where the last equality follows directly from the definition of $\pi_t^S$ and $X \in \LdiF{s}$.
\end{proof}

\begin{remark}
We would like to note that, unlike in the full case $\phi_t^S$, the stepped risk measure need \emph{not} satisfy the dual representation with respect to the penalty function from the scalarized conditional convex risk measure $\rho_t$.  That is, it is possible that $\phi_{t,s}^S(Z) \neq \esssup_{\Q \in \QQ^e(S)} \lrparen{-\beta_{t,s}(\Q,S) - \EQt{Z}{t}}$ for some $Z \in \LiF{s}$.
\end{remark}

\subsection{$\pi$-time consistency}\label{sec:pi-tc}
We will now introduce a new notion of time consistency for $\seq{\pi^S}$.  Using the above mentioned relations to univariate risk measures, we can deduce many properties for this time consistency property directly.

\begin{definition}\label{defn:pi-tc}
Fix $S \in \SS$.  We say $\seq{\pi^S}$ is \textbf{\emph{time consistent}} if for all times $t < s$ and $X,Y \in \LdiF{}$ it holds
\[\pi_s^S(X) \leq \pi_s^S(Y) \Rightarrow \pi_t^S(X) \leq \pi_t^S(Y).\]
The dynamic risk measure $\seq{\rho}$ is called \textbf{\emph{$\pi$-time consistent}} if $\seq{\pi^S}$ is time consistent for every price sequence $S \in \SS^* := \bigcap_{t \geq 0} \SS_t$.
\end{definition}

\begin{proposition}
\label{prop:pi-phi_tc}
Fix $S \in \SS^*$, then $\seq{\pi^S}$ is time consistent if and only if $\seq{\phi^S}$ is time consistent (defined in the usual way for univariate scalar risk measures).
\end{proposition}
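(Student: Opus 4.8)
The plan is to use the two identities from Proposition~\ref{prop:phi}, namely $\phi_t^S(Z) = \pi_t^S(Ze_1)$ for $Z \in \LiF{}$ and $\pi_t^S(X) = \phi_t^S(\trans{S_T}X)$ for $X \in \LdiF{}$, together with the fact that $S \in \SS^*$ forces $S \in \SS_t$ for every $t$, so that Proposition~\ref{prop:phi} (and Proposition~\ref{prop:pi_inf}) applies at each time and all the objects $\pi_t^S$, $\phi_t^S$ genuinely map into $\LiF{t}$. The only bookkeeping needed is that the two substitutions stay inside the correct spaces: if $X \in \LdiF{}$ then $\trans{S_T}X \in \LiF{}$ because $S \in \SS$ has components $S_{T,i}$ with $\|S_{T,i}\|_\infty \leq \max\{\|e_i\|_{\K_T,0},1\} < \infty$; and conversely if $Z \in \LiF{}$ then $Ze_1 \in \LdiF{}$ trivially. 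Since $\pi_t^S(X) = \phi_t^S(\trans{S_T}X)$ and $\phi_t^S(Z) = \pi_t^S(Ze_1)$ are \emph{equalities} of $\LiF{t}$-valued maps, each inequality on one side translates to the corresponding inequality on the other side in both directions.

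For the direction assuming $\seq{\phi^S}$ is time consistent: take $X,Y \in \LdiF{}$ with $\pi_s^S(X) \leq \pi_s^S(Y)$. Rewriting via $\pi_s^S(\cdot) = \phi_s^S(\trans{S_T}\cdot)$, this reads $\phi_s^S(\trans{S_T}X) \leq \phi_s^S(\trans{S_T}Y)$ with $\trans{S_T}X, \trans{S_T}Y \in \LiF{}$. Time consistency of $\seq{\phi^S}$ yields $\phi_t^S(\trans{S_T}X) \leq \phi_t^S(\trans{S_T}Y)$, i.e.\ $\pi_t^S(X) \leq \pi_t^S(Y)$, which is time consistency of $\seq{\pi^S}$.

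For the converse, assuming $\seq{\pi^S}$ is time consistent: take $Z,W \in \LiF{}$ with $\phi_s^S(Z) \leq \phi_s^S(W)$. By $\phi_s^S(\cdot) = \pi_s^S(\cdot\, e_1)$ this is $\pi_s^S(Ze_1) \leq \pi_s^S(We_1)$ with $Ze_1, We_1 \in \LdiF{}$. Time consistency of $\seq{\pi^S}$ gives $\pi_t^S(Ze_1) \leq \pi_t^S(We_1)$, i.e.\ $\phi_t^S(Z) \leq \phi_t^S(W)$, so $\seq{\phi^S}$ is time consistent. I do not expect any real obstacle here; the proof is a direct transfer of the inequality between the two equivalent descriptions, and the only point requiring a word of justification is that $S \in \SS^*$ guarantees $S \in \SS_t$ at every time so that Proposition~\ref{prop:phi} is in force throughout and the images are in $\LiF{t}$ rather than merely $\LzF{t} \cup \{-\infty\}$.
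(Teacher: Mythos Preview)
Your proof is correct and follows essentially the same approach as the paper's own proof: both directions use the identities $\phi_t^S(Z)=\pi_t^S(Ze_1)$ and $\pi_t^S(X)=\phi_t^S(\trans{S_T}X)$ from Proposition~\ref{prop:phi} to translate the time-consistency inequality from one family to the other. Your additional remarks about $S\in\SS^*$ ensuring $S\in\SS_t$ for all $t$ (so that the maps take values in $\LiF{t}$) are a welcome clarification that the paper leaves implicit.
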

\begin{proof}
Let $\seq{\pi^S}$ be time consistent and let $Z_1,Z_2 \in \LiF{}$ such that $\phi_s^S(Z_1) \geq \phi_s^S(Z_2)$ for times $t < s$:
\begin{align*}
\phi_s^S(Z_1) \geq \phi_s^S(Z_2) &\Rightarrow \pi_s^S(Z_1e_1) \geq \pi_s^S(Z_2e_1)\\
&\Rightarrow \pi_t^S(Z_1e_1) \geq \pi_t^S(Z_2e_1) \Rightarrow \phi_t^S(Z_1) \geq \phi_t^S(Z_2).
\end{align*}

Conversely, let $\seq{\phi^S}$ be time consistent and let $X_1,X_2 \in \LdiF{}$ such that $\pi_s^S(X_1) \geq \pi_s^S(X_2)$:
\begin{align*}
\pi_s^S(X_1) \geq \pi_s^S(X_2) &\Rightarrow \phi_s^S(\trans{S_T}X_1) \geq \phi_s^S(\trans{S_T}X_2)\\
&\Rightarrow \phi_t^S(\trans{S_T}X_1) \geq \phi_t^S(\trans{S_T}X_2) \Rightarrow \pi_t^S(X_1) \geq \pi_t^S(X_2).
\end{align*}
\end{proof}

We will now use the equivalence between the time consistency of $\seq{\pi^S}$ and $\seq{\phi^S}$ to determine equivalent properties for time consistency of $\seq{\pi^S}$. 
\begin{theorem}\label{thm:pi-tc}
Fix $S \in \SS^*$, then the following are equivalent for a normalized and convex $\seq{\pi^S}$ (i.e., $\pi_t^S(\zero) = 0$ for all times $t \geq 0$).
\begin{enumerate}
\item $\seq{\pi^S}$ is time consistent;
\item $\pi_t^S(X) = \pi_t^S(-\pi_s^S(X)e_1)$ for all $t < s$ and $X \in \LdiF{}$; 
\item $A_t^\pi = A_{t,s}^\pi + A_s^\pi$ where $A_{t,s}^\pi = A_t^\pi \cap \LdiF{s}$ for all $t < s$;
\item $\beta_t(\Q,S) = \beta_{t,s}^S(\Q) + \EQt{\beta_s(\Q,S)}{t}$ for all $t < s$ and all probability measures $\Q \in \QQ^*(S)$, where we define
\begin{equation}
\label{QQ}
\QQ^*(S) := \lrcurly{\Q \in \QQ^e(S) \; | \; \beta_0(\Q,S) < \infty};
\end{equation}
\item For all $\Q \in \QQ^*(S)$ and all $X \in \LdiF{}$, the process
\[V_t^\Q(X) := \pi_t^S(X) + \beta_t(\Q,S), \quad t \geq 0\]
is a $\Q$-supermartingale.
\end{enumerate}
In each case the dynamic risk measure admits a robust representation in terms of the set $\QQ^*(S)$ defined in \eqref{QQ}, i.e.,
\[\pi_t^S(X) = \esssup_{\Q \in \QQ^*(S)} \lrparen{-\beta_t(\Q,S) - \EQt{\trans{S_T}X}{t}}\]
for all $X \in \LdiF{}$ and all times $t \geq 0$.
\end{theorem}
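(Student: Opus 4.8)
The plan is to deduce all five equivalences, together with the concluding robust representation, from the classical time‑consistency theory for \emph{univariate} conditional convex risk measures, applied to the family $\seq{\phi^S}$ of Proposition~\ref{prop:phi}. First I would record the dictionary. Since $S \in \SS^* \subseteq \SS_t$ for every $t$, Proposition~\ref{prop:phi} together with Corollaries~\ref{cor:phi} and~\ref{cor:phi-stepped} shows that each $\phi_t^S$ is an $\LiF{t}$‑valued, normalized (because $\phi_t^S(0) = \pi_t^S(\zero) = 0$) univariate conditional convex risk measure satisfying the Fatou property, with acceptance set $A_t^\phi$, stepped acceptance set $A_{t,s}^\phi = \trans{S_s}A_{t,s}^\pi$, minimal penalty function $\beta_t^S(\Q) = \beta_t(\Q,S)$ for $\Q \in \QQ^e(S)$ and $\beta_t^S(\Q) = +\infty$ otherwise, and stepped penalty $\beta_{t,s}^S$. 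In particular $\QQ^*(S)$ from~\eqref{QQ} is exactly $\{\Q \in \mathcal M \;|\; \beta_0^S(\Q) < \infty\}$, and it is nonempty since $S \in \SS_0$ forces, via Proposition~\ref{prop:pi0}, some $\Q \in \QQ^e(S)$ with $\beta_0(\Q,S) \in \LiF{0} = \R$. I would then invoke the standard theorem for $\seq{\phi^S}$ (see, e.g.,~\cite{DS05,FP06,AP10,BN04}): time consistency of $\seq{\phi^S}$ is equivalent to the recursion $\phi_t^S(Z) = \phi_t^S(-\phi_s^S(Z))$ for $t<s$ and $Z \in \LiF{}$; to $A_t^\phi = A_{t,s}^\phi + A_s^\phi$; to $\beta_t^S(\Q) = \beta_{t,s}^S(\Q) + \EQt{\beta_s^S(\Q)}{t}$ for all $\Q$ with $\beta_0^S(\Q) < \infty$; and to $(\phi_t^S(Z) + \beta_t^S(\Q))_{t \ge 0}$ being a $\Q$‑supermartingale for every such $\Q$ and every $Z$; and each of these implies $\phi_t^S(Z) = \esssup_{\Q \in \QQ^*(S)}\big(-\beta_t^S(\Q) - \EQt{Z}{t}\big)$.

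Next I would translate the five conditions of the theorem into these univariate statements. Condition~(1) $\Leftrightarrow$ time consistency of $\seq{\phi^S}$ is exactly Proposition~\ref{prop:pi-phi_tc}. For~(2): using $\pi_t^S(X) = \phi_t^S(\trans{S_T}X)$ and $S_{T,1} \equiv 1$, and that $\pi_s^S(X) \in \LiF{s}$, one gets $\pi_t^S(-\pi_s^S(X)e_1) = \phi_t^S(-\pi_s^S(X)) = \phi_t^S(-\phi_s^S(\trans{S_T}X))$, and since $Z = \trans{S_T}(Ze_1)$ the map $X \mapsto \trans{S_T}X$ is onto $\LiF{}$, so~(2) for all $X \in \LdiF{}$ is precisely the univariate recursion for all $Z \in \LiF{}$. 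Conditions~(4) and~(5) are literal translations: $\beta_s(\Q,S) = \beta_s^S(\Q)$ for $\Q \in \QQ^e(S)$, $\beta_{t,s}^S$ is the stepped penalty of Corollary~\ref{cor:phi-stepped}, $\QQ^*(S) = \{\Q : \beta_0^S(\Q) < \infty\}$, and $V_t^\Q(X) = \phi_t^S(\trans{S_T}X) + \beta_t^S(\Q)$ is the univariate supermartingale with $Z = \trans{S_T}X$. The final robust representation for $\seq{\pi^S}$ then follows by composing the restricted univariate representation with $X \mapsto \trans{S_T}X$.

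The one step that needs real work is condition~(3): that $A_t^\pi = A_{t,s}^\pi + A_s^\pi$ (a Minkowski identity in $\LdiF{}$) is equivalent to $A_t^\phi = A_{t,s}^\phi + A_s^\phi$ (in $\LiF{}$). The obstacle is that $X \mapsto \trans{S_T}X$ does not send $\LdiF{s}$ into $\LiF{s}$, so the decomposition cannot be transported componentwise. I would get around this with the absorption property: since $\Gamma_s(S_s)$ is one of the summands of both $A_t^\pi = \ascl\big(A_t + \sum_{u=t}^T \Gamma_u(S_u)\big)$ and $A_s^\pi$, one has $A_t^\pi + \Gamma_s(S_s) \subseteq A_t^\pi$ and $A_s^\pi + \Gamma_s(S_s) \subseteq A_s^\pi$, hence $\pi_t^S(Y+k) \le \pi_t^S(Y)$ whenever $k \in \Gamma_s(S_s)$; and for $Y \in \LdiF{s}$ the element $Y - [\trans{S_s}Y]e_1$ lies in $\Gamma_s(S_s)$ because its time‑$s$ value vanishes. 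Combined with the elementary facts $Ze_1 \in A_t^\pi \Leftrightarrow Z \in A_t^\phi$, $\trans{S_T}(Ze_1) = Z$, and $\pi_t^S(X) = \phi_t^S(\trans{S_T}X)$, both directions reduce to routine inclusions: e.g.\ assuming the $\phi$‑decomposition, given $Y_1 \in A_{t,s}^\pi$ and $Y_2 \in A_s^\pi$ one has $\pi_t^S(Y_1 + Y_2) \le \pi_t^S([\trans{S_s}Y_1]e_1 + Y_2) = \phi_t^S(\trans{S_s}Y_1 + \trans{S_T}Y_2) \le 0$ since $\trans{S_s}Y_1 \in A_{t,s}^\phi$ and $\trans{S_T}Y_2 \in A_s^\phi$; conversely a decomposition $\trans{S_T}X = Z_1 + Z_2 \in A_{t,s}^\phi + A_s^\phi$ lifts to $X = Z_1 e_1 + (X - Z_1 e_1) \in A_{t,s}^\pi + A_s^\pi$, and the reverse implication is handled symmetrically using $(Z e_1) \in A_t^\pi$ and the same $\Gamma_s(S_s)$‑absorption. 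So the main difficulty is not conceptual but bookkeeping: assembling the $\pi$/$\phi$ dictionary cleanly and resolving this single $S_T$‑versus‑$S_s$ mismatch through the solvency cones $\Gamma_s(S_s)$.
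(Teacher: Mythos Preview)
Your approach is essentially the same as the paper's: reduce everything to the classical univariate theory for $\seq{\phi^S}$ via Proposition~\ref{prop:pi-phi_tc} and the $\pi/\phi$ dictionary, and single out the acceptance-set condition~(3) as the only step requiring a separate argument through the $\Gamma_s(S_s)$-absorption trick. The paper packages that last argument as a standalone lemma (Lemma~\ref{lemma:pi-tc-acceptance}), but the content and the key idea---replacing $Y_1 \in A_{t,s}^\pi$ by $[\trans{S_s}Y_1]e_1$ at no cost---are exactly what you describe.
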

\begin{proof}
These results all follow immediately from the equivalent properties of $\seq{\phi^S}$-time consistency and its equivalence to $\seq{\pi^S}$-time consistency (see Proposition~\ref{prop:pi-phi_tc}).  To show the equivalence of the recursive formulation and the summation of acceptance sets we consider Lemma~\ref{lemma:pi-tc-acceptance} below as the equivalence between acceptance set versions of time consistency for $\seq{\phi^S}$ and $\seq{\pi^S}$ is less direct.
\end{proof}

\begin{remark}\label{rem:pi-tc}
Consider the setting of Theorem~\ref{thm:pi-tc} but for a non-normalized $\seq{\pi^S}$.  Then $\seq{\pi^S}$ is time consistent if and only if $\pi_t^S(X) = \pi_t^S([\pi_s^S(\zero)-\pi_s^S(X)]e_1)$ for all $t < s$ and $X \in \LdiF{}$.
\end{remark}

Now consider coherent $\seq{\pi^S}$ only.  Note that, immediately, the coherence of $\seq{\pi^S}$ implies normalization.  Additionally, in such a setting, the penalty function is provided by an indicator function; as such we can describe the probability measures of interest, defined in \eqref{QQ}, by
\[\QQ^*(S) = \lrcurly{\Q \in \QQ^e(S) \; | \; \beta_0(\Q,S) = 0}.\]
\begin{corollary}\label{cor:pi-tc}
Fix $S \in \SS^*$, then the following are equivalent for a coherent $\seq{\pi^S}$.
\begin{enumerate}
\item $\seq{\pi^S}$ is time consistent;
\item $\pi_t^S(X) = \pi_t^S(-\pi_s^S(X)e_1)$ for all $t < s$ and $X \in \LdiF{}$;
\item $A_t^\pi = A_{t,s}^\pi + A_s^\pi$ where $A_{t,s}^\pi = A_t^\pi \cap \LdiF{s}$ for all $t < s$;
\item The set $\QQ^*(S)$ is stable;
\item For all $\Q \in \lrcurly{\Q \in \QQ^e(S) \; | \; \beta_0(\Q,S) = 0}$ and all $X \in \LdiF{}$, the process $\lrparen{\pi_t^S(X)}_{t = 0}^T$ is a $\Q$-supermartingale.
\end{enumerate}
In each case the dynamic risk measure admits a robust representation in terms of the set $\QQ^*(S)$, i.e.,
\[\pi_t^S(X) = \esssup_{\Q \in \QQ^*(S)} -\EQt{\trans{S_T}X}{t}\]
for all $X \in \LdiF{}$ and all times $t \geq 0$.
\end{corollary}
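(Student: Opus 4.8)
The plan is to obtain Corollary~\ref{cor:pi-tc} as the coherent specialization of Theorem~\ref{thm:pi-tc}, combined with the classical theory of coherent univariate dynamic risk measures transported through $\seq{\phi^S}$. The first step is to record the structural simplifications forced by coherence. Since $\pi_t^S$ is coherent it is in particular positively homogeneous, hence normalized ($\pi_t^S(\zero) = 0$), and by Proposition~\ref{prop:phi} each $\phi_t^S$ is then a coherent univariate conditional risk measure. Consequently the penalty function $\beta_t(\cdot,S)$ is the conditional support function of the convex cone $A_t^\pi$ (equivalently, via Corollary~\ref{cor:phi}, of $A_t^\phi = \trans{S_T}A_t^\pi$ under the linear map $X\mapsto\trans{S_T}X$), so it takes values in $\{0,+\infty\}$ only. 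Therefore $\QQ^*(S) = \{\Q\in\QQ^e(S)\;|\;\beta_0(\Q,S)<\infty\} = \{\Q\in\QQ^e(S)\;|\;\beta_0(\Q,S)=0\}$, which justifies the alternative description given before the corollary, and on this set $\beta_t(\Q,S)\equiv 0$ for every $t$.

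Given this, items (1), (2), (3) and (5) of the corollary are the verbatim coherent reductions of the corresponding items of Theorem~\ref{thm:pi-tc}: for (5) the $\Q$-supermartingale $V_t^\Q(X) = \pi_t^S(X) + \beta_t(\Q,S)$ collapses to $\pi_t^S(X)$ itself once $\Q$ ranges over $\QQ^*(S)$, since $\beta_t(\Q,S)=0$ there. The only genuinely new point is item (4). For it I would invoke the classical equivalence, for coherent dynamic risk measures, between time consistency and stability (m-stability / pasting-closedness) of the representing set of measures (see, e.g., \cite{D06,DS05,FP06}); applied to $\seq{\phi^S}$, whose representing set on its finite-penalty part is exactly $\QQ^*(S)$ by Corollary~\ref{cor:phi}(2), and transported back to $\seq{\pi^S}$ via Proposition~\ref{prop:pi-phi_tc}, this yields (1)$\Leftrightarrow$(4). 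Finally, the robust representation in terms of $\QQ^*(S)$ is the coherent case of the last assertion of Theorem~\ref{thm:pi-tc}: one always has $\pi_t^S(X) = \esssup_{\Q\in\QQ^e(S)}\bigl(-\beta_t(\Q,S) - \EQt{\trans{S_T}X}{t}\bigr)$, and under time consistency the essential supremum may be restricted to $\QQ^*(S)$, on which $\beta_t(\Q,S)=0$, leaving $\pi_t^S(X) = \esssup_{\Q\in\QQ^*(S)} -\EQt{\trans{S_T}X}{t}$.

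The step I expect to be the main obstacle is the equivalence involving acceptance sets, item (3), because the translation between the univariate recursion $A_t^\phi = A_{t,s}^\phi + A_s^\phi$ and the multivariate recursion $A_t^\pi = A_{t,s}^\pi + A_s^\pi$ is not automatic: by Corollaries~\ref{cor:phi} and~\ref{cor:phi-stepped} these acceptance sets are related by the linear map $X\mapsto\trans{S_T}X$ (resp.\ $X\mapsto\trans{S_s}X$), which is not injective, so one direction of the correspondence requires genuine work. This is precisely the content of the auxiliary Lemma~\ref{lemma:pi-tc-acceptance} already used in the proof of Theorem~\ref{thm:pi-tc}, and the proof of the corollary would simply quote it together with the value-consistency property of $\pi_t^S$ (Proposition~\ref{prop:pi}) that lets one lift elements of $A_t^\phi$ back to $A_t^\pi$ along $Z\mapsto Ze_1$. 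The remaining subtlety — checking that the two presentations of $\QQ^*(S)$ agree — is, as noted above, immediate from the $\{0,+\infty\}$-valuedness of $\beta$, so no further difficulty is anticipated there.
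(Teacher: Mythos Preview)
Your proposal is correct and follows exactly the line the paper intends: the corollary is stated without its own proof and is meant to be read as the coherent specialization of Theorem~\ref{thm:pi-tc}, transported through the univariate risk measures $\seq{\phi^S}$ via Proposition~\ref{prop:pi-phi_tc}, with Lemma~\ref{lemma:pi-tc-acceptance} handling the acceptance-set item and the classical coherent theory supplying the stability equivalence. Your identification of item~(3) as the only place requiring genuine care, and of Lemma~\ref{lemma:pi-tc-acceptance} as the tool that resolves it, matches the paper's treatment precisely.
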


\begin{lemma}\label{lemma:pi-tc-acceptance}
Fix $S \in \SS^*$ and let $0 \leq t \leq s \leq T$.
\begin{enumerate}
\item Let $D \subseteq \LdiF{s}$ such that $D + \Gamma_s(S_s) \subseteq D$.  Then $X \in D + A_s^\pi$ if and only if $-\pi_s^S(X) \in \trans{S_s}D$.
\item $A_{t,s}^\phi + A_s^\phi = \trans{S_T}\lrsquare{A_{t,s}^\pi + A_s^\pi}$
\end{enumerate}
\end{lemma}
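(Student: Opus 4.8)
The plan is to reduce everything to conditional cash invariance and value consistency of $\pi_s^S$ (Proposition~\ref{prop:pi}) together with the acceptance-set identifications from Corollaries~\ref{cor:phi} and~\ref{cor:phi-stepped}: for any index $u$ one has $A_u^\phi = \trans{S_T}A_u^\pi = \{Z \in \LiF{} \; | \; Ze_1 \in A_u^\pi\}$, and $A_{t,s}^\phi = \trans{S_s}A_{t,s}^\pi = \{Z \in \LiF{s} \; | \; Ze_1 \in A_{t,s}^\pi\}$. Two facts will be used repeatedly: since $S \in \SS^* \subseteq \SS_s$, Proposition~\ref{prop:pi_inf} gives $\pi_s^S(X) \in \LiF{s}$ for every $X \in \LdiF{}$; and for each $\Q \in \QQ^e(S)$ the process $S$ is a $\Q$-martingale, so $\EQt{\trans{S_T}V}{s} = \trans{S_s}V$ whenever $V$ is $\Ft{s}$-measurable.

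For part~1 I would argue both directions directly. If $X = d + a$ with $d \in D \subseteq \LdiF{s}$ and $a \in A_s^\pi$, conditional cash invariance gives $\pi_s^S(X) = \pi_s^S(a) - \trans{S_s}d \le -\trans{S_s}d$, so $c := -\pi_s^S(X) - \trans{S_s}d$ is a nonnegative element of $\LiF{s}$; since $\trans{S_s}(ce_1) = c \ge 0$ we get $ce_1 \in \Gamma_s(S_s)$, hence $d + ce_1 \in D + \Gamma_s(S_s) \subseteq D$ with $\trans{S_s}(d + ce_1) = -\pi_s^S(X)$, i.e.\ $-\pi_s^S(X) \in \trans{S_s}D$. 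Conversely, if $-\pi_s^S(X) = \trans{S_s}d$ for some $d \in D$, cash invariance gives $\pi_s^S(X - d) = \pi_s^S(X) + \trans{S_s}d = 0$, so $X - d \in A_s^\pi$ and $X \in D + A_s^\pi$; note that only the first direction uses the $\Gamma_s(S_s)$-invariance of $D$.

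For part~2 I would prove two inclusions on representatives, noting that a generic element of $A_{t,s}^\phi + A_s^\phi$ has the form $\trans{S_s}X + \trans{S_T}Y$ and a generic element of $\trans{S_T}[A_{t,s}^\pi + A_s^\pi]$ has the form $\trans{S_T}X + \trans{S_T}Y$, with $X \in A_{t,s}^\pi$ and $Y \in A_s^\pi$ in both cases. The inclusion ``$\subseteq$'' is easy: if $Z \in A_{t,s}^\phi$, $W \in A_s^\phi$, then $Ze_1 \in A_{t,s}^\pi$ and $We_1 \in A_s^\pi$, so $(Z+W)e_1 \in A_{t,s}^\pi + A_s^\pi$ and, since $S_{T,1} \equiv 1$, $Z + W = \trans{S_T}[(Z+W)e_1]$ lies in $\trans{S_T}[A_{t,s}^\pi + A_s^\pi]$. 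For ``$\supseteq$'', given $X \in A_{t,s}^\pi$ and $Y \in A_s^\pi$, I would split
\[\trans{S_T}X + \trans{S_T}Y = \trans{S_s}X + (\trans{S_T}Y + [\trans{S_T}X - \trans{S_s}X]),\]
observe that $\trans{S_s}X \in A_{t,s}^\phi$, and show that $\zeta := \trans{S_T}X - \trans{S_s}X$ is absorbed into $A_s^\phi$: $\zeta \in \LiF{}$ has $\EQt{\zeta}{s} = 0$ for every $\Q \in \QQ^e(S)$ (because $X$ is $\Ft{s}$-measurable and $S$ is a $\Q$-martingale), so the dual representation of $\phi_s^S$ in Corollary~\ref{cor:phi} gives $\phi_s^S(\trans{S_T}Y + \zeta) = \phi_s^S(\trans{S_T}Y) \le 0$, using $\trans{S_T}Y \in A_s^\phi$; hence $\trans{S_T}Y + \zeta \in A_s^\phi$ and the right-hand side is in $A_{t,s}^\phi + A_s^\phi$.

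The cash-invariance computations and the bookkeeping with $e_1$ and $S_{T,1} \equiv 1$ are routine; the one genuinely delicate step is the ``$\supseteq$'' inclusion in part~2. Passing between $\trans{S_s}X$ and $\trans{S_T}X$ for the $\Ft{s}$-measurable block $X$ forces in the correction term $\zeta$, which is not sign-definite and therefore cannot be absorbed into $A_s^\phi$ by monotonicity alone; the key is that $\zeta$ has vanishing $\Q$-conditional mean for \emph{every} pricing measure in $\QQ^e(S)$, which is exactly the property that leaves the penalty-function representation of $\phi_s^S$ unchanged under adding $\zeta$.
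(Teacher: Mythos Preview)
Your proof is correct. Part~1 is essentially the paper's argument; your converse is in fact slightly leaner, since you pick the given $d\in D$ directly rather than the specific representative $-\pi_s^S(X)e_1$, so you avoid a second appeal to $\Gamma_s(S_s)$-invariance (the paper tacitly uses invariance in both directions).

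For Part~2 the two arguments diverge. The paper treats Part~1 as the engine: taking $D=A_{t,s}^\pi$ (which is $\Gamma_s(S_s)$-invariant), Part~1 says $X\in A_{t,s}^\pi+A_s^\pi$ iff $-\pi_s^S(X)\in \trans{S_s}A_{t,s}^\pi=A_{t,s}^\phi$, and then invokes the scalar analogue \cite[Lemma~4.6]{FP06}, namely $Z\in A_{t,s}^\phi+A_s^\phi$ iff $-\phi_s^S(Z)\in A_{t,s}^\phi$, to transfer both inclusions at once via $\phi_s^S(\trans{S_T}X)=\pi_s^S(X)$. Your route is self-contained: the ``$\subseteq$'' direction is the trivial $e_1$-embedding, and for ``$\supseteq$'' you split off the martingale increment $\zeta=\trans{S_T}X-\trans{S_s}X$ and use the dual representation of $\phi_s^S$ from Corollary~\ref{cor:phi} to see it is risk-neutral for every $\Q\in\QQ^e(S)$, hence invisible to $\phi_s^S$. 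The paper's approach is more structural and exhibits Part~2 as an immediate consequence of Part~1 plus its univariate counterpart; yours avoids the external reference at the cost of invoking the dual formula for $\phi_s^S$ directly, and makes transparent exactly which property of $\QQ^e(S)$ (the common martingale condition on $S$) is doing the work.
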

\begin{proof}
\begin{enumerate}
\item First let $X \in D + A_s^\pi$.  Immediately we can decompose $X = X_D + X_s$ such that $X_D \in D$ and $X_s \in A_s^\pi$.  Then $\pi_s^S(X) = \pi_s^S(X_D + X_s) = \pi_s^S(X_s) - \trans{S_s}X_D \leq -\trans{S_s}X_D$.  That is, $-\pi_s^S(X) \geq \trans{S_s}X_D$ and by property of $D + \Gamma_s(S_s) \subseteq D$ this implies $-\pi_s^S(X)e_1 \in D$ or $-\pi_s^S(X) \in \trans{S_s}D$ by $S_{s,1} = 1$.  Conversely let $-\pi_s^S(X) \in \trans{S_s}D$ for some $X \in \LdiF{}$.  We can then define $X = -\pi_s^S(X)e_1 + (X + \pi_s^S(X)e_1)$.  We conclude that $-\pi_s^S(X)e_1 \in D$ and by definition of the acceptance set we see that $X + \pi_s^S(X)e_1 \in A_s^\pi$, thus the proof is complete.
\item First consider $Z \in A_{t,s}^\phi + A_s^\phi$.  By \cite[Lemma 4.6]{FP06}, this implies $-\pi_s^S(Ze_1) = -\phi_s^S(Z) \in A_{t,s}^\phi = \trans{S_s}A_{t,s}^\pi$. Therefore, $Ze_1 \in A_{t,s}^\pi + A_s^\pi$ and thus $Z \in \trans{S_T}\lrsquare{A_{t,s}^\pi + A_s^\pi}$ by $S_{T,1} = 1$. Conversely, $X \in A_{t,s}^\pi + A_s^\pi$ if and only if $-\phi_s^S(\trans{S_T}X) = -\pi_s^S(X) \in \trans{S_s}A_{t,s}^\pi = A_{t,s}^\phi$.  By \cite[Lemma 4.6]{FP06} this is equivalent to $\trans{S_T}X \in A_{t,s}^\phi + A_s^\phi$, thus the proof is complete.
\end{enumerate}
\end{proof}

We conclude this section by considering the backwards composition to construct a $\pi$-time consistent risk measure.
\begin{proposition}
\label{prop:pi-phi_composition}
Let $S \in \SS^*$ and consider a discrete time setting.  Define the backwards composition of the portfolio returns $Z \in \LiF{}$ by
\begin{align*}
\tilde\phi_T^S(Z) &= -Z, \qquad \tilde\phi_t^S(Z) = \phi_t^S(-\tilde\phi_{t+1}^S(Z)) \; \forall t < T.
\end{align*}
Equivalently we can define the backwards composition of the portfolio $X \in \LdiF{}$ by
\begin{align*}
\tilde\pi_T^S(X) &= -\trans{S_T}X, \qquad \tilde\pi_t^S(X) = \pi_t^S(-\tilde\pi_{t+1}^S(X)e_1) \; \forall t < T.
\end{align*}
Then $\seq{\tilde\phi^S}$, equivalently $\seq{\tilde\pi^S}$, is time consistent.  
Additionally, the scalarization 
\[\tilde\rho_t(X) := \esssup_{S \in \SS^*} \tilde\pi_t^S(X)\quad \forall X \in \LdiF{}\]
is $\pi$-time consistent.
\end{proposition}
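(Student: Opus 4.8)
The plan has three stages: reduce time consistency of $\seq{\tilde\pi^S}$ to that of the univariate $\seq{\tilde\phi^S}$, prove the latter by the classical backward--composition argument, and then transfer the conclusion to $\tilde\rho$ through its $\pi^S$--decomposition.

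\textbf{First assertion.} I would begin by checking, by backward induction on $t$, that $\tilde\pi_t^S(X)=\tilde\phi_t^S(\trans{S_T}X)$ for all $X\in\LdiF{}$: the base case is $\tilde\pi_T^S(X)=-\trans{S_T}X=\tilde\phi_T^S(\trans{S_T}X)$, and for the step one uses $S_{T,1}=1$ and $\phi_t^S(Z)=\pi_t^S(Ze_1)$ to write $\tilde\pi_t^S(X)=\pi_t^S(-\tilde\pi_{t+1}^S(X)e_1)=\phi_t^S(-\tilde\phi_{t+1}^S(\trans{S_T}X))=\tilde\phi_t^S(\trans{S_T}X)$, which is legitimate since $\tilde\pi_{t+1}^S(X)\in\LiF{t+1}$. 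Time consistency of $\seq{\tilde\pi^S}$ and of $\seq{\tilde\phi^S}$ are then equivalent by the same manipulation as in the proof of Proposition~\ref{prop:pi-phi_tc}. Fixing $S\in\SS^*\subseteq\SS_t$, iterating Proposition~\ref{prop:phi} shows $\tilde\phi_t^S:\LiF{}\to\LiF{t}$ is a well-defined convex (resp.\ coherent) dynamic univariate risk measure: each $\phi_t^S$ is one with $\phi_t^S(0)=\pi_t^S(\zero)\in\LiF{t}$, and all axioms are stable under the one-step composition $\tilde\phi_t^S(Z)=\phi_{t,t+1}^S(-\tilde\phi_{t+1}^S(Z))$ (valid since $\tilde\phi_{t+1}^S(Z)\in\LiF{t+1}$). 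Finally $\seq{\tilde\phi^S}$ is time consistent because of this recursion: if $\tilde\phi_s^S(X)\leq\tilde\phi_s^S(Y)$, then monotonicity of $\phi_{s-1,s}^S$ applied to $-\tilde\phi_s^S(X)\geq-\tilde\phi_s^S(Y)$ gives $\tilde\phi_{s-1}^S(X)\leq\tilde\phi_{s-1}^S(Y)$, and iterating one step at a time down to $t$ yields the claim (equivalently, the recursion is equivalent to time consistency in the sense of~\cite{FP06}). With the reduction, $\seq{\tilde\pi^S}$ is time consistent for every $S\in\SS^*$.

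\textbf{Second assertion.} First I would verify that $\tilde\rho_t=\esssup_{S\in\SS^*}\tilde\pi_t^S$ is a relevant scalarized conditional convex risk measure with the Fatou property, so that Assumption~\ref{ass:fatou} applies to it and its $\pi^S$--decomposition is defined: conditional cash invariance, monotonicity and convexity pass to the pointwise essential supremum; finiteness at zero follows from $\tilde\pi_t^{S_0}\leq\tilde\rho_t$ for a fixed $S_0\in\SS^*$ (nonempty by relevance of $\rho$ and Lemma~\ref{lemma_relevance}) together with a uniform-in-$S$ upper bound obtained by unwinding the composition using $\pi_r^S(\zero)\leq\rho_r(\zero)$ and $\|S_{T,i}\|_\infty\leq\max\{\|e_i\|_{\K_T,0},1\}$; and the Fatou property and relevance are inherited from the $\tilde\pi^S$ together with $\tilde\rho\geq\tilde\pi^{S_0}$. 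It then remains to show the $\pi^S$--decomposition $\hat\pi^S$ of $\tilde\rho$, built from $\tilde A_t=\{X:\tilde\rho_t(X)\leq 0\}=\bigcap_{S'\in\SS^*}\{X:\tilde\pi_t^{S'}(X)\leq 0\}$ via~\eqref{eq:pi}, is time consistent for each relevant $S$. I would do this through the acceptance--set characterization (Theorem~\ref{thm:pi-tc}(3)/Corollary~\ref{cor:pi-tc}(3), after normalizing or in the non-normalized forms of Remark~\ref{rem:pi-tc}): by Proposition~\ref{prop:pi0}, $\{X:\hat\pi_t^S(X)\leq 0\}=\cl(\tilde A_t+\sum_{r=t}^T\Gamma_r(S_r))$, and one must prove this equals $(\{X:\hat\pi_t^S(X)\leq 0\}\cap\LdiF{s})+\{X:\hat\pi_s^S(X)\leq 0\}$; the inclusion ``$\supseteq$'' is routine, and ``$\subseteq$'' should be deduced from the analogous summation identities that each time-consistent $\tilde\pi^{S'}$ satisfies, intersected over $S'$ and combined with the self-financing cones $\Gamma_r(S_r)$. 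Equivalently, Lemma~\ref{lemma:pi-tc-acceptance}(1), which uses only conditional cash invariance and monotonicity of $\hat\pi^S$, reduces the task to verifying $\hat\pi_t^S(X)=\hat\pi_t^S(-\hat\pi_s^S(X)e_1)$.

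\textbf{Main obstacle.} The hard step is exactly this last one: controlling the acceptance set (equivalently, the penalty function $\hat\beta_t(\cdot,S)$) of the $\pi^S$--decomposition of $\tilde\rho$. For free one only obtains $\hat\pi_t^S\geq\tilde\pi_t^S$, since passing from $\{X:\tilde\pi_t^S(X)\leq 0\}$ to the smaller set $\tilde A_t$ only shrinks the feasible set defining the penalty; so $\hat\pi^S$ need not coincide with $\tilde\pi^S$, and its time consistency cannot simply be inherited from the first assertion. It has to be extracted from the fact that $\tilde A_t$ is an intersection over $S'$ of acceptance sets that individually split as Minkowski sums (of one-step acceptance sets, by the backward composition), and since intersections of Minkowski sums are not Minkowski sums of intersections, the ``$\subseteq$'' inclusion genuinely requires the explicit structure coming from the composition together with the no-arbitrage cones $\Gamma_r(S_r)$. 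Everything else --- the reduction to the univariate case, the composition argument, and the verification that $\tilde\rho$ satisfies Assumption~\ref{ass:fatou} --- is routine.
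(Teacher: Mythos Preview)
Your treatment of the first assertion matches the paper's approach exactly: the paper says only that ``the time consistency of $\seq{\tilde\phi^S}$ and $\seq{\tilde\pi^S}$ follow trivially by an induction argument,'' and your backward-induction reduction via $\tilde\pi_t^S(X)=\tilde\phi_t^S(\trans{S_T}X)$ together with one-step monotonicity is precisely that argument written out.

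For the second assertion the paper takes a much shorter route than you do. Its entire proof is: ``$\seq{\tilde\rho}$ is $\pi$-time consistent immediately by its construction via $\{\seq{\tilde\pi^S} \; | \; S \in \SS^*\}$.'' In other words, the paper simply \emph{declares} $\tilde\pi^S$ to be the relevant decomposition of $\tilde\rho$ and reads off $\pi$-time consistency from the first part. It does not verify that $\tilde\rho$ satisfies Assumption~\ref{ass:fatou}, does not form the penalty $\hat\beta_t(\cdot,S)$ from the acceptance set $\tilde A_t$, and does not attempt to show that the canonical $\pi^S$-decomposition $\hat\pi^S$ (built from $\tilde A_t$ via~\eqref{eq:pi}) coincides with $\tilde\pi^S$ or is otherwise time consistent.

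Your ``main obstacle''---that one only gets $\hat\pi_t^S\geq\tilde\pi_t^S$ and that intersections of Minkowski sums need not split---is therefore a genuine rigor point that the paper does not address. What you are attempting is a strictly stronger statement than what the paper proves: you want $\pi$-time consistency in the literal sense of Definition~\ref{defn:pi-tc} applied to the acceptance-set-derived decomposition of $\tilde\rho$, whereas the paper is content with the observation that $\tilde\rho$ is built as a supremum of time-consistent pieces. If your goal is only to reproduce the paper's argument, you can drop the entire second-assertion program and simply note that time consistency of each $\tilde\pi^S$ for $S\in\SS^*$ is, by construction, what the paper means by $\pi$-time consistency of $\tilde\rho$.
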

\begin{proof}
First, the time consistency of $\seq{\tilde\phi^S}$ and $\seq{\tilde\pi^S}$ follow trivially by an induction argument.
Second, $\seq{\tilde\rho}$ is $\pi$-time consistent immediately by its construction via $\{\seq{\tilde\pi^S} \; | \; S \in \SS^*\}$.
\end{proof}

\subsection{Relation to multiportfolio time consistency}\label{sec:mptc}
In this section we will relate the concept of multiportfolio time consistency defined in \cite{FR12,FR12b} for set-valued risk measures to time consistency properties and representations for scalar risk measures $\seq{\rho}$ with a single eligible asset.  

\begin{lemma}
\label{lemma:mptc-pi_tc}
Let $\seq{A}$ be a dynamic acceptance set.  Let the set-valued risk measure $\seq{R}$ (with any choice of eligible space $\tilde{M} \subseteq \R^d$), defined by
\[R_t(X) := \lrcurly{m \in \LdiK{t}{\tilde{M}} \; | \; X + m \in A_t} \quad \forall X \in \LdiF{},\]
be multiportfolio time consistent, then $\seq{\rho}$ defined as in \eqref{eq_scalar} is $\pi$-time consistent.
\end{lemma}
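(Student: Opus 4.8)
The plan is to reduce $\pi$-time consistency of $\seq{\rho}$ to an acceptance-set identity and then to derive that identity from multiportfolio time consistency. Fix $S \in \SS^*$. By Definition~\ref{defn:pi-tc} it suffices to show that $\seq{\pi^S}$ is time consistent, and by Proposition~\ref{prop:pi-phi_tc} this is equivalent to time consistency of the univariate dynamic convex risk measure $\seq{\phi^S}$ from \eqref{eq_phi}. For univariate convex risk measures, time consistency is equivalent to the decomposition $A_t^\phi = A_{t,s}^\phi + A_s^\phi$ for all $t \le s$ (see, e.g., \cite{BN04,FP06}), with $A_{t,s}^\phi = A_t^\phi \cap \LiF{s}$ as in Corollary~\ref{cor:phi-stepped}. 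Using the dictionary already established, namely $A_t^\phi = \trans{S_T}A_t^\pi$ (Corollary~\ref{cor:phi}), $A_{t,s}^\phi = \trans{S_s}A_{t,s}^\pi$ (Corollary~\ref{cor:phi-stepped}) and $A_{t,s}^\phi + A_s^\phi = \trans{S_T}[A_{t,s}^\pi + A_s^\pi]$ (Lemma~\ref{lemma:pi-tc-acceptance}(2)), the goal becomes $\trans{S_T}A_t^\pi = \trans{S_T}[A_{t,s}^\pi + A_s^\pi]$; in particular it is enough to prove the set equality $A_t^\pi = A_{t,s}^\pi + A_s^\pi$, where $A_{t,s}^\pi = A_t^\pi \cap \LdiF{s}$. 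Note that $S \in \SS^* \subseteq \SS_t \cap \SS_s$ guarantees, via Proposition~\ref{prop:pi_inf} and Proposition~\ref{prop:pi0}, that $\pi_t^S$ and $\pi_s^S$ are real valued, so these acceptance sets are proper and the cited results apply.

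Next I would invoke the hypothesis. The acceptance-set characterization of multiportfolio time consistency from \cite{FR12,FR12b} involves only the acceptance sets $\seq{A}$ and the subspaces $\LdiF{s}$, hence is insensitive to the choice of eligible space $\tilde M$: multiportfolio time consistency of $\seq{R}$ is equivalent to $A_t = A_{t,s} + A_s$ (up to weak* closure) for all $t \le s$, with $A_{t,s} = A_t \cap \LdiF{s}$. Recall from Proposition~\ref{prop:pi0} that $A_r^\pi = \cl\big(A_r + \sum_{u=r}^T \Gamma_u(S_u)\big)$ for each $r$, where every $\Gamma_u(S_u)$ is a convex cone contained in $\LdiF{u}$. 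Substituting $A_t = A_{t,s} + A_s$ and splitting $\sum_{u=t}^T \Gamma_u(S_u) = \sum_{u=t}^{s-1}\Gamma_u(S_u) + \sum_{u=s}^T\Gamma_u(S_u)$ gives, before taking closures, $A_t + \sum_{u=t}^T\Gamma_u(S_u) = \big[A_{t,s} + \sum_{u=t}^{s-1}\Gamma_u(S_u)\big] + \big[A_s + \sum_{u=s}^T\Gamma_u(S_u)\big]$, where the first bracket lies both in $\LdiF{s}$ and in $A_t + \sum_{u=t}^T\Gamma_u(S_u) \subseteq A_t^\pi$, hence in $A_{t,s}^\pi = A_t^\pi \cap \LdiF{s}$, while the second bracket is dense in $A_s^\pi$; conversely, the first bracket should be dense in $A_{t,s}^\pi$. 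Taking weak* closures in both directions then yields $A_t^\pi = \cl(A_{t,s}^\pi + A_s^\pi)$, and the reduction of the previous paragraph closes the argument once one knows $A_{t,s}^\pi + A_s^\pi$ is already closed.

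I expect the main obstacle to be precisely this closure bookkeeping: showing that $A_{t,s}^\pi$, defined as the intersection \emph{after} closure $A_t^\pi \cap \LdiF{s}$, actually equals $\cl\big(A_{t,s} + \sum_{u=t}^{s-1}\Gamma_u(S_u)\big)$, the closure taken \emph{before} intersecting; and that the Minkowski sum $A_{t,s}^\pi + A_s^\pi$ is weak* closed, so that no extra closure is needed on the right-hand side of $A_t^\pi = A_{t,s}^\pi + A_s^\pi$. I would attack the first point by approximating a given element of $A_t^\pi \cap \LdiF{s}$ by a net drawn from $A_{t,s} + A_s + \sum_{u=t}^{s-1}\Gamma_u(S_u) + \sum_{u=s}^T\Gamma_u(S_u)$ and using value-consistency of $\pi_t^S$ (Proposition~\ref{prop:pi}) — i.e., that $X \in A_t^\pi$ if and only if $[\trans{S_T}X]e_1 \in A_t^\pi$ — to push the ``$A_s$ plus tail-$\Gamma$'' part onto $\LdiF{s}$ without leaving $A_{t,s}^\pi$; the closedness of $A_{t,s}^\pi + A_s^\pi$ would then follow from the same estimates that make $A_t^\pi$ itself closed, together with the weak* closedness of $A_t$ and $A_s$ coming from the Fatou property in Assumption~\ref{ass:fatou}.

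If these closure arguments prove delicate, a cleaner alternative is to bypass acceptance sets and argue recursively. From the recursive form of multiportfolio time consistency of $\seq{R}$ — which expresses $R_t$ as the set-valued composition of the stepped risk measure with $R_s$ — one can translate, through the scalarization \eqref{eq_scalar} and the definition \eqref{eq:pi} of $\pi_t^S$, into the identity $\pi_t^S(X) = \pi_t^S(-\pi_s^S(X)e_1)$ for all $t < s$ and $X \in \LdiF{}$ (or its non-normalized version in Remark~\ref{rem:pi-tc}); by Theorem~\ref{thm:pi-tc}(2)/Remark~\ref{rem:pi-tc} together with Proposition~\ref{prop:pi-phi_tc} this is again equivalent to time consistency of $\seq{\pi^S}$, and hence, $S \in \SS^*$ being arbitrary, to $\pi$-time consistency of $\seq{\rho}$.
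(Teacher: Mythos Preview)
Your route is genuinely different from the paper's. The paper does \emph{not} go through the acceptance-set decomposition $A_t^\pi = A_{t,s}^\pi + A_s^\pi$ at all. Instead it argues the implication $\pi_s^S(X)\le\pi_s^S(Y)\Rightarrow\pi_t^S(X)\le\pi_t^S(Y)$ directly: it rewrites $\pi_s^S(X)\le\pi_s^S(Y)$ as an inclusion between the (pre-closure) ``risk sets'' $\{u\in\LdiF{s}\mid X+u\in A_s+\sum_{r}\Gamma_r(S_r)\}$, enlarges the sum of half-spaces from $r\ge s$ to $r\ge t$, rewrites this as an inclusion $\bigcup_{\hat X}R_s(\hat X)\supseteq\bigcup_{\hat Y}R_s(\hat Y)$ over suitable translated families, applies multiportfolio time consistency in its defining union form~\eqref{eq_mptc}, and unwinds back to $\pi_t^S$. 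The point is that working with unions of $R_s$-values interacts cleanly with the \emph{defining} form of multiportfolio time consistency, so one never needs to establish (or even formulate) an equality of closed Minkowski sums.

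By contrast, your plan hinges on proving $A_t^\pi = A_{t,s}^\pi + A_s^\pi$, and the two closure obstacles you flag are exactly where the argument is incomplete. First, the equivalence in Theorem~\ref{thm:pi-tc}(3) is stated for \emph{normalized} $\seq{\pi^S}$, which is not assumed here; Remark~\ref{rem:pi-tc} only salvages the recursive characterization, not the acceptance-set one, so your reduction to $A_t^\pi = A_{t,s}^\pi + A_s^\pi$ via Theorem~\ref{thm:pi-tc} is not available as written (you would need to go through $\phi^S$ and check the univariate result you cite holds without normalization). Second, and more seriously, the identification of $A_{t,s}^\pi := A_t^\pi\cap\LdiF{s}$ with $\cl\big(A_{t,s}+\sum_{u=t}^{s-1}\Gamma_u(S_u)\big)$ is a genuine ``closure commutes with intersection'' problem that your sketched attack via value-consistency does not resolve: approximating an element of $\cl(A_t+\sum_{u\ge t}\Gamma_u)\cap\LdiF{s}$ by elements of $A_t+\sum_{u\ge t}\Gamma_u$ gives approximants that need not be $\Ft{s}$-measurable, and pushing them onto $\LdiF{s}$ via $[\trans{S_T}\cdot]e_1$ lands you in $A_t^\pi$, not in the pre-closure set. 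Likewise, closedness of $A_{t,s}^\pi+A_s^\pi$ is not addressed. A minor additional imprecision: the acceptance-set form of multiportfolio time consistency from \cite{FR12} reads $A_t=(A_t\cap\tilde M_s)+A_s$ and \emph{does} depend on $\tilde M$; your $A_{t,s}=A_t\cap\LdiF{s}$ presumes $\tilde M=\R^d$, which is how the paper proceeds as well, but it is not ``insensitive to $\tilde M$''.

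Your fallback in the last paragraph is closer in spirit to the paper, but as stated it is only a translation target, not an argument; the paper's value added is precisely the explicit chain that carries an inclusion of $\pi_s^S$-level sets through the union formulation~\eqref{eq_mptc} and back, without ever needing a sum-of-acceptance-sets identity.
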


We want to emphasize the following point about Lemma~\ref{lemma:mptc-pi_tc}. Note that  $\seq{\rho}$ is defined with respect to the single eligible asset ($M=\R \times \{0\}^{d-1}$) by its definition via \eqref{eq_scalar}, or equivalently~\eqref{A-rep} as it is the essential infimum of the cash asset making the position acceptable. However, we now consider $\seq{R}$ with any choice of eligible space $\tilde{M} \subseteq \R^d$ (which actually does not have an impact on the definition of $\rho_t$ via \eqref{eq_scalar} or~\eqref{A-rep}, nor does it effect $A_t$).  The remarkable power of Lemma~\ref{lemma:mptc-pi_tc} is the fact that if $\seq{R}$ is multiportfolio time consistent with {\it some} choice of eligible space $\tilde{M} \subseteq \R^d$, then $\seq{\rho}$ defined as in \eqref{eq_scalar}, and thus with the single eligible asset, is $\pi$-time consistent. This result will prove to be very useful in the example section. 

\begin{proof}[Proof of Lemma~\ref{lemma:mptc-pi_tc}]
We will prove the result with full eligible space $\tilde{M} = \R^d$, the general case follows by the same logic.  Let $S \in \SS^*$, $t < s$ and $X,Y \in \LdiF{}$ 
then
\begin{align*}
&\pi_s^S(X) \leq \pi_s^S(Y)\\
&\Rightarrow \lrcurly{\trans{S_s}u \; | \; u \in \LdiF{s}, \; X + u \in A_s^\pi} \supseteq \lrcurly{\trans{S_s}u \; | \; u \in \LdiF{s}, \; Y + u \in A_s^\pi}\\
&\Rightarrow \lrcurly{u \in \LdiF{s} \; | \; X + u \in \ascl\lrsquare{A_s + \sum_{r = s}^T \Gamma_r(S_r)}}\\
&\qquad\qquad \supseteq \lrcurly{u \in \LdiF{s} \; | \; Y + u \in \ascl\lrsquare{A_s + \sum_{r = s}^T \Gamma_r(S_r)}}\\
&\Rightarrow \lrcurly{u \in \LdiF{s} \; | \; X + u \in A_s + \sum_{r = s}^T \Gamma_r(S_r)} \supseteq \lrcurly{u \in \LdiF{s} \; | \; Y + u \in A_s + \sum_{r = s}^T \Gamma_r(S_r)}\\
&\Rightarrow \lrcurly{u \in \LdiF{s} \; | \; X + u \in A_s + \sum_{r = t}^T \Gamma_r(S_r)} \supseteq \lrcurly{u \in \LdiF{s} \; | \; Y + u \in A_s + \sum_{r = t}^T \Gamma_r(S_r)}\\
&\Rightarrow \bigcup_{\hat X \in X - \sum_{r = t}^T \Gamma_r(S_r)} R_s(\hat X) \supseteq \bigcup_{\hat Y \in Y - \sum_{r = t}^T \Gamma_r(S_r)} R_s(\hat Y)\\
&\Rightarrow \bigcup_{\hat X \in X - \sum_{r = t}^T \Gamma_r(S_r)} R_t(\hat X) \supseteq \bigcup_{\hat Y \in Y - \sum_{r = t}^T \Gamma_r(S_r)} R_t(\hat Y)\\
&\Rightarrow \lrcurly{u \in \LdiF{t} \; | \; X + u \in A_t + \sum_{r = t}^T \Gamma_r(S_r)} \supseteq \lrcurly{u \in \LdiF{t} \; | \; Y + u \in A_t + \sum_{r = t}^T \Gamma_r(S_r)}\\
&\Rightarrow \pi_t^S(X) \leq \pi_t^S(Y).
\end{align*}
\end{proof}

In fact, we will now use the above result relating multiportfolio time consistency to $\pi$-time consistency in order to present a recursive definition of $\seq{\rho}$ with respect to the decomposed form $\seq{\pi^S}$.
\begin{corollary}\label{cor:rho-tc}
Consider the setting of Lemma~\ref{lemma:mptc-pi_tc} for normalized acceptance sets $\seq{A}$ (i.e., $A_t = \lrcurly{X \in \LdiF{} \; | \; \zero \in R_t(X)}$ for set-valued risk measure such that $R_t(X) = R_t(X) + R_t(\zero)$ for every $X \in \LdiF{}$ for all times $t$).  Then
$\seq{\rho}$ satisfies the recursive relation
\begin{equation}\label{eq:recursive}
\rho_t(X) = \esssup_{S \in \SS_t} \pi_t^S([\pi_s^S(\zero)-\pi_s^S(X)]e_1)
\end{equation}
for all portfolios $X \in \LdiF{}$ and times $t < s$.
\end{corollary}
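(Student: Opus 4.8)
The plan is to obtain the recursion by combining the decomposition $\rho_t(X)=\esssup_{S\in\SS_t}\pi_t^S(X)$ of Proposition~\ref{prop:pi_to_rho} with a one-step backward recursion for each family $\seq{\pi^S}$. Fix $t<s$ and $X\in\LdiF{}$. By Proposition~\ref{prop:pi_to_rho} it suffices to prove
\[\pi_t^S(X)=\pi_t^S([\pi_s^S(\zero)-\pi_s^S(X)]e_1)\qquad\text{for every }S\in\SS_t,\]
since then taking the essential supremum over $S\in\SS_t$ on both sides yields \eqref{eq:recursive}. Thus the whole proof reduces to upgrading the one-step recursion of Remark~\ref{rem:pi-tc}, which is recorded there only for $S\in\SS^*=\bigcap_{r\ge0}\SS_r$, to all $S\in\SS_t$.

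To carry this out, I would first check that $\SS_t\subseteq\SS_s$ whenever $t\le s$. Since $\seq{A}$ is normalized, $\zero\in A_r$ for every $r$, so multiportfolio time consistency of $\seq{R}$ (in the equivalent form $A_r=(A_r\cap\LdiF{u})+A_u$ for $r\le u$) gives $A_s=\zero+A_s\subseteq A_t$; hence $A_s^\pi=\cl(A_s+\sum_{r=s}^T\Gamma_r(S_r))\subseteq\cl(A_t+\sum_{r=t}^T\Gamma_r(S_r))=A_t^\pi$, so $\tilde A_s^\pi\subseteq\tilde A_t^\pi$ and $\SS_t\subseteq\SS_s$ by the characterization in Proposition~\ref{prop:pi0}. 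Consequently, for $S\in\SS_t$ both $\pi_t^S$ and $\pi_s^S$ are $\LiF{}$-valued. Now the implication chain that proves Lemma~\ref{lemma:mptc-pi_tc} uses, at the single pair of times $(t,s)$, only multiportfolio time consistency of $\seq{R}$ together with the fact that $\pi_t^S$ and $\pi_s^S$ are proper (which is what makes equivalences such as $Y+u\in A_s^\pi\iff\trans{S_s}u\ge\pi_s^S(Y)$ available); it never invokes $S\in\SS_r$ for $r<t$. Hence, for every $S\in\SS_t$, one has $\pi_s^S(U)\le\pi_s^S(V)\Rightarrow\pi_t^S(U)\le\pi_t^S(V)$ for all $U,V\in\LdiF{}$. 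Applying this with $U=X$ and $V:=[\pi_s^S(\zero)-\pi_s^S(X)]e_1\in\LdiF{s}$, and using conditional cash invariance of $\pi_s^S$ together with $S_{s,1}\equiv1$ to compute $\pi_s^S(V)=\pi_s^S(\zero)-\trans{S_s}V=\pi_s^S(X)$, we get $\pi_s^S(U)=\pi_s^S(V)$, so both $\pi_t^S(X)\le\pi_t^S(V)$ and $\pi_t^S(V)\le\pi_t^S(X)$, i.e.\ the desired identity. Combining with Proposition~\ref{prop:pi_to_rho} then closes the argument.

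The delicate point, and the one I would treat most carefully, is exactly this passage from $\SS^*$ to $\SS_t$: Lemma~\ref{lemma:mptc-pi_tc} and Remark~\ref{rem:pi-tc} deliver the one-step recursion only for $S\in\SS^*$, whereas the essential supremum in $\rho_t(X)=\esssup_{S\in\SS_t}\pi_t^S(X)$ genuinely ranges over the strictly larger set $\SS_t$ and cannot be replaced by one over $\SS^*$. So the two facts that must be stated explicitly are (i) the time-monotonicity $\SS_t\subseteq\SS_s$, which makes the right-hand side of \eqref{eq:recursive} well defined for every $S\in\SS_t$ and is where the normalization hypothesis enters, and (ii) that the proof of Lemma~\ref{lemma:mptc-pi_tc} localizes to the pair $(t,s)$, i.e.\ that only properness of $\pi_t^S,\pi_s^S$ (guaranteed by $S\in\SS_t$) and multiportfolio time consistency of $\seq{R}$ are used. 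Everything else — cash invariance, the computation $\pi_s^S(V)=\pi_s^S(X)$, and the final passage to essential suprema — is routine.
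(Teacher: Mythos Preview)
Your proposal is correct and follows essentially the same route as the paper: combine Proposition~\ref{prop:pi_to_rho} with the one-step recursion for $\pi^S$ (Remark~\ref{rem:pi-tc}) and use the inclusion $\SS_t\subseteq\SS_s$ (Proposition~\ref{prop:St}) to ensure the right-hand side is well defined. You are in fact more careful than the paper about the point you flag as delicate: the paper's proof simply cites Theorem~\ref{thm:pi-tc}/Remark~\ref{rem:pi-tc} (which are stated for $S\in\SS^*$) together with Proposition~\ref{prop:St}, leaving implicit the localization of the argument to a fixed pair $(t,s)$ for $S\in\SS_t$ that you spell out explicitly.
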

\begin{proof}
This is an immediate result of Proposition~\ref{prop:pi_to_rho} and the recursive form from Theorem~\ref{thm:pi-tc} for non-normalized $\pi_s^S$ (see Remark~\ref{rem:pi-tc}).  Proposition~\ref{prop:St} is used to guarantee that $\pi_s^S(\zero)-\pi_s^S(X) \in \LiF{s}$ for every $S \in \SS_t$.
\end{proof}

Briefly, we will provide some interpretations and insights for the recursive relation \eqref{eq:recursive}.  The recursive relation for $\rho_t$ indicates that the risk measure can be related to the dynamic programming principle for $(\pi^S_t)_{t = 0}^T$ over all frictionless prices $S$ satisfying the no-arbitrage condition~\eqref{NA}.
To clarify what this does not mean, assume for simplicity $\rho_t(X) = \pi_t^{S^t_X}(X)$ for some $S^t_X \in \SS_t$ for every $X \in \LdiF{}$ and every time $t$, i.e., the essential supremum is attained at $S^t_X$.  Though it is tempting to assume this is sufficient to recover the usual recursion $\rho_t(X) = \rho_t(\rho_s(\zero)-\rho_s(X))$ equivalent to~\eqref{eq:rho-tc}, the fact that $S^t_X$ depends on both the time $t$ and portfolio $X$ provides the recursion
\[\rho_t(X) = \pi_t^{S^t_X}([\pi_s^{S^t_X}(\zero)-\pi_s^{S^t_X}(X)]e_1) \neq \pi_t^{S^t_X}([\pi_s^{S^s_{\zero}}(\zero)-\pi_s^{S^s_X}(X)]e_1) = \rho_t(\rho_s(\zero)-\rho_s(X)).\]
More generally, the essential suprema over $S \in \SS_t$ need not limit in the same direction for different times $t$ and portfolios $X$. Since one is unable to merge the essential suprema at time $t$ and $s$, the usual recursion $\rho_t(X) = \rho_t(\rho_s(\zero)-\rho_s(X))$ does \emph{not} hold.
In fact, the recursive relation~\eqref{eq:recursive} can be viewed as $(\rho_t)_{t = 0}^T$ satisfying the dynamic programming principle for $(\pi^S_t)_{t = 0}^T$ for a \emph{fixed} scalarization $S$. Recall now, that $(\rho_t)_{t = 0}^T$ itself is a \emph{fixed} scalarization of $(R_t)_{t = 0}^T$ (in the direction of the cash asset). In \cite{FR18-scalar} we show that scalarized risk measures are in general only recursive with itself if the scalarizations change over time in a certain way (see also \cite{KMZ17,KR18}). Since here the cash asset and thus the scalarization is fixed, $(\rho_t)_{t = 0}^T$ is not recursive, but by Corollary~\ref{cor:rho-tc} satisfies the weaker recursive property~\eqref{eq:recursive} with respect to a fixed scalarization of $(\pi^S_t)_{t = 0}^T$. 

The above recursive form for $\seq{\rho}$ simplifies when $\pi_s^S(\zero) = 0$ for every $S \in \SS_t$, e.g., if $\seq{\rho}$ is a coherent risk measure. The following proposition provides further results under the same setting as Corollary~\ref{cor:rho-tc} above.  In particular, we find a relation between the sets $\SS_t$ over time, and specifically we can deduce that $\SS^* = \SS_0$.  

\begin{proposition}
\label{prop:St}
Consider the setting of Lemma~\ref{lemma:mptc-pi_tc} for closed and normalized acceptance sets $\seq{A}$.  Then
$\SS_t \subseteq \SS_s$ for any times $t < s$ and $\SS^* = \SS_0$. 
\end{proposition}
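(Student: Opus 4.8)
The plan is to prove the single inclusion $\SS_t\subseteq\SS_s$ whenever $t<s$; granting this, the family $(\SS_r)_{r\ge 0}$ is increasing, so $\SS^*=\bigcap_{r\ge 0}\SS_r=\SS_0$ follows at once. Throughout I would work with the characterization of Proposition~\ref{prop:pi0}: $S\in\SS_t$ holds exactly when $S$ satisfies~\eqref{NA} --- a condition which, by Proposition~\ref{prop:NA}, does not involve $t$ --- and there is an $X\in\LdiF{t}$ with $\P(X\in\tilde A_t^\pi)=0$, where $\tilde A_t^\pi$ is the $\Ft{t}$-measurable random set attached to $A_t^\pi=\cl\lrparen{A_t+\sum_{r=t}^T\Gamma_r(S_r)}$. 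So, once $S\in\SS_t$ is fixed, the task is to exhibit some $X\in\LdiF{s}$ with $\P(X\in\tilde A_s^\pi)=0$.

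The first step is to observe that $A_t^\pi$ is monotone in $t$. Since $\seq{R}$ is multiportfolio time consistent and $\seq{A}$ is normalized, the acceptance sets decompose as $A_t=A_{t,s}+A_s$ with $A_{t,s}=A_t\cap\LdiF{s}$ --- the usual acceptance-set form of multiportfolio time consistency, cf.~\cite{FR12,FR12b} --- and because $\zero\in A_{t,s}$ this yields $A_s\subseteq A_t$. Together with $\sum_{r=s}^T\Gamma_r(S_r)\subseteq\sum_{r=t}^T\Gamma_r(S_r)$ (extra summands, each $\Gamma_r(S_r)\ni\zero$) one then gets
\[
A_s^\pi=\cl\lrparen{A_s+\sum_{r=s}^T\Gamma_r(S_r)}\ \subseteq\ \cl\lrparen{A_t+\sum_{r=t}^T\Gamma_r(S_r)}=A_t^\pi .
\]

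The remaining, and main, difficulty is to pass from this inclusion of subsets of $\LdiF{}$ to the random-set level, namely $\tilde A_s^\pi\subseteq\tilde A_t^\pi$ $\P$-almost surely; this is where closedness of the acceptance sets is genuinely needed. The route I would take is: closedness, convexity and $\Ft{t}$-decomposability of $A_t^\pi$ give $A_t^\pi\cap\LdiF{s}=\LdpK{\infty}{s}{\tilde A_t^\pi}$ (every bounded $\Ft{s}$-measurable selector of the random set lies in $A_t^\pi$, by a conditional separation argument in the spirit of~\cite{KS09}); hence the inclusion $A_s^\pi\cap\LdiF{s}\subseteq A_t^\pi\cap\LdiF{s}$ reads $\LdpK{\infty}{s}{\tilde A_s^\pi}\subseteq\LdpK{\infty}{s}{\tilde A_t^\pi}$, and a measurable-selection argument then forces $\tilde A_s^\pi\subseteq\tilde A_t^\pi$ a.s. (were the inclusion of random sets to fail on a set of positive probability, one could measurably select a point of $\tilde A_s^\pi\setminus\tilde A_t^\pi$, contradicting the inclusion of $\LdiF{s}$-selectors). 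With this in hand, if $S\in\SS_t$ one picks $X\in\LdiF{t}\subseteq\LdiF{s}$ with $\P(X\in\tilde A_t^\pi)=0$; then $\P(X\in\tilde A_s^\pi)=0$ too, and since $S$ already satisfies~\eqref{NA}, Proposition~\ref{prop:pi0} applied at time $s$ gives $\pi_s^S(\zero)\in\LiF{s}$, i.e.\ $S\in\SS_s$, which finishes the proof. I expect the random-set inclusion to be the only non-routine point; an alternative treatment, which yields only $\pi_s^S(\zero)>-\infty$ a.s.\ and then still needs a boundedness argument, goes through the penalty decomposition $\beta_t(\Q,S)=\beta_{t,s}^S(\Q)+\EQt{\beta_s(\Q,S)}{t}$, obtained from $A_t=A_{t,s}+A_s$ by interchanging the essential supremum with $\EQt{\cdot}{t}$ (legitimate as $A_s$ is $\Ft{s}$-decomposable).
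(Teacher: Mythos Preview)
Your proposal is correct and follows essentially the same route as the paper. Both arguments hinge on the acceptance-set decomposition $A_t=A_{t,s}+A_s$ coming from multiportfolio time consistency, use it to obtain the inclusion $A_s^\pi\subseteq A_t^\pi$, and then recycle the same $X\in\LdiF{t}$ with $\P(X\in\tilde A_t^\pi)=0$ to conclude $\P(X\in\tilde A_s^\pi)=0$; the paper just packages the passage to random sets more tersely via a monotone $\widetilde{\ascl}$ operator rather than the explicit selector\,/\,measurable-selection argument you sketch. One small caution: your intermediate claim $A_t^\pi\cap\LdiF{s}=\LdpK{\infty}{s}{\tilde A_t^\pi}$ is stronger than what you actually need (only the inclusion $A_t^\pi\cap\LdiF{s}\subseteq\LdpK{\infty}{s}{\tilde A_t^\pi}$ is required, and $\Ft{s}$-decomposability of $A_t^\pi$ is not available in general), so when you write it up it is cleaner to argue directly with that one inclusion, which is precisely what the paper's $\widetilde{\ascl}$ chain encodes.
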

\begin{proof}
Assume $S \in \SS_t$.  Let $X \in \LdiF{}$ such that $\P(X \in \tilde A_t^\pi) = 0$ (the existence of which is guaranteed by Proposition~\ref{prop:pi0}).  Then by multiportfolio time consistency we find that:
\begin{align*}
0 &= \P(X \in \tilde A_t^\pi) = \P\lrparen{X \in \widetilde{\ascl}\lrsquare{A_t + \sum_{r = t}^T \Gamma_r(S_r)}}\\
&= \P\lrparen{X \in \widetilde{\ascl}\lrsquare{A_t \cap M_s + A_s + \sum_{r = t}^T \Gamma_r(S_r)}}\\
&\geq \P\lrparen{X \in \widetilde{\ascl}\lrsquare{A_s + \sum_{r = s}^T \Gamma_r(S_r)}} = \P(X \in \tilde A_s^\pi).
\end{align*}
Therefore the choice of $X \in \LdiF{}$ also satisfies the condition such that $S \in \SS_s$. In the above, we let $\widetilde{\ascl}$ denote the operator which generates the random set associated with the closure of the set of random variables as with $\tilde A_t^\pi$ for $A_t^\pi$ defined in Proposition~\ref{prop:pi0}.  The existence of these random sets is guaranteed by \cite[Theorem 2.1.6]{M05}. Finally, by construction in Definition~\ref{defn:pi-tc}, $\SS^* = \bigcap_{t \geq 0} \SS_t = \SS_0$ by the monotonicity of these sets of frictionless price processes.
\end{proof}

We will conclude our discussion of time consistency by relating the usual definition of time consistency of the scalar mappings $\seq{\rho}$ to multiportfolio time consistency of the set-valued risk measures $\seq{R}$ under the same single eligible asset space.
\begin{lemma}\label{lemma:rho-tc}
Consider the setting of Theorem~\ref{thm_scalar}.  $\seq{R}$ is multiportfolio time consistent if and only if $\seq{\rho}$ is $\rho$-time consistent (i.e., time consistent as defined in \eqref{eq:rho-tc}).
\end{lemma}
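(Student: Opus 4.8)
The plan is to run everything through the bijective correspondence between the set-valued risk measures and their scalarizations supplied by Theorem~\ref{thm_scalar}: $R_u(Z) = (\rho_u(Z) + \LdiK{u}{\R_+})e_1$ for every $Z \in \LdiF{}$ and every time $u$ (and, under the standing hypotheses, $\rho_u(Z) \in \LiF{u}$, so that $\rho_u(Z)e_1$ is a genuine element of $M_u$). The first step I would isolate is the elementary translation fact that, for any $Z_1, Z_2 \in \LdiF{}$ and any time $u$,
\[R_u(Z_1) \supseteq R_u(Z_2) \iff \rho_u(Z_1) \leq \rho_u(Z_2) \text{ $\P$-a.s.}\]
Here ``$\Leftarrow$'' is immediate since $c \mapsto (c + \LdiK{u}{\R_+})e_1$ is inclusion-reversing in $c$, and ``$\Rightarrow$'' follows by evaluating the minimal element $\rho_u(Z_2)e_1 \in R_u(Z_2) \subseteq R_u(Z_1) = (\rho_u(Z_1) + \LdiK{u}{\R_+})e_1$, which forces $\rho_u(Z_1) \le \rho_u(Z_2)$ a.s.

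For the ``only if'' implication I would specialize multiportfolio time consistency~\eqref{eq_mptc} to singleton families $\mathcal{X} = \{X\}$: the hypothesis $R_s(X) \supseteq R_s(Y)$ becomes $\rho_s(X) \le \rho_s(Y)$ a.s. and the conclusion $R_t(X) \supseteq R_t(Y)$ becomes $\rho_t(X) \le \rho_t(Y)$ a.s. by the translation fact, which is exactly $\rho$-time consistency~\eqref{eq:rho-tc}.

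For the ``if'' implication, assume $\seq{\rho}$ is $\rho$-time consistent and let $\mathcal{X} \subseteq \LdiF{}$, $Y \in \LdiF{}$ and $t \le s$ satisfy $\bigcup_{X \in \mathcal{X}} R_s(X) \supseteq R_s(Y)$. Since the left-hand side is a set-theoretic union, the minimal element $\rho_s(Y)e_1$ of $R_s(Y)$ lies in $R_s(X_0)$ for one fixed index $X_0 \in \mathcal{X}$; the translation fact then gives $\rho_s(X_0) \le \rho_s(Y)$ a.s. Applying $\rho$-time consistency at times $t < s$ yields $\rho_t(X_0) \le \rho_t(Y)$ a.s., hence $R_t(Y) \subseteq R_t(X_0) \subseteq \bigcup_{X \in \mathcal{X}} R_t(X)$ once more by the translation fact. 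This establishes~\eqref{eq_mptc} for $\seq{R}$.

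The proof is in essence a dictionary translation, so there is no substantial obstacle; the one point that deserves care is the step just used in the ``if'' direction, namely that membership of $\rho_s(Y)e_1$ in the union $\bigcup_{X \in \mathcal{X}} R_s(X)$ is witnessed by a single index $X_0$, valid on all of $\Omega$ --- this is precisely the (pointwise, $\omega$-independent) comparison $\rho_s(X_0) \le \rho_s(Y)$ that the scalar notion~\eqref{eq:rho-tc} is able to propagate back to time $t$. One should also flag at the outset that $\rho_u(\cdot) \in \LiF{u}$ (guaranteed by the hypotheses of Theorem~\ref{thm_scalar}), so that the ``minimal element'' $\rho_u(Z)e_1$ is a legitimate point of $M_u$ and the translation fact makes sense.
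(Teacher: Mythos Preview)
Your proposal is correct and follows essentially the same route as the paper: both directions pass through the identification $R_u(Z) = (\rho_u(Z) + \LdiK{u}{\R_+})e_1$, the forward implication specializes multiportfolio time consistency to singletons, and the reverse implication hinges on the observation that the single element $\rho_s(Y)e_1 \in R_s(Y)$ must land in one $R_s(X_0)$ of the union, yielding a global comparison $\rho_s(X_0) \le \rho_s(Y)$ to which $\rho$-time consistency applies. The subtle point you flag---that the witness $X_0$ is $\omega$-independent because the union is taken in the space of random variables, not $\omega$-wise---is exactly the crux, and the paper uses it in the same way.
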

\begin{proof}
First, by Theorem~\ref{thm_scalar}, we know that $R_t(X) = \lrparen{\rho_t(X) + \LdiK{t}{\R_+}}e_1$ for all times $t$ and portfolios $X \in \LdiF{}$.

Let $\seq{R}$ be multiportfolio time consistent. For any $t < s$ and any $X,Y \in \LdiF{}$,
\begin{align*}
\rho_{s}(X) \leq \rho_{s}(Y) &\Rightarrow R_{s}(X) \supseteq R_{s}(Y) \Rightarrow R_t(X) \supseteq R_t(Y) \Rightarrow \rho_t(X)
\leq \rho_t(Y).
\end{align*}

Let $\seq{\rho}$ be time consistent. Additionally let $t < s$, $X \in \LdiF{}$, and $\Y \subseteq \LdiF{}$,
\begin{align*}
R_{s}(X) \subseteq \bigcup_{Y \in \Y} R_{s}(Y) &\Rightarrow \lrparen{u \geq \rho_{s}(X) \Rightarrow \exists Y \in \Y:\; u \geq \rho_{s}(Y)}\\
&\Rightarrow \exists Y \in \Y:\; \rho_{s}(X) \geq \rho_{s}(Y) \Rightarrow \exists Y \in \Y:\; \rho_t(X) \geq \rho_t(Y)\\
&\Rightarrow \exists Y \in \Y:\; R_t(X) \subseteq R_t(Y) \Rightarrow R_t(X) \subseteq \bigcup_{Y \in \Y} R_t(Y)
\end{align*}
\end{proof}

Naturally, $\rho$-time consistency is a stronger property than $\pi$-time consistency.
\begin{corollary}
If $\seq{\rho}$ is $\rho$-time consistent, then it is also $\pi$-time consistent.
\end{corollary}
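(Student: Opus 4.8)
The plan is to obtain the statement as a direct composition of the two structural results on time consistency already proved, Lemma~\ref{lemma:rho-tc} and Lemma~\ref{lemma:mptc-pi_tc}, with no additional analysis required. Recall that, under Assumption~\ref{ass_friction}, $\rho_t$ is the cash-asset scalarization \eqref{eq_scalar} of a set-valued risk measure $R_t$ whose image space is $\mathcal{G}(M_t;M_{t,+})$; equivalently, by \eqref{A-rep} and Theorem~\ref{thm_scalar}, $R_t(X) = \{m \in M_t \mid X + m \in A_t\}$. This is precisely the form of set-valued risk measure that appears in both lemmas, with eligible space $\tilde{M} = M = \R \times \{0\}^{d-1} \subseteq \R^d$.

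First I would apply Lemma~\ref{lemma:rho-tc}: since $\seq{\rho}$ is assumed $\rho$-time consistent, that lemma gives that the associated set-valued risk measure $\seq{R}$ (with the single eligible asset $M$) is multiportfolio time consistent. Then I would feed this into Lemma~\ref{lemma:mptc-pi_tc}, which is stated for $\seq{R}$ with an arbitrary eligible space $\tilde{M} \subseteq \R^d$; taking $\tilde{M} = M$ is admissible, and the lemma concludes that the scalarization \eqref{eq_scalar} of this $\seq{R}$ --- namely our $\seq{\rho}$ itself --- is $\pi$-time consistent. That is exactly the assertion.

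There is essentially no obstacle here beyond bookkeeping: one only has to check that the eligible space implicit in Lemma~\ref{lemma:rho-tc} (namely $M$) is among the spaces $\tilde{M} \subseteq \R^d$ for which Lemma~\ref{lemma:mptc-pi_tc} is valid, which is immediate since $M$ is a subspace of $\R^d$, and that the $R_t$ produced by the first step is literally the $R_t$ required by the second (which holds by Theorem~\ref{thm_scalar}). No duality, relevance, or Fatou-type argument is needed; the corollary is a genuine corollary of the two preceding lemmas.
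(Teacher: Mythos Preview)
Your proposal is correct and matches the paper's own proof, which simply cites Lemmas~\ref{lemma:mptc-pi_tc} and~\ref{lemma:rho-tc} directly. The additional bookkeeping you spell out (that the eligible space $M$ in Lemma~\ref{lemma:rho-tc} is an admissible choice of $\tilde{M}$ in Lemma~\ref{lemma:mptc-pi_tc}) is implicit in the paper's one-line argument.
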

\begin{proof}
This follows directly from Lemmas~\ref{lemma:mptc-pi_tc} and~\ref{lemma:rho-tc}.
\end{proof}

\section{Examples}
\label{sec_ex}

Due to Lemma~\ref{lemma:mptc-pi_tc}, we immediately know that every multiportfolio time consistent set-valued risk measure generates a multivariate scalar risk measure $\seq{\rho}$ which is $\pi$-time consistent.  Thus we refer to \cite{FR12,FR12b,FR14-alg} for some examples.  In this section we will focus on results on the superhedging risk measure with a single eligible asset, followed by an example for composed risk measures like the composed average value at risk.  For these examples we will demonstrate the concept of $\pi$-time consistency and show that the prior, strong, time consistency property ($\rho_s(X) \leq \rho_s(Y) \Rightarrow \rho_t(X) \leq \rho_t(Y)$) is too strong of a concept.

\subsection{Superhedging under proportional transaction costs}
\label{sec_shp}

Let us first consider a market with proportional transaction costs only. As introduced in Section~\ref{sec_scalar} and discussed in more details in~\cite{K99,S04,KS09}, the market model will be defined by a sequence of solvency cones $\seq{K}$.  We will assume that the market satisfies the robust no-arbitrage property and Assumption~\ref{ass_friction}.  Such a superhedging risk measure was studied at time $0$ in the discrete time setting in, e.g.,~\cite{JK95,LR11}.  By the definition of the superhedging risk measure, with trading at discrete times $t \in \{0,1,...,T\}$, we can define the dual representation via
\[\rho^{SHP}_t(X) = \esssup_{(\Q,S) \in \QQ_t^{SHP}} \EQt{-\trans{S_T}X}{t}\]
where
\[\QQ_t^{SHP} := \lrcurly{(\Q,S) \in \QQ_t^e \; | \; \bar\xi_{t,s}(\Q) S_s \in \LdpK{1}{s}{K_s^+} \; \forall s \geq t}.\]
In particular, for the construction of the decomposition $\pi_t^S$ we care about
\[\QQ_t^{SHP}(S) := \begin{cases} EMM(S) := \lrcurly{\Q \in \mathcal{M}^e \; | \; S \text{ is a } \Q\text{-mtg}} & \text{if } S \in \SS
\\ \emptyset &\text{else} \end{cases}.\]
In fact, in this case $\SS_t = \SS$ for any time $t \in [0,T]$. In particular, this implies that $\SS^* = \SS$ as well.
This implies that the dual variables for the construction of $\pi_t^S$ are provided by the set of equivalent martingale measures $EMM(S)$ for the price process $S$.  With this construction we can define the acceptance set for $\phi_t^S$ (the univariate scalar risk measure corresponding with $\pi_t^S$ as defined in \eqref{eq_phi}) as
\begin{align*}
A_t^\phi &:= \lrcurly{Z \in \LiF{} \; | \; \EQt{Z}{t} \geq 0 \; \forall \Q \in \QQ_t^{SHP}(S)}\\
&= \lrcurly{Z \in \LiF{} \; | \; \EQt{Z}{t} \geq 0 \; \forall \Q \in EMM(S)}.
\end{align*}
Immediately this implies that $A_t^\phi$ is the acceptance set of the scalar superhedging risk measure, and therefore we recover that $\seq{\phi^S}$ is time consistent (and thus so is $\pi^S$ by Proposition~\ref{prop:pi-phi_tc}) as is expected since the set-valued superhedging risk measure is multiportfolio time consistent in the case with full eligible space $\tilde{M} = \R^d$ (see, e.g., \cite{FR12}).
Thus, $\seq{\rho^{SHP}}$ is $\pi$-time consistent by its definition (Definition~\ref{defn:pi-tc}), respectively by Lemma~\ref{lemma:mptc-pi_tc}.

\begin{remark}
Corollary~\ref{cor:rho-tc} provides a recursive relation for $\seq{\rho^{SHP}}$, which we would like to note is similar to, but distinct from, that utilized in~\cite[Corollary~6.2]{LR11}. 
\end{remark}

We conclude by demonstrating that the superhedging risk measure $\seq{\rho^{SHP}}$, even though it is $\pi$-time consistent, is itself not $\rho$-time consistent.  This follows by Lemma~\ref{lemma:rho-tc}. The underlying set-valued risk measure $\seq{R}$ is a closed and conditionally coherent risk measure with acceptance sets $A_t = \sum_{s = t}^T \LdpK{\infty}{s}{K_s}$ for all times $t$ and $M = \R \times \{0\}^{d-1}$.  It is clear from the definition of the acceptance sets and the space of eligible portfolios that, in general, $A_t \neq A_t \cap M_s + A_s$.  Thus, by \cite[Theorem 3.4]{FR12}, we can immediately conclude that $\seq{R}$ is not multiportfolio time consistent with this choice of eligible assets, and by Lemma~\ref{lemma:rho-tc} $\seq{\rho^{SHP}}$ is \emph{not} $\rho$-time consistent as defined in \eqref{eq:rho-tc}.

\subsection{Superhedging under convex transaction costs}
\label{sec_shp_convex}

Continuing with the superhedging example, let us now consider a market with convex transaction costs, e.g., one that also includes price impact.  As discussed in~\cite{PP10,FR12b,FR14-alg}, we will model such a market with convex solvency regions $\seq{C}$.  We will assume that this market model satisfies the no-scalable robust no-arbitrage condition and has recession cones satisfying Assumption~\ref{ass_friction}.  

As in the case with proportional transaction costs above, we can introduce the dual representation, in the form of those provided in \cite{JK95}, for the superhedging price $\rho^{SHP}_t$ via
\[\rho^{SHP}_t(X) = \esssup_{(\Q,S) \in \QQ_t^e} \lrparen{\sum_{s = t}^T \sigma_t^s(\Et{\dQdP}{s}S_s) - \EQt{\trans{S_T}X}{t}},\]
where the penalty function is defined as the support function of the solvency regions
\[\sigma_t^s(Y) := \essinf_{Z \in \LdiK{s}{C_s}} \Et{\trans{Y}Z}{t}.\]
In particular, the penalty function is finite only if $\Et{\dQdP}{s}S_s \in \LdpK{1}{s}{\plus{\recc{C_s}}}$ for all times $s \in \{t,...,T\}$.

\begin{remark}
If we consider a conical market model $\seq{K}$, then the penalty function is determined by \[\sigma_t^s(Y) = \begin{cases}0 &\text{on } \{Y \in \plus{K_s}\}\\ -\infty &\text{on } \{Y \not\in \plus{K_s}\}\end{cases}.\] Thus we immediately recover the superhedging price presented in Section~\ref{sec_shp} as a special case.  Therefore we recover the formulation from Jouini, Kallal \cite{JK95} as seen in \eqref{JK d assets} in the static case as well.
\end{remark}

As with the case under proportional transaction costs only, it immediately follows from Lemma~\ref{lemma:mptc-pi_tc} and the multiportfolio time consistency of the set-valued superhedging risk measure with full eligible space $\tilde{M} = \R^d$ (see, e.g., \cite{FR12,FR12b}) that $\seq{\rho^{SHP}}$ is $\pi$-time consistent.  

As in the case with proportional transaction costs only, though we find that the superhedging price is $\pi$-time consistent, it is not $\rho$-time consistent (with respect to definition \eqref{eq:rho-tc}).  This follows by Lemma~\ref{lemma:rho-tc} as the superhedging risk measure is not multiportfolio time consistent when the space of eligible assets $M$ is only the cash asset.

\subsection{Composed risk measures}
\label{sec_composed}

Composing the set-valued risk measures backwards in time automatically guarantees multiportfolio time consistency.  That is, consider one-step risk measures $(R_{t,t+1})_{t = 0}^{T-1}$ and a terminal risk measure $R_T$ to define the composed risk measure by:
\begin{align*}
\tilde R_t(X) &:= R_{t,t+1}(\tilde R_{t+1}(X)) = \bigcup_{Z \in \tilde R_{t+1}(X)} R_{t,t+1}(-Z) \quad \forall t \in \{0,1,...,T-1\}\\
\tilde R_T(X) &= R_T(X).
\end{align*}
This was studied for, e.g., the set-valued average value at risk in \cite{FR12,FR12b}.  As with the superhedging example above, often we consider the full space of eligible assets $\tilde{M} = \R^d$ as it is natural when considering trading in a market model \cite{FR14-alg} (as is studied in this paper) or with systemic risk measures \cite{FRW15}.  In that setting, by Lemmas~\ref{lemma:mptc-pi_tc} and~\ref{lemma:rho-tc}, the scalarization of the composed risk measure $\tilde \rho_t(X) :=\essinf\{m \in \LiF{t} \; | \; m e_1 \in \tilde R_t(X)\}$ for $X \in \LdiF{}$ is $\pi$-time consistent but not $\rho$-time consistent.

For illustrative purposes, let us briefly consider the composed average value at risk with full space of eligible assets $\tilde{M} = \R^d$.  We refer to \cite[Section 6.1]{FR12b} for details on the set-valued composed average value at risk.
In particular, consider the stepped average value at risk with levels $\lambda^t \in \LdiK{t}{[\epsilon,1]^d}$ for the stepped risk measure from time $t$ to $t+1$ and some lower threshold $\epsilon > 0$.  The scalarization $\tilde \rho_t: \LdiF{} \to \LiF{t}$ of the composed average value at risk at time $t$, defined by \eqref{eq_scalar}, can be given by the dual representation:
\begin{align*}
\tilde \rho_t(X) &= \esssup_{S \in \SS_t} \esssup_{\Q \in \tilde\QQ^\lambda(S)} -\EQt{\trans{S_T}X}{t}\\
\tilde\QQ^\lambda(S) &= \lrcurly{\Q \in \mathcal{M}^e \; | \; S_{s,i} \geq \lambda_i^s \bar\xi_{s,s+1}(\Q) S_{s+1,i} \; \forall s = 0,1,...,T-1, i = 1,2,...,d}.
\end{align*}
Note that any probability measure $\Q \in \tilde\QQ^\lambda(S)$ for some $S \in \SS_t$ (and any fixed time $t$) will also be a dual variable for the univariate composed average value at risk with sequence of levels $(\lambda_1^t)_{t = 0}^{T-1}$ as described in \cite{CK10}.  As discussed above, this multivariate risk measure is $\pi$-time consistent but not $\rho$-time consistent.

\bibliographystyle{plain}
\bibliography{biblio}
\end{document}